\documentclass[a4paper,onecolumn,11pt,unpublished]{quantumarticle}
\pdfoutput=1
\usepackage[utf8]{inputenc}
\usepackage[english]{babel}
\usepackage[T1]{fontenc}
\usepackage{bbold}
\usepackage{bm}
\usepackage{amsmath}
\usepackage{amssymb}
\usepackage{thmtools}
\usepackage{hyperref}
\usepackage{cleveref}
\usepackage{mathtools}
\usepackage[numbers]{natbib}
\usepackage{float}
\usepackage{appendix}
\usepackage{url}
\usepackage{todonotes}
\usepackage{subcaption}
\usepackage{xcolor}

\usepackage{tikz}
\usetikzlibrary{decorations.pathreplacing}
\usepackage{lipsum}
\usetikzlibrary{decorations.pathreplacing}

\newtheorem{theorem}{Theorem}
\newtheorem{lemma}[theorem]{Lemma}
\newtheorem{corollary}[theorem]{Corollary}

\DeclarePairedDelimiter\bra{\langle}{\rvert}
\DeclarePairedDelimiter\ket{\lvert}{\rangle}
\DeclarePairedDelimiterX\braket[2]{\langle}{\rangle}{#1 \delimsize\vert #2}
\DeclareMathOperator{\Tr}{Tr}

\crefformat{equation}{Eq.~(#2#1#3)}
\crefrangeformat{equation}{Eqs.~(#3#1#4)--(#5#2#6)}
\crefmultiformat{equation}{Eqs.~(#2#1#3)}{ and~(#2#1#3)}{, (#2#1#3)}{ and~(#2#1#3)}
\crefformat{figure}{Fig.~#2#1#3}
\crefname{definition}{Definition}{Definitions}
\crefname{section}{Section}{Sections}
\crefname{corollary}{Corollary}{Corollaries}
\crefname{theorem}{Theorem}{Theorems}
\crefname{lemma}{Lemma}{Lemmas}
\crefname{appsec}{Appendix}{Appendices}
\Crefname{appsec}{Appendix}{Appendices}

\begin{document}

\title{\mbox{Lower Bounds on Spectral Gaps of Parent} Hamiltonians via Tensor Networks}
 
\author[1,2]{Milán Ádám Rozmán}
\orcid{0009-0007-8795-0031}
\email{milan.adam.rozman@univie.ac.at}
\author[3]{András Molnár}
\orcid{0000-0001-5144-438X}
\email{andras.molnar@univie.ac.at}
\author[1,3]{Norbert Schuch}
\orcid{0000-0001-6494-8616}
\email{norbert.schuch@univie.ac.at}
\affil[1]{Faculty of Physics, University of Vienna, Boltzmanngasse 5, 1090 Vienna, Austria}
\affil[2]{Vienna Doctoral School of Physics, Boltzmanngasse 5, 1090 Vienna, Austria}
\affil[3]{Faculty of Mathematics, University of Vienna, Oskar-Morgenstern-Platz 1, 1090 Vienna, Austria}
\maketitle

\hypersetup{pdftitle={Lower Bounds on Spectral Gaps of Parent Hamiltonians via Tensor Networks}}

\begin{abstract}
Proving spectral gaps above the ground space is a central problem in quantum
many-body physics. Yet, finding lower bounds on the gap is notoriously difficult. We
revisit the martingale method, originally developed by Fannes, Nachtergaele, and
Werner~\cite{fannes_finitely_1992,nachtergaele_spectral_1996} to prove the
existence of spectral gaps of parent Hamiltonians of Matrix Product States
(MPS), and provide improvements to the different steps of the method. Most importantly, 
we devise a new technique which allows to compute the key quantity in the martingale method
-- the overlap of local ground spaces -- exactly \emph{and} efficiently. This
enables a clear improvement of the method, allowing it to outperform other
existing techniques to lower bound gaps, which we demonstrate by benchmarking
on several models. Remarkably, our -- numerically motivated -- approach at the
same time also yields a significantly simplified proof of the fact that parent
Hamiltonians of any (well-behaved) MPS are always gapped.
\end{abstract}

\section{Introduction}
In quantum many-body systems, the existence of a gap -- that is, a uniform lower
bound on the spectral gap above the ground state manifold, independent of the
system size -- has far-reaching consequences. It implies that correlations decay
exponentially~\cite{hastings_area_2007,hastings_spectral_2006}, and global
properties such as topological order remain stable under
perturbations~\cite{bravyi_topological_2010,michalakis_stability_2013}. It also
underlies quasi-adiabatic evolution~\cite{hastings_quasi-adiabatic_2005}, which
establishes that ground states of quantum many-body Hamiltonians connected by a
gapped path can be transformed into each other by a local evolution, leaving
global entanglement patterns intact.  Together, this provides the theoretical
underpinning for the central concept of gapped quantum phases of matter.

Yet, rigorously proving the existence of a spectral gap for interacting
quantum many-body Hamiltonians is notoriously difficult. A breakthrough
result was the construction of the Affleck-Kennedy-Lieb-Tasaki (AKLT)
model~\cite{affleck_valence_1988}. The key novelty of its approach is to first explicitly construct a quantum
many-body state on a spin chain, from which a ``parent Hamiltonian'' is
derived, which has this state as its exact ground state. For this parent
Hamiltonian, the existence of a spectral gap can then be rigorously
proven.  The AKLT construction was subsequently generalized by Fannes,
Nachtergaele, and Werner (FNW)~\cite{fannes_finitely_1992}, and later again by
Nachtergaele~\cite{nachtergaele_spectral_1996}, by introducing the class of
Finitely Correlated States (FCS): These states appear as exact ground states of
local parent Hamiltonians, and those parent Hamiltonians can be proven to be
gapped. In modern terms, FCS correspond to translation invariant Matrix Product
States (MPS)~\cite{perez-garcia_matrix_2007,cirac_review} which satisfy some natural and
generic properties (which we will define later on).  A key feature shared by
these models is frustration-freeness: The ground state of the Hamiltonian
$H=\sum h_i$ is also a ground state of each $h_i$ individually.  

Beyond proving the mere existence of a gap, obtaining explicit lower
bounds on the gap is of central importance, as it correspondingly implies
quantitative bounds on stability, correlation decay, or the complexity of a
quasi-adiabatic path.  Not long after the original AKLT work, Knabe
devised a method to obtain quantitative bounds on the gap, by relating the
gap of the finite-size AKLT Hamiltonian (which can -- at least in
principle -- be computed numerically) to its infinite-size gap~\cite{knabe_energy_1988}. Later
generalized by Gosset and Mozgunov~\cite{gosset_local_2016}, these approaches are now referred to
as finite-size criteria.  They apply to general frustration-free
Hamiltonians, but therefore also don't leverage a possible MPS structure
of the ground space.  At the same time, the work of FNW and Nachtergaele~\cite{fannes_finitely_1992,nachtergaele_spectral_1996},
frequently referred to as the ``martingale method'', not only proved the
existence of a gap for parent Hamiltonians of MPS, but also yielded quantitative
lower bounds on its value. The key idea of their method is to first relate the
original Hamiltonian to a blocked Hamiltonian, and subsequently bound the gap in
terms of the angle between the ground state spaces of the terms of the blocked
Hamiltonian and their individual spectral gap.  Based on the martingale method,
bounds on spectral gaps have also been derived for a number of other models,
including the AKLT Hamiltonian in two dimensions~\cite{abdul-rahman_class_2020,guo_nonzero_2021,pomata_demonstrating_2020}.  

In this paper, we revisit the martingale method, with the goal of
turning it into a tool which gives strong quantitative lower bounds on the
gap. To this end, we improve their method in two places. First, we refine
and enlarge the blocking schemes used, which gives both stronger bounds on
the gap and additional controls to optimize the bound. Our key
contribution is then the second refinement: We show how to compute the exact
value of the overlap of the ground spaces -- the key quantity in the
martingale method -- efficiently, at a cost which only scales linearly
with the block size. This contrasts with computing the exact value
based on exact diagonalization (which scales exponentially), and in fact
comes at the same computational cost as the evaluation of FNW's original
bound on the overlap, which however yields significantly weaker bounds on the gap.
The key ingredient is a transformation which relates the corresponding
overlap to an eigenvalue problem of an operator which is defined on a space of
fixed dimension $D^4$  (with $D$ the bond dimension of the MPS)
and which is efficiently computable.  Remarkably,
this approach not only yields improved numerical gap bounds as
compared to FNW: It also allows us to give a significantly easier and unified  proof of the fact that parent
Hamiltonians of (well-behaved) MPS are always gapped. This is enabled by
the fact that the overlap of ground spaces, which enters the gap bound, 
is now an eigenvalue problem on a space of fixed and finite dimension,
where convergence can be easily analyzed.

We illustrate our method through the study of a number of examples with both
unique and degenerate ground spaces, including the AKLT
model~\cite{affleck_valence_1988}, the $\mathbb Z_3$-XY model (a family
constructed from deformations of the $\mathbb Z_3$-Potts
model)~\cite{wouters_interrelations_2021} and two models constructed from
$\mathrm{SU}(3)$ valence bond states~\cite{greiter_valence_2007}, and show that
our method is able to outperform not only the FNW method, but also the
finite-size criteria by Knabe and Gosset-Mozgunov. Depending on the use case,
our method not only provides quantitatively better bounds on the gap, but 
can also give explicit quantitative bounds on the gap where the other
methods fail to return any bound at all.

The paper aims at presenting the material in a self-contained style: In
particular, it includes a detailed introduction to MPS and their parent
Hamiltonians, as well as a summary of the method of FNW and Nachtergaele. We
start by defining MPS and introducing the relevant conditions on their
well-behavedness (normal, block-normal, injective, block-injective MPS,
canonical forms) in \cref{sec:mps}. In \cref{sec:parenthams}, we introduce
parent Hamiltonians and the conditions they need to satisfy to have a
well-behaved ground space. \cref{sec:gaps-existence-bounds} presents our key
result: The refined martingale method for computing accurate lower bounds on the
spectral gap of quantum spin chains. It also presents the new, simple proof that
any parent Hamiltonian of a well-behaved MPS is gapped which emerges from our
method. Finally, in \cref{sec:examples}, we apply our method to a number of
examples, both to illustrate its application and to demonstrate its power. For
completeness, we provide additional details, in particular a summary of the
original results of FNW and Nachtergaele, in the Appendix.

\section{Matrix Product States}
\label{sec1}
\label{sec:mps}

Let us start by introducing the formalism of Matrix Product States (MPS).
A \emph{(translation invariant) MPS} $\ket{\Psi_n (A, X)}$ on $n$ lattice sites of a chain with open boundaries, generated by a tensor $A \in \mathbb{C}^D \otimes (\mathbb{C}^D)^* \otimes \mathbb{C}^d$,
    \begin{equation}
    \label{tensordef}
        A = \sum_{\alpha, \beta, i} A^i_{\alpha \beta} \ket{\alpha}\bra{\beta} \otimes \ket{i} = \sum_i A^i \otimes \ket{i},
    \end{equation}
is given by
    \begin{equation}
    \label{mpsdef}
        \ket{\Psi_n(A, X)} = \sum_{i_1,\dots,i_n} \Tr (X A^{i_1} \cdots A^{i_n}) \ket{i_1\dots i_n},
    \end{equation} 
where $X \in \mathcal{M}_D$ is an arbitrary $D \times D$ matrix serving as the boundary condition. We call $D$ the \emph{bond dimension} of the MPS and $d$ the \emph{physical dimension}. Similarly, $\alpha$ and $\beta$ are called \emph{bond indices} and $i_1,\dots,i_n$ are called \emph{physical indices}.
For the choice $X = \mathbb{1}$ we call $\ket{\Psi_n(A,\mathbb{1})}$ the MPS with periodic boundary conditions, generated by $A$.

We now introduce a graphical notation~\cite{bridgeman_hand-waving_2017,cirac_review} to describe MPS and other related notions. The tensor $A$ is represented by a
dot, with bond indices drawn as horizontally oriented lines, and the physical
index as a line oriented vertically:
\begin{equation}
    A = \begin{tikzpicture}[baseline={([yshift=-.5ex]current bounding box.center)}]
    \draw (0,0) -- (0.8,0);
    \draw (0.4,0) -- (0.4,0.4);
    \fill (0.4,0) circle (2pt);
    \draw (0.4,0) node[anchor=north] {\footnotesize $A$};
    \draw (0.25,0.4) node {\footnotesize $i$};
    \draw (-0.15,0) node{\footnotesize $\alpha$};
    \draw (0.95,0) node{\footnotesize $\beta$};
    \end{tikzpicture} \ .
\end{equation}
We will also refer to these lines as \emph{legs}. Note that what is relevant in identifying the different legs is the orientation in which the legs are attached to the tensor (the dot), as the lines can be bent in diagrams. The MPS $\ket{\Psi_n(A,X)}$ defined in \cref{mpsdef} can then be graphically represented as a network of tensors by denoting contractions -- i.e., summation of double indices -- by connecting the respective legs (here, the bond indices), while the legs corresponding to the free (physical) indices remain open (unconnected):
\begin{equation}
\label{eq:Psi-n-M-X}
    \ket{\Psi_n(A, X)} = \begin{tikzpicture}[baseline={([yshift=-.5ex]current bounding box.center)}]
        \draw (0,0) -- (2.5,0);
        \draw (0,-0.8) -- (0,0);
        \draw (2.5,-0.8) -- (2.5,0);
        \draw (0,-0.8) -- (2.5,-0.8);
        \fill (0.4,0) circle (2pt);
        \draw (0.4,0) -- (0.4,0.4);
        \draw (0.4,0) node[anchor=north]{\footnotesize $A$};
        \fill (1,0) circle (2pt);
        \draw (1,0) -- (1,0.4);
        \draw (1,0) node[anchor=north]{\footnotesize $A$};
        \draw (2,0) -- (2,0.4);
        \fill (2,0) circle (2pt);
        \draw (2,0) node[anchor=north]{\footnotesize $A$};
        \draw (1.5,0) node[anchor=south]{\footnotesize $\dots$};
        \fill (1.25,-0.8) circle (2pt);
        \draw (1.25,-0.8) node[anchor=north]{\footnotesize $X$};
    \end{tikzpicture} \ . 
\end{equation}
In the remainder of this article, we will refer to such a graphical notation as the \emph{tensor network notation}~\cite{bridgeman_hand-waving_2017,cirac_review}.

For a given MPS tensor $A$ we introduce the four-index tensor
\begin{equation}
\label{fourleg}
\mathcal T:=\
    \begin{tikzpicture}[baseline={([yshift=-.5ex]current bounding box.center)}]
        \draw (0,0) -- (0,0.8);
        \draw (-0.4,0) -- (0.4,0);
        \draw (-0.4,0.8) -- (0.4,0.8);
        \fill (0,0) circle (2pt);
        \fill (0,0.8) circle (2pt);
        \draw (0,0) node[anchor=north]{\footnotesize $A$};
        \draw (0,0.8) node[anchor=south]{\footnotesize $\bar{A}$};
    \end{tikzpicture}\ ,
\end{equation}
where $\bar{A}$ denotes the complex conjugate of $A$. There exist two natural ways to interpret this object as a linear map: we can view it as a map from the right indices to the left indices, resulting in the \emph{horizontal transfer matrix} $\mathcal{T} = \sum_i A^{i} \otimes \overline{A^{i}}$ of the MPS tensor $A$. Alternatively, we can also define the \emph{vertical transfer matrix}, which we will denote by $T$, by considering the object as a linear map from the bottom to the top indices: $T: X \mapsto \sum_i \Tr (A^{i} X) \cdot \overline{A^{i}}$. 
To distinguish these two maps in the graphical notation, we will adopt the convention that the alignment of the open legs indicates the direction in which the map acts: If the legs are aligned horizontally, the map acts from right to left, as in Eq.~\eqref{fourleg}, which hence denotes the map $\mathcal T$. If the legs are aligned vertically, the map acts from bottom to top, and the map $T$ is thus graphically denoted as 
\begin{equation}
    T = \begin{tikzpicture}[baseline={([yshift=-.5ex]current bounding box.center)}]
        \draw (0,0) -- (0,0.8);
        \draw (-0.4,0) -- (0.4,0);
        \draw (-0.4,0.8) -- (0.4,0.8);
        \draw (-0.4,0) -- (-0.4,-0.4);
        \draw (0.4,0) -- (0.4,-0.4);
        \draw (-0.4,0.8) -- (-0.4,1.2);
        \draw (0.4,0.8) -- (0.4,1.2);
        \fill (0,0) circle (2pt);
        \fill (0,0.8) circle (2pt);
        \draw (0,0) node[anchor=north]{\footnotesize $A$};
        \draw (0,0.8) node[anchor=south]{\footnotesize $\bar{A}$};
    \end{tikzpicture} = \begin{tikzpicture}[baseline={([yshift=-.5ex]current bounding box.center)}]
        \draw (-0.2,-0.2) -- (0.2,-0.2);
        \draw (-0.2,-0.2) -- (-0.2,1);
        \draw (-0.2,1) -- (0.2,1);
        \draw (0.2,1) -- (0.2,-0.2);
        \draw (-0.4,0) -- (-0.2,0);
        \draw (0.2,0) -- (0.4,0);
        \draw (-0.4,0.8) -- (-0.2,0.8);
        \draw (0.2,0.8) -- (0.4,0.8);
        \draw (-0.4,0) -- (-0.4,-0.4);
        \draw (0.4,0) -- (0.4,-0.4);
        \draw (-0.4,0.8) -- (-0.4,1.2);
        \draw (0.4,0.8) -- (0.4,1.2);
        \draw (0,0.4) node[]{\small $\mathcal{T}$};
        \end{tikzpicture}\ .
\end{equation}
In the following, we will assume that all MPS tensors are normalized in such a way that the eigenvalue of $\mathcal{T}$ with the largest magnitude is 1 (as a completely positive map, there is always a positive eigenvalue of largest magnitude~\cite{perez-garcia_matrix_2007}). 

A tensor $B$ is said to be the $\ell$\emph{-fold blocking} of $A$ if
\begin{equation}
    B = \sum_{i_1, \dots, i_\ell} A^{i_1}\cdots A^{i_\ell} \otimes \ket{i_1\dots i_\ell}.
\end{equation}
As a consequence, we have $\ket{\Psi_{\ell n} (A,X)} = \ket{\Psi_n(B,X)}$ \cite{molnar_generalization_2018}. By construction, the horizontal transfer matrix of the $n$-fold blocking of $A$ is $\mathcal{T}^n$. We will denote by $T_n$ the vertical transfer matrix resulting from $\mathcal{T}^n$:
\begin{equation}
\label{vertrdef}
    T_n = \begin{tikzpicture}[baseline={([yshift=-.5ex]current bounding box.center)}]
        \draw (-0.75,0.25) -- (0.75,0.25);
        \draw (-0.75,0.25) -- (-0.75,0.75);
        \draw (-0.75,0.75) -- (0.75,0.75);
        \draw (0.75,0.75) -- (0.75,0.25);
        \draw (-0.5,0.25) -- (-0.5,0);
        \draw (0.5,0.25) -- (0.5,0);
        \draw (-0.5,0.75) -- (-0.5,1);
        \draw (0.5,0.75) -- (0.5,1);
        \draw (0,0.5) node[]{$T_n$};
        \end{tikzpicture} 
   := 
    \begin{tikzpicture}[baseline={([yshift=-.5ex]current bounding box.center)}]
        \draw (-0.3,0.25) -- (0.3,0.25);
        \draw (-0.3,0.25) -- (-0.3,0.75);
        \draw (-0.3,0.75) -- (0.3,0.75);
        \draw (0.3,0.75) -- (0.3,0.25);
        \draw (-0.5,0.375) -- (-0.3,0.375);
        \draw (0.3,0.375) -- (0.5,0.375);
        \draw (-0.5,0.625) -- (-0.3,0.625);
        \draw (0.3,0.625) -- (0.5,0.625);
        \draw (-0.5,0.375) -- (-0.5,0);
        \draw (0.5,0.375) -- (0.5,0);
        \draw (-0.5,0.625) -- (-0.5,1);
        \draw (0.5,0.625) -- (0.5,1);
        \draw (0,0.5) node[]{\small $\mathcal{T}^n$};
        \end{tikzpicture}
 = 
        \begin{tikzpicture}[baseline={([yshift=-.5ex]current bounding box.center)}]
        \draw (0,0) -- (2.5,0);
        \draw (0,-0.4) -- (0,0);
        \draw (2.5,-0.4) -- (2.5,0);
        \draw (0,0.8) -- (2.5,0.8);
        \draw (0,0.8) -- (0,1.2);
        \draw (2.5,0.8) -- (2.5,1.2);
        \fill (0.4,0) circle (2pt);
        \fill (0.4,0.8) circle (2pt);
        \draw (0.4,0) -- (0.4,0.8);
        \draw (0.4,0) node[anchor=north]{\footnotesize $A$};
        \draw (0.4,0.8) node[anchor=south]{\footnotesize $\bar{A}$};
        \fill (1,0) circle (2pt);
        \fill (1,0.8) circle (2pt);
        \draw (1,0) -- (1,0.8);
        \draw (1,0) node[anchor=north]{\footnotesize $A$};
        \draw (1,0.8) node[anchor=south]{\footnotesize $\bar{A}$};
        \draw (2,0) -- (2,0.8);
        \fill (2,0) circle (2pt);
        \fill (2,0.8) circle (2pt);
        \draw (2,0) node[anchor=north]{\footnotesize $A$};
        \draw (2,0.8) node[anchor=south]{\footnotesize $\bar{A}$};
        \draw (1.5,0) node[anchor=south]{\footnotesize $\dots$};
        \draw (1.5,0.8) node[anchor=north]{\footnotesize $\dots$};
    \end{tikzpicture} \ .
\end{equation}
This map can be constructed efficiently (i.e., at a polynomial -- specifically, linear -- cost in $n$) by taking the $n$'th power of the $D^2\times D^2$ matrix $\mathcal{T}$, and then regrouping its indices.

\subsubsection*{Normal and injective MPS}

Now, we define special subclasses of MPS and list their relevant properties. We call an MPS tensor \emph{injective} if its vertical transfer matrix $T$ has rank $D^2$. Equivalently, an MPS tensor $A$ is injective if
\begin{equation}
    \operatorname{span}\{A^1,\dots,A^d\} = \mathcal{M}_{D} \ .
\end{equation}
If there exists a natural number $\ell$ such that $T_\ell$ has rank $D^2$, or equivalently, the $\ell$-fold blocking of the tensor generating the MPS spans the space $\mathcal{M}_D$, we say that the MPS tensor is \emph{normal} \cite{schuch_peps_2010}. We call the smallest such $\ell$ the \emph{injectivity length} of the MPS. An MPS generated by an injective/normal tensor will be also called injective/normal.

A normal MPS tensor $A$ is said to be in the \emph{(left) canonical form}, if it satisfies
\begin{equation}
\label{canonical}
        \begin{tikzpicture}[baseline={([yshift=-.5ex]current bounding box.center)}]
        \draw (0,0) -- (0.8,0);
        \draw (0,0) -- (0,0.8);
        \draw (0,0.8) -- (0.8,0.8);
        \draw (0.4,0) -- (0.4,0.8);
        \fill (0.4,0) circle (2pt);
        \fill (0.4,0.8) circle (2pt);
        \draw (0.4,0) node[anchor=north]{\footnotesize $A$};
        \draw (0.4,0.8) node[anchor=south]{\footnotesize $\bar{A}$};
        \end{tikzpicture} = \begin{tikzpicture}[baseline={([yshift=-.5ex]current bounding box.center)}]
        \draw (0,0) -- (0.4,0);
        \draw (0,0) -- (0,0.8);
        \draw (0,0.8) -- (0.4,0.8);
        \end{tikzpicture}\ , \qquad \begin{tikzpicture}[baseline={([yshift=-.5ex]current bounding box.center)}]
        \draw (0,0) -- (0.8,0);
        \draw (0.8,0) -- (0.8,0.8);
        \draw (0,0.8) -- (0.8,0.8);
        \draw (0.4,0) -- (0.4,0.8);
        \fill (0.4,0) circle (2pt);
        \fill (0.4,0.8) circle (2pt);
        \fill (0.8,0.4) circle (2pt);
        \draw (0.4,0) node[anchor=north]{\footnotesize $A$};
        \draw (0.4,0.8) node[anchor=south]{\footnotesize $\bar{A}$};
        \draw (0.8,0.4) node[anchor=west] {$\rho$};
        \end{tikzpicture} = \begin{tikzpicture}[baseline={([yshift=-.5ex]current bounding box.center)}]
        \draw (0,0) -- (0.4,0);
        \draw (0.4,0) -- (0.4,0.8);
        \draw (0,0.8) -- (0.4,0.8);
        \fill (0.4,0.4) circle (2pt);
        \draw (0.4,0.4) node[anchor=west] {$\rho$};
        \end{tikzpicture},
\end{equation}
where $\rho$ is a positive definite matrix with $\Tr(\rho)=1$, namely the unique (right) eigenvector with eigenvalue $+1$ of the horizontal transfer matrix.
Any normal MPS can be brought into canonical form. More precisely, for any normal MPS tensor $\tilde{A}$ there exists a normal tensor $A$ satisfying \cref{canonical} such that $\ket{\Psi_n(\tilde{A},X)} = \ket{\Psi_n(A,YXY^{-1})}$ for some invertible $Y$ and for any $n\in \mathbb{N}$, see Ref.~\cite{perez-garcia_matrix_2007} for details. In addition, the horizontal transfer matrix $\mathcal{T}$ of a normal MPS in left canonical form satisfies \cite{fannes_finitely_1992}
\begin{equation}
\label{transferlimit}
    \lim_{n \to \infty}\mathcal{T}^n = \begin{tikzpicture}[baseline={([yshift=-.5ex]current bounding box.center)}]
    \draw (0,0) -- (0.4,0);
    \draw (0.4,0) -- (0.4,0.8);
    \draw (0,0.8) -- (0.4,0.8);
    \draw (0.8,0) -- (1.2,0);
    \draw (0.8,0) -- (0.8,0.8);
    \draw (0.8,0.8) -- (1.2,0.8);
    \fill (0.4,0.4) circle (2pt);
    \draw (0.4,0.4) node[anchor=east] {$\rho$};
    \end{tikzpicture} \eqcolon \mathcal{T}_{\infty}\ .
\end{equation}

\subsubsection*{Block-normal and block-injective MPS}

Finally, we introduce a family of MPS that is more general than normal MPS. Let $\{A_\alpha\}_{\alpha=1}^k$ be normal tensors that generate different states for large enough system sizes.\footnote{Otherwise, the tensors generating the same state would be related by a gauge transformation~\cite{molnar_generalization_2018}.} We define the MPS tensor $C$ by $C^{i} = \bigoplus_{\alpha=1}^{k} A^{i}_\alpha$. The MPS generated by $C$ then represents a linear combination of the states generated by the $A_\alpha$. In this case, we call $A_\alpha$ \emph{blocks} and the resulting tensor $C$ \emph{block-normal}. If each $A_\alpha$ is injective, then $C$ is called \emph{block-injective}.

Let us note that \emph{any} MPS with periodic boundaries can be generated by a tensor whose blocks satisfy \cref{canonical}, however these blocks are not necessarily normal~\cite{perez-garcia_matrix_2007}. 

\section{Hamiltonians with matrix product ground states}
\label{sec:parenthams}

\subsection{Weak parent Hamiltonians}\label{sec:weakparent}

Given an MPS generated by a tensor $A$, we can construct Hamiltonians which have this MPS as an exact ground state. To this end, we proceed as follows.  Define 
\begin{equation}
\label{mpsspace}
    \mathcal{S}_n := \{\ket{\Psi_n(A,X)}|X\in \mathcal{M}_D\}\ ,
\end{equation}
that is, $\mathcal S_n$ is the space spanned by the MPS on $n$ sites for all possible choices of boundary conditions. From now on, we restrict to MPS where $\mathcal{S}_n \neq \{0\}$ for all $n\in \mathbb{N}$.
Choose $h$ to be any positive semidefinite operator acting on $n$ lattice sites with the property
\begin{equation}
\label{locham}
    \ker h = \mathcal{S}_n\,.
\end{equation}
Then $h$ annihilates any state in $\mathcal{S}_n$, i.e., $h\ket{\Psi} = 0 \ \forall \ket{\Psi} \in \mathcal{S}_n$. We will say that $h$ is $n$\emph{-local}, and refer to $n$ as the \emph{interaction range} of the Hamiltonian.

We can now define a Hamiltonian $H_N$ on a system size $N\geq n$ with open boundaries by summing up translations of the $n$-local Hamiltonian $h$:
\begin{equation}
\label{globham}
    H_N = \sum_{i=1}^{N-n+1} h_i\,,
\end{equation}
with $h_i = \mathbb{1} \otimes \dots \otimes \mathbb{1} \otimes  h \otimes \mathbb{1} \otimes \dots \otimes \mathbb{1}$, where the $h$ is placed at sites  $(i,i+1,\dots,i+n-1)$. This Hamiltonian has the following properties:
\begin{itemize}
    \item $H_N$ is positive semidefinite.
    \item $H_N \ket{\Psi_N(A,X)} = 0 \ \forall X$, or equivalently, $\mathcal{S}_N \subseteq \ker H_N$.
    \item From this, it follows that the  ground state energy of $H_N$ is $0$, and $\ket{\Psi_N(A,X)}\ \forall\,X$ are ground states of $H_N$. 
    \item In this case, since the terms $h_i$ are positive semidefinite as well, any ground state also minimizes the energy of each $h_i$ individually; this property is called \emph{frustration-freeness}.
\end{itemize}
We call local Hamiltonians satisfying these properties \emph{weak parent Hamiltonians} of the MPS. Note that the value $0$ of the ground state energy of weak parent Hamiltonians is an arbitrary choice, and can be changed by adding a constant energy density $e$ to each $h$.
A common choice of a weak parent Hamiltonian, which we refer to as the \emph{canonical weak parent
Hamiltonian}, is obtained by choosing the $n$-local terms $h$ as
\begin{equation}
\label{localparent}
    h = \mathbb{1}-P_n\ ,
\end{equation}
where $P_n$ is defined to be the orthogonal projector onto $\mathcal{S}_n$.

\subsection{Parent Hamiltonians\label{subsec:parenthams}}

In this work, we are interested in weak parent Hamiltonians that satisfy a
stricter condition: Their ground space is spanned precisely by the MPS
$\ket{\Psi_N(A,X)}\ \forall\,X$ for large enough $N$, i.e., $\ker H_N =\mathcal
S_N$. We will call such Hamiltonians \emph{parent Hamiltonians} of the given MPS
(and correspondingly \emph{canonical parent Hamiltonians} if they are of the
form (\ref{localparent})). We will now discuss conditions which imply that a
weak parent Hamiltonian is, in fact, a parent Hamiltonian.

\subsubsection*{Normal MPS and parent Hamiltonians with unique ground states}

Let us first consider normal MPS. For normal MPS with injectivity length $\ell$,
a weak parent Hamiltonian $H_N$ is a parent Hamiltonian if $h$ has interaction range at least $\ell
+1$ \cite{schuch_peps_2010}. If the interaction range of $h$ is smaller, $H_N$
is still a parent Hamiltonian for $N\ge N_0$ if there is an $N_0\ge \ell+1$ for
which one can check that $\ker{H_{N_0}}=\mathcal S_{N_0}$, as this implies that
$H_N$ has the same ground state space as a weak parent Hamiltonian with interaction range  $N_0\ge \ell+1$.
Note that since $\mathcal S_{N_0}\subseteq \ker H_{N_0}$, it is sufficient to
check that $\dim \ker H_{N_0} \le \dim \mathcal{S}_{N_0} = D^2$ to establish
that $H_N$, $N\ge N_0$, is a parent Hamiltonian.  

For parent Hamiltonians defined on systems with periodic boundary conditions
(and system size $N\ge N_0\ge \ell+1$), the ground state becomes unique and is
given by $\ket{\Psi_N(A,\mathbb 1)}$ \cite{schuch_peps_2010,cirac_review}.
Moreover, for open boundaries, the ground states $\ket{\Psi_N(A,X)}$ only
differ at the boundary (as the dependence on the boundary conditions decays exponentially into the bulk~\cite{cirac_review}), and thus they all describe the same state in the bulk
as the system size is taken to infinity; in fact, on the infinite chain, $H_N$
has a unique ground state, given by the pure finitely correlated state (i.e., the
infinite-chain version of normal MPS) defined by $A$~\cite{fannes_finitely_1992}. In light of this, we will also apply the terminology ``$H_N$ has a unique
ground state'' to MPS parent Hamiltonians $H_N$ on open boundary condition
chains, or when boundary conditions are unspecified,  meaning the ground state
is unique in the bulk (and with periodic boundaries). 

\subsubsection*{Block-normal MPS and parent Hamiltonians with degenerate ground states}

Let us now turn to block-normal MPS. Here, the condition for a weak parent
Hamiltonian $H_N$ to be a parent Hamiltonian -- that is, for its ground space to
be given by $\mathcal S_N$ -- is as follows. Let $C$ be a block-normal MPS
tensor consisting of blocks $\{A_\alpha\}_{\alpha=1}^{k}$
with bond dimensions $\{D_\alpha\}$ and injectivity lengths $\{\ell_\alpha\}$.
Further, let  $P^{(\alpha)}_n$ denote the orthogonal projection onto $\mathcal
S^{(\alpha)}_n := \big\{\ket{\Psi_n(A_{\alpha},X_\alpha)}\vert X_\alpha \in \mathcal{M}_{D_\alpha} \big\}$. 
If the interaction range $n$ of $h$ is such that (i)
$n>\ell_\mathrm{max}:=\max_\alpha \ell_\alpha$, and (ii)
$\max\operatorname{spec}(P^{(\alpha)}_n P^{(\beta)}_n) < \frac{1}{(k-1)^2}$ for
all $\alpha\ne\beta$,\footnote{\label{footnote:orth}Note that this always holds for sufficiently
large $n$: For block-normal MPS, $\max\operatorname{spec}(P^{(\alpha)}_n
P^{(\beta)}_n)\to0$ for $n\to\infty$~\cite{molnar_generalization_2018}.
Also note that $\operatorname{spec}(P^{(\alpha)}_n P^{(\beta)}_n)\ge0$.}
then the ground space of $H_N$ for $N\ge n$ is given by the direct sum $\bigoplus_\alpha
\mathcal S_N^{(\alpha)}$, that is, it is spanned by the linearly independent MPS
$\ket{\Psi_N(A_\alpha,X)}$, and thus has dimension $\sum_{\alpha=1}^{k}
D_\alpha^2$ . As before, for a $h$ with a smaller interaction range to be a
parent Hamiltonian, it is sufficient to show that for some $N_0>\ell_\mathrm{max}$, $\ker H_{N_0}$ coincides with $\bigoplus_\alpha \mathcal S_{N_0}^{(\alpha)}$ (cf.\ Footnote~\ref{footnote:orth}).
Again, states with different boundary conditions $X_\alpha$ do not differ in the
bulk as $N\to \infty$, and the ground state is $k$-fold degenerate for periodic
boundaries and on infinite chains. We thus once again adopt the terminology ``$H_N$ has a $k$-fold degenerate ground state'' (that is, in the bulk) also for the open boundary
condition Hamiltonian $H_N$~\cite{nachtergaele_spectral_1996,schuch_peps_2010}.

\section{Spectral gaps of parent Hamiltonians: Existence and bounds}
\label{sec3}
\label{sec:gaps-existence-bounds}

This section contains the central results of our work: We present a method
(with different variants) which allows to compute rigorous lower bounds on
the spectral gap of parent Hamiltonians of normal and block-normal MPS. At the same time,
the technique we develop also yields a simple yet rigorous proof of the
fact that any parent Hamiltonian of a normal or block-normal MPS has a
spectral gap (even in cases where the technique fails to provide an actual
lower bound due to computational limitations).  Applications of these
results are illustrated through a number of examples in
\cref{sec:examples}.

The key idea of the method consists of two steps. The first step relates
the spectral gap of the original Hamiltonian to the gap of a blocked
Hamiltonian, which is obtained by blocking sites into super-sites after
rearranging terms of the original Hamiltonian. This blocking can be done
in different ways, covered in \cref{lem:block-obc} and
\cref{lem:block-pbc}. The reason for this blocking is two-fold, both
relevant to the second step: First, blocking gives rise to a nearest-neighbor
Hamiltonian, and second, increasing the block size leads to better bounds
on the gap.  The second step then gives a lower bound on the gap of
the blocked parent Hamiltonian which is independent of the system size, based
on a local condition involving the relative alignment of the ground spaces of
two adjacent terms of the blocked Hamiltonian. The optimal value for this local
condition can be found numerically, providing an explicit bound on the gap
(though it might be trivial). We prove that this bound converges to a non-zero
value as the blocking is increased, thus in particular proving the existence of
a gap for all parent Hamiltonians. In fact, for the blocking of
\cref{lem:block-pbc}, we show that the bound obtained converges to the
actual value of the gap for open boundaries when $N\to\infty$.

The technique we present is based on a method
developed originally by Fannes, Nachtergaele, and Werner (FNW) (see Ref.~\cite{fannes_finitely_1992}) for pure
finitely correlated states (FCS) -- those precisely amount to normal
translation invariant MPS in modern terminology. This method has later been
generalized by Nachtergaele (the ``martingale method'') to block-normal MPS (again in the language of
FCS)~\cite{nachtergaele_spectral_1996}. Our results extend the results of FNW and
Nachtergaele in several ways: They strengthen both the first and the second
step of the argument, and thus achieve better bounds on the gap.  At the
same time, the technique which we develop for the second step allows to
also significantly simplify the proof of the existence of a gap.  Finally,
our approach treats parent Hamiltonians of normal and block-normal MPS on
a unified footing -- in fact, all we require is that the ground space of
the Hamiltonian on any block length is parametrized by \emph{some} MPS
(which can depend on the block size), enabling the application of our
method to general frustration free Hamiltonians.

Throughout the section, we will clarify the relation of our results to
those of FNW and Nachtergaele; in addition, for reference we present their
original results, partly with proofs, in~\cref{appendix:fnw-n}, rephrased
in the modern terminology of MPS.  Let us stress that the main goal of FNW
and Nachtergaele was not to obtain quantitative bounds on the gap, but to
prove its existence, and it is thus not surprising that their quantitative
bounds can be improved; yet, remarkably, our approach also leads to a much
simplified proof of the existence of a gap as such.

In what follows, we will denote the gap of a Hamiltonian $H$ by
$\Delta(H)$.  Specifically,  $\Delta(H)$ is the difference between the
first and the second \emph{distinct} lowest eigenvalue. Since our
Hamiltonians are all frustration free with the ground state energy set to
zero, $\Delta(H)$ equals the smallest non-zero eigenvalue of $H$.

\subsection{Reshuffling and blocking of the Hamiltonian}

We start by discussing the first step of the method: Relating the spectral
gap of the parent Hamiltonian to the spectral gap of a blocked
Hamiltonian. All results in this subsection apply to any
frustration-free translation invariant Hamiltonian. Yet, we will 
consider MPS parent Hamiltonians, as we will require later on some statements which are 
specific to that setting.
\cref{lem:block-obc} applies independently of the boundary
conditions, while the stronger \cref{lem:block-pbc} requires periodic
boundaries. The construction of the Hamiltonian in the Lemmas and their proofs
are illustrated in \cref{fig:lemma-blocking} -- some readers might
find those more accessible than the index-heavy notation of the lemmas.

Recall that the original Hamiltonian is $H_N=\sum_{i=1}^{N-n+1} h_i$, with
interaction range $n$, and $h_i$ acting on sites $(i,i+1,\dots,i+n-1)$. We consider both the setting with OBC and PBC for $H_N$.

\begin{lemma}
\label{lem:block-obc}
Let $p\ge n-1$, $q\ge0$, and
$N=p(M+1)+qM$ for $M\in\mathbb N$ (for OBC) or
$N=(p+q)M$ for $M\ge 3$ (for PBC).
 We block sites in alternating blocks of size
$p$ and $q$. The blocked Hamiltonian $\tilde H_M := \sum_{\tilde\imath=0}^{M-1}
\tilde h_{\tilde\imath}$
has local terms $\tilde h_{\tilde\imath}$ with  support
on two consecutive $p$ blocks and the $q$ block inbetween (cf.\
\cref{fig:lemma-blocking}a), which are constructed as a linear combination of terms
$h_i$ in its support with weights $\lambda_\omega$, $\omega=1,\dots,2p+q-(n-1)$,
as
\begin{equation}
\label{eq:blockedham-obc}
\tilde h_{\tilde\imath} :=
\sum_{\omega=1}^{2p+q-(n-1)} \lambda_\omega h_{i_0+\omega}\ ,\qquad
\tilde\imath = 0,\dots,M-1\ ,
\quad i_0=\tilde\imath(p+q)\ ,
\end{equation}
where 
$\lambda_\omega>0$, 
\begin{equation}
\label{eq:lemma-obc-lambdai}
\lambda_\omega+\lambda_{\omega'}=1\mbox{\ for\ }
\omega=1,\dots,p-(n-1),\ \omega'=\omega+(p+q)\ ,
\end{equation}
and
$\lambda_\omega=1$ otherwise.  

Then, $\tilde H_M$ and $H_N$ have the same ground space, and
$\Delta(H_N)\ge\Delta(\tilde H_M)$.  Moreover, $\tilde H_M$ is frustration
free, and the ground space of its terms $\tilde h_{\tilde\imath}$ is
parametrized by the MPS from which the original parent Hamiltonian has been
constructed.
\end{lemma}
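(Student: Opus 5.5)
The plan is to show the operator inequality $\tilde H_M \le H_N$ together with $\tilde H_M \ge c\,H_N$ for some $c>0$. The second inequality forces equal kernels; the first, combined with equal kernels, gives the gap bound.

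\emph{Step 1: counting coefficients.} Every term $h_i$ of $H_N$ appears in $\tilde H_M$ with total weight exactly $1$. Block $\tilde\imath$ covers the original terms $h_{i_0+1},\dots,h_{i_0+2p+q-(n-1)}$. These are precisely the terms fully supported on the window formed by a $p$ block, the following $q$ block, and the next $p$ block (with OBC the first $p$ block is offset as in \cref{fig:lemma-blocking}a). Consecutive windows are shifted by $p+q$, so block $\tilde\imath+1$ covers $h_{i_0+(p+q)+\omega}$. The overlap between the two windows consists of the terms supported inside the shared $p$ block, namely $\omega=p+q+1,\dots,2p+q-(n-1)$ in block $\tilde\imath$. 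These coincide with the terms $\omega'=1,\dots,p-(n-1)$ in block $\tilde\imath+1$. Their combined weight $\lambda_{\omega'}+\lambda_{\omega'+(p+q)}=1$ by \cref{eq:lemma-obc-lambdai}; this requires the window to be long enough that no term lies in three windows, which is guaranteed by $p\ge n-1$. Every other term lies in exactly one window, where its weight is $1$. Every $h_i$ does lie in some window: a term has range $n\le p+1$, so it fits in a $p$-block plus neighbor, and hence inside some window. The precise alignment of $N=p(M+1)+qM$ (OBC) or $N=(p+q)M$ with $M\ge3$ (PBC, so windows do not wrap onto themselves) is what makes this exact covering work, and checking that is the bookkeeping step. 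Consequently $\tilde H_M = H_N$ as operators.

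\emph{Step 2: the conclusions.} Since $\tilde H_M = H_N$ exactly, the ground spaces coincide and $\Delta(H_N)=\Delta(\tilde H_M)$, which in particular gives $\Delta(H_N)\ge\Delta(\tilde H_M)$. Frustration-freeness follows because each $\tilde h_{\tilde\imath}$ is a positive combination of positive operators $h_i$, so it is itself positive. The global ground space $\ker H_N\supseteq\mathcal S_N$ is annihilated by every $h_i$, hence by every $\tilde h_{\tilde\imath}$.

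\emph{Step 3: the kernel of each term.} Because all $\lambda_\omega>0$ and each $h_i\ge0$, we have $\ker\tilde h_{\tilde\imath}=\bigcap_\omega\ker h_{i_0+\omega}$. This intersection is the kernel of the original (weak) parent Hamiltonian on the window of length $2p+q$, i.e.\ $\ker H_{2p+q}$ tensored with identity outside. Since $2p+q\ge N_0$, the parent-Hamiltonian property from \cref{subsec:parenthams} identifies this kernel with $\mathcal S_{2p+q}$ in the normal case, or with $\bigoplus_\alpha\mathcal S^{(\alpha)}_{2p+q}$ in the block-normal case. Either way it is parametrized by the MPS. If $2p+q$ is smaller than $N_0$, I would instead state the result as the kernel of the local parent Hamiltonian, which is still frustration-free.

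The main obstacle is the index bookkeeping in Step 1: confirming that the overlap ranges match exactly the $\omega$ pairs of \cref{eq:lemma-obc-lambdai}, and handling the boundary windows in OBC and the wrap-around in PBC.
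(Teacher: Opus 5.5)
Your Step 1 conclusion $\tilde H_M=H_N$ holds for PBC, but not for OBC (except when $p=n-1$, where the constrained pairs in the weight condition are absent and all $\lambda_\omega=1$). With open boundaries there is no term $\tilde h_{-1}$ and no term $\tilde h_{M}$. The terms $h_\omega$, $\omega=1,\dots,p-(n-1)$, supported inside the leftmost $p$-block therefore occur only in $\tilde h_0$, with weight $\lambda_\omega$. Since $\lambda_\omega+\lambda_{\omega+p+q}=1$ and both weights are strictly positive, $\lambda_\omega<1$. Likewise, the terms inside the rightmost $p$-block occur only in $\tilde h_{M-1}$, with weights $\lambda_{\omega'}<1$, $\omega'=p+q+1,\dots,2p+q-(n-1)$. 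So your assertion that a term lying in a single window has weight $1$ fails at both ends of the chain. The equality $\Delta(H_N)=\Delta(\tilde H_M)$ that you derive from it in Step 2 is then unsupported for OBC; only the inequality stated in the lemma is available there.

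The repair is the two-sided bound you announced in your first paragraph and then dropped, and it is the paper's own argument. Write $\tilde H_M=\sum_i\gamma_i h_i$ with $0<\gamma_i\le1$, where $\gamma_i<1$ precisely for the boundary terms. Then $c\,H_N\le\tilde H_M\le H_N$ with $c=\min_i\gamma_i>0$, so both kernels equal $\mathcal K:=\bigcap_i\ker h_i$. The variational characterization of the smallest nonzero eigenvalue then gives $\Delta(\tilde H_M)=\min_{\psi\perp\mathcal K,\,\|\psi\|=1}\langle\psi|\tilde H_M|\psi\rangle\le\min_{\psi\perp\mathcal K,\,\|\psi\|=1}\langle\psi|H_N|\psi\rangle=\Delta(H_N)$. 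The exact equality $\tilde H_M=H_N$ is only the additional, stronger statement for PBC.

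Your frustration-freeness argument and Step 3 are fine: because all $\lambda_\omega>0$, $\ker\tilde h_{\tilde\imath}=\ker H_{2p+q}$. One small correction to the bookkeeping: no term can lie in three windows for any $p$. What $p\ge n-1$ actually guarantees is that consecutive windows, each containing $2p+q-(n-1)$ terms and shifted by $p+q$, overlap or abut, so that every $h_i$ is covered.
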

The entire construction of $\tilde h_{\tilde\imath}$ and $\tilde H_M$ is
illustrated in \cref{fig:lemma-blocking}.  The simplest (``canonical'') choice for the
$\lambda_\omega$ in \cref{eq:lemma-obc-lambdai} is $\lambda_\omega=\tfrac12$.
Furthermore, the Lemma still holds when relaxing
$\lambda_\omega+\lambda_{\omega'}\le 1$ and $\lambda_\omega\le 1$, respectively, but
equality will always result in the best bounds on the gap.  Specifically,
the choice $q=0$ and $\lambda_\omega=\tfrac12$ for \emph{all} $\omega$ is the choice made
by FNW and
Nachtergaele~\cite{fannes_finitely_1992,nachtergaele_spectral_1996}.

\begin{figure}[tb]
    \centering
    \hspace*{\fill}
\begin{subfigure}[b]{0.48\textwidth}
    \caption{}
    \centering
    \begin{tikzpicture}
        \foreach \x in {0,...,8}
            \node[rectangle,fill,minimum size=4pt,inner sep=0pt] at (0.5*\x,0) {};
        \foreach \x in {1,...,9}
            \node[rectangle,fill,minimum size=4pt,inner sep=0pt] at (0.5*\x,-0.5) {};
        \foreach \x in {2,...,7}
            \node[rectangle,fill,minimum size=4pt,inner sep=0pt] at (0.5*\x,-1) {};
        \foreach \x in {0,...,2}
           { \draw (1.5*\x,0) -- (1.5*\x+1,0);
            \draw (1.5*\x+0.5,-0.5) -- (1.5*\x+1.5,-0.5); }
        \draw (1,-1) -- (2,-1);
        \draw (2.5,-1) -- (3.5,-1);
        \foreach \y in {0,...,2}
            \foreach \x in {\y,...,2}
                \node[rectangle,fill=gray,minimum size=4pt,inner sep=0pt] at (-0.5*\x+0.5,0.5*\y-1) {};
        \node[rectangle,fill=gray,minimum size=4pt,inner sep=0pt] at (4.5,0) {};
        \node[rectangle,fill=gray,minimum size=4pt,inner sep=0pt] at (5,0) {};
        \node[rectangle,fill=gray,minimum size=4pt,inner sep=0pt] at (5,-0.5) {};
        \foreach \x in {0,...,2}
            \node[rectangle,fill=gray,minimum size=4pt,inner sep=0pt] at (0.5*\x+4,-1) {};
        \draw[gray] (-0.5,0) -- (-0.75,0);
        \draw[gray] (0,-0.5) -- (-0.75,-0.5);
        \draw[gray] (0.5,-1) -- (-0.5,-1);
        \draw[gray] (4.5,0) -- (5.25,0);
        \draw[gray] (5,-0.5) -- (5.25,-0.5);
        \draw[gray] (4,-1) -- (5,-1);
        \draw[gray] (-1.125,-0.5) node[] {$\dots$};
        \draw[gray] (5.625,-0.5) node[] {$\dots$};
        \draw[green] (-0.25,0.5) rectangle (1.75,-1.25);
        \draw[green] (1.75,-1.25) rectangle (2.75,0.5);
        \draw[green] (2.75,0.5) rectangle (4.75,-1.25);
        \draw[green] (-0.25,0.5) -- (-1,0.5);
        \draw[green] (-0.25,-1.25) -- (-1,-1.25);
        \draw[green] (4.75,-1.25) -- (5.5,-1.25);
        \draw[green] (4.75,0.5) -- (5.5,0.5);
        \draw (0.5,0) node[anchor=south] {\scriptsize $\lambda_1$};
        \draw (1,-0.5) node[anchor=south] {\scriptsize $\lambda_2$};
        \draw (1.5,-1) node[anchor=south] {\scriptsize $\lambda_3$};
        \draw (2,0) node[anchor=south] {\scriptsize $\lambda_4$};
        \draw (2.5,-0.5) node[anchor=south] {\scriptsize $\lambda_5$};
        \draw (3,-1) node[anchor=south] {\scriptsize $\lambda_6$};
        \draw (3.5,0) node[anchor=south] {\scriptsize $\lambda_7$};
        \draw (4,-0.5) node[anchor=south] {\scriptsize $\lambda_8$};
        \draw[decorate,decoration={brace,mirror}] (-0.2,-1.375) -- (1.7,-1.375);
        \draw[decorate,decoration={brace,mirror}] (1.8,-1.375) -- (2.7,-1.375);
        \draw[decorate,decoration={brace,mirror}] (2.8,-1.375) -- (4.7,-1.375);
        \draw[decorate,decoration={brace,mirror}] (-1.2,-1.375) -- (-0.3,-1.375);
        \draw[decorate,decoration={brace,mirror}] (4.8,-1.375) -- (5.7,-1.375);
        \draw (0.75,-1.375) node[anchor=north] {\scriptsize $p$};
        \draw (2.25,-1.375) node[anchor=north] {\scriptsize $q$};
        \draw (3.75,-1.375) node[anchor=north] {\scriptsize $p$};
        \draw (-0.75,-1.375) node[anchor=north] {\scriptsize $q$};
        \draw (5.25,-1.375) node[anchor=north] {\scriptsize $q$};
        \draw[red,decorate,decoration={brace,mirror}] (-0.2,-1.75) -- (4.7,-1.75);
        \draw[red] (2.25,-1.75) node[anchor=north] {\small $=: \tilde{h}_{\tilde\imath}$};
    \end{tikzpicture}
\end{subfigure}
\hspace*{\fill}
\begin{subfigure}[b]{0.48\textwidth}
    \caption{}
    \centering
    \begin{tikzpicture}
        \foreach \x in {0,...,8}
            \node[rectangle,fill,minimum size=4pt,inner sep=0pt] at (0.5*\x,0) {};
        \foreach \x in {1,...,9}
            \node[rectangle,fill,minimum size=4pt,inner sep=0pt] at (0.5*\x,-0.5) {};
        \foreach \x in {2,...,7}
            \node[rectangle,fill,minimum size=4pt,inner sep=0pt] at (0.5*\x,-1) {};
        \foreach \x in {0,...,2}
           { \draw (1.5*\x,0) -- (1.5*\x+1,0);
            \draw (1.5*\x+0.5,-0.5) -- (1.5*\x+1.5,-0.5); }
        \foreach \y in {0,...,2}
            \foreach \x in {\y,...,4}
                \node[rectangle,fill,minimum size=4pt,inner sep=0pt] at (3+0.5*\x,-0.5*\y-2) {};
        \foreach \x in {0,...,3}
            \node[rectangle,fill,minimum size=4pt,inner sep=0pt] at (0.5*\x-0.5,-2) {};
        \foreach \x in {0,...,4}
            \node[rectangle,fill,minimum size=4pt,inner sep=0pt] at (0.5*\x-0.5,-2.5) {};
        \foreach \x in {0,...,2}
            \node[rectangle,fill,minimum size=4pt,inner sep=0pt] at (0.5*\x-0.5,-3) {};
        \draw (1,-1) -- (2,-1);
        \draw (2.5,-1) -- (3.5,-1);
        \draw (0,-2) -- (1,-2);
        \draw (0.5,-2.5) -- (1.5,-2.5);
        \draw (-0.5,-3) -- (0.5,-3);
        \draw (-0.75,-2) -- (-0.5,-2);
        \draw (-0.75,-2.5) -- (0,-2.5);
        \draw (3,-2) -- (4,-2);
        \draw (4.5,-2) -- (5.25,-2);
        \draw (3.5,-2.5) -- (4.5,-2.5);
        \draw (5,-2.5) -- (5.25,-2.5);
        \draw (4,-3) -- (5,-3);
        \draw[] (-1.125,-1.375) node[] {$\dots$};
        \draw[] (5.625,-1.375) node[] {$\dots$};
        \draw[green] (-0.25,0.5) rectangle (1.75,-1.25);
        \draw[green] (-0.25,-1.5) rectangle (1.75,-3.25);
        \draw[green] (1.75,-1.25) rectangle (2.75,0.5);
        \draw[green] (2.75,0.5) rectangle (4.75,-1.25);
        \draw[green] (2.75,-1.5) rectangle (4.75,-3.25);
        \draw[green] (-0.25,-1.5) -- (-1,-1.5);
        \draw[green] (-0.25,-3.25) -- (-1,-3.25);
        \draw[green] (4.75,-1.5) -- (5.5,-1.5);
        \draw[green] (4.75,-3.25) -- (5.5,-3.25);
        \draw (0.5,0) node[anchor=south] {\scriptsize $\lambda_1$};
        \draw (0.5,-2) node[anchor=south] {\scriptsize $\lambda_7$};
        \draw (1,-0.5) node[anchor=south] {\scriptsize $\lambda_2$};
        \draw (1,-2.5) node[anchor=south] {\scriptsize $\lambda_8$};
        \draw (1.5,-1) node[anchor=south] {\scriptsize $\lambda_3$};
        \draw (2,0) node[anchor=south] {\scriptsize $\lambda_4$};
        \draw (2.5,-0.5) node[anchor=south] {\scriptsize $\lambda_5$};
        \draw (3,-1) node[anchor=south] {\scriptsize $\lambda_6$};
        \draw (3.5,0) node[anchor=south] {\scriptsize $\lambda_7$};
        \draw (3.5,-2) node[anchor=south] {\scriptsize $\lambda_1$};
        \draw (4,-0.5) node[anchor=south] {\scriptsize $\lambda_8$};
        \draw (4,-2.5) node[anchor=south] {\scriptsize $\lambda_2$};
    \end{tikzpicture}
\end{subfigure}
\hspace*{\fill}
\caption{
\label{fig:lemma-blocking}
Illustration of~\cref{lem:block-obc} and \cref{lem:block-pbc} for
$n=3,\ p=4,\ q=2$. \textbf{(a)} The spin chain is blocked into
alternating blocks of $p$ and $q$ sites, respectively. The (translation
invariant) blocked Hamiltonian $\tilde h_{\tilde\imath}$ acts on two
consecutive $p$-blocks and the $q$-block inbetween, and contains all original
$n$-body Hamiltonian terms $h_i$ of $H_N=\sum h_i$ which lie inside that region
with a strictly positive weight $\lambda_\omega$ (with $\omega$ the offset).
\textbf{(b)} \cref{lem:block-obc}: Summation of all blocked terms, $\tilde
H_M=\tilde h_{\tilde\imath}$, must give the original Hamiltonian $H_N$ (or
something smaller). This implies that the weights $\lambda_\omega$ of the $h_i$
which are contained in overlapping $\tilde h_{\tilde\imath}$ must sum up to $1$
(or, again, something smaller). In the concrete example in the figure, this means
$\lambda_1+\lambda_7=1$, $\lambda_2+\lambda_8=1$, and
$\lambda_3=\dots=\lambda_6=1$, cf.\ \cref{eq:lemma-obc-lambdai}.} 
\end{figure}

\emph{Proof.---}The proof is straightforward and can be directly read off
\cref{fig:lemma-blocking}b: We can expand $\tilde H_M = \sum \tilde h_{\tilde\imath}
= \sum_i \gamma_i h_i$, and using the conditions on $\lambda_\omega$, in
particular \cref{eq:blockedham-obc}, we find that for PBC, $\gamma_i=1$ for all
$i$, while for OBC, $0<\gamma_i\le1$ (where $\gamma_i=1$ for all terms which don't
act on the $p$ sites at either boundary). $\gamma_i>0$ (which follows from
$\lambda_\omega>0$) implies that $\tilde H_M$ and $H_N$ have the same ground space.
$\gamma_i\le 1$ implies that $\tilde H_M\le H_N$; since they share the ground
space, this implies that 
\begin{equation}
\Delta(H_N)\ge\Delta(\tilde H_M)\ .
\end{equation}
Note that for PBC, we obtain the stronger statement that
$H_N=\tilde H_M$, and thus in particular their gaps are the same.
Moreover, since in \cref{eq:blockedham-obc}, all $\lambda_\omega>0$, $\tilde
h_{\tilde\imath}$ has the same ground space as $\sum_{i=1}^{2p+q-(n-1)} h_i\equiv
H_{2p+q}$, which is precisely the space spanned by the MPS.
\hspace*{\fill}$\square$

If we restrict to PBC, \cref{lem:block-obc} can be strengthened by summing
over all translations within a unit cell of $p+q$ unblocked sites.
\begin{lemma}
\label{lem:block-pbc}
Consider the setting of \cref{lem:block-obc} with PBC, with a blocked
Hamiltonian $\tilde H_M$, where the condition on the $\lambda_\omega>0$ is
changed to
\begin{equation}
\label{eq:lemma-pbc-lambdai}
\sum_{\omega=1}^{2p+q-(n-1)} \lambda_\omega = p+q\ .
\end{equation}

Then, $\tilde H_M$ and $H_N$ have the same ground space, and
\begin{equation}
\Delta(H_N)\ge\Delta(\tilde H_M)\ .
\end{equation}
Moreover, $\tilde H_M$ is frustration
free, and the ground space of its terms $\tilde h_{\tilde\imath}$ is
parametrized by the MPS from which $H$ has been obtained.
\end{lemma}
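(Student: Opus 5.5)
The strategy is to average the blocked Hamiltonian over all $p+q$ translations of the blocking pattern within one unit cell. This reduces the claim to \cref{lem:block-obc}-style comparisons. For a shift $s\in\{0,\dots,p+q-1\}$, let $\tilde H_M^{(s)}:=\sum_{\tilde\imath}\tilde h_{\tilde\imath+s}$, where the terms are built as in \cref{eq:blockedham-obc} but with $i_0=\tilde\imath(p+q)+s$. On a periodic chain with $N=(p+q)M$, each $\tilde H_M^{(s)}$ is unitarily equivalent to $\tilde H_M$ via the translation operator $U^s$. Hence all $\tilde H_M^{(s)}$ have the same spectrum, and in particular $\Delta(\tilde H_M^{(s)})=\Delta(\tilde H_M)$.

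The key computation is to expand the sum over shifts in terms of the original local terms:
\begin{equation}
\sum_{s=0}^{p+q-1}\tilde H_M^{(s)}=\sum_i\Big(\sum_{s,\tilde\imath,\omega:\ \tilde\imath(p+q)+s+\omega=i}\lambda_\omega\Big)h_i .
\end{equation}
For fixed $i$ and $\omega$, the index $\tilde\imath(p+q)+s$ is determined modulo $N$. Since $s$ ranges over a complete residue system modulo $p+q$, exactly one pair $(\tilde\imath,s)$ contributes. The coefficient of every $h_i$ is therefore $\sum_\omega\lambda_\omega=p+q$ by \cref{eq:lemma-pbc-lambdai}. This gives $\sum_s\tilde H_M^{(s)}=(p+q)H_N$. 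This index bookkeeping, in particular checking that the PBC wrap-around does not double count when $M\ge3$ and that each $\tilde h$ fits inside the chain, is the only delicate step. I expect it to be routine once one uses $2p+q-(n-1)\le N$.

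Ground spaces come next. Because $\lambda_\omega>0$, each $\tilde h_{\tilde\imath}$ has the same kernel as $H_{2p+q}$ on its support. This kernel is spanned by the MPS $\ket{\Psi_{2p+q}(A,X)}$, which is the final claim of the lemma. It also follows that $\tilde H_M$ is a sum of positive terms that all contain every $h_i$ with positive weight, since $2p+q-(n-1)\ge p+q$ offsets cover a whole unit cell. Hence $\ker\tilde H_M=\bigcap_i\ker h_i=\ker H_N$, and the same holds for each shift. Since $\ker H_N\supseteq\mathcal S_N\neq\{0\}$, this kernel is the ground space, and $\tilde H_M$ is frustration free.

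Finally, the gap bound follows. Let $P$ be the projector onto the common kernel $\mathcal K$. Each $\tilde H_M^{(s)}$ satisfies $\tilde H_M^{(s)}\ge\Delta(\tilde H_M)(\mathbb 1-P)$. Summing over $s$ gives $(p+q)H_N\ge(p+q)\Delta(\tilde H_M)(\mathbb 1-P)$. Because $\ker H_N=\mathcal K$, this yields $\Delta(H_N)\ge\Delta(\tilde H_M)$.
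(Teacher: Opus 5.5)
Your proof is correct and follows the paper's argument: averaging $\tilde H_M$ over the $p+q$ translations within a unit cell reproduces $H_N$ (each $h_i$ receives total weight $\tfrac{1}{p+q}\sum_\omega\lambda_\omega=1$), and since all translates share the kernel of $H_N$, the gap of the average is at least the common gap $\Delta(\tilde H_M)$ of the translates. One small slip: with periodic boundaries, $\ker H_N$ does not contain all of $\mathcal S_N$ but only, e.g., $\ket{\Psi_N(A,\mathbb{1})}$; this still suffices to show the kernel is nonzero.
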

Just as for \cref{lem:block-obc}, the condition \cref{eq:lemma-pbc-lambdai} can
be relaxed to $\sum \lambda_\omega\le p+q$, but equality will always give the best
bounds on the gap. Clearly, a canonical choice for $\lambda_\omega$ is $\lambda_\omega = 
\tfrac{p+q}{2p+q-(n-1)}$,
in which case $\tilde h_{\tilde\imath} = 
\tfrac{p+q}{2p+q-(n-1)} H_{2p+q}$.

\emph{Proof.---}Consider $\hat H:=\tfrac{1}{p+q}\sum_{j=0}^{p+q-1}\tau_j(\tilde
H_M)$, with $\tau_j$ the operator which translates by $j$ sites on the original
(unblocked) lattice: One can easily verify that in $\hat H$, each term $h_i$
appears with the total weight
$\tfrac{1}{p+q}\sum_{\omega=1}^{2p+q-(n-1)}\lambda_\omega=1$, since
$\sum_j\tau_j(\tilde H_M)$ adds all translations \emph{within} a block of size
$p+q$, while $\tilde H_M$ is a sum over all translations by multiples of $p+q$ (recall that we assume PBC).
Moreover, since $\lambda_\omega>0$, $\tilde H_M$ and $H_N$ have the same ground
space; and since this ground space is translation invariant, $\hat H$ and $H_N$
have the same ground space. Thus, 
\begin{equation}
    \Delta(H_N) = \Delta(\hat H) \ge 
    \frac{1}{p+q} \sum_{j=0}^{p+q-1}\Delta(\tau_j(\tilde H_M))
    = \Delta(\tilde H_M)\ .
\end{equation}
\hspace*{\fill}$\square$

\subsection{The spectral gap of the blocked Hamiltonian}

Having established the relation between the spectral gap of the original and a
blocked Hamiltonian $\tilde H_M$, let us now turn towards the second step:
Bounding the gap of $\tilde H_M$ independent of the system size (and thus in the
thermodynamic limit). Why did we require the initial blocking? First, the
blocked Hamiltonian is nearest neighbor, that is, each Hamiltonian term $\tilde
h_{\tilde\imath}$ only overlaps with the adjacent term to its left and right --
though the subsequent results can be generalized to not rely on it. Second, and
that is the important point, by increasing the block sizes $p$ and $q$ we obtain improved bounds on the gap, which in particular also allows to prove that every parent Hamiltonian is gapped, and the obtained bound approaches the true value of the gap.

This section consists of two parts: \cref{theorem:gap-from-gamma} relates the
spectral gap of $\tilde H_M$ in the thermodynamic limit to a local inequality
which depends on a parameter $\gamma_{p,q}$. \cref{theorem:gamma-exact} provides
a way to explicitly compute $\gamma_{p,q}$. Combined with either
\cref{lem:block-obc} or \cref{lem:block-pbc}, this allows to obtain explicit
lower bounds on the gap of the original Hamiltonian $H_N$.

\begin{theorem}
\label{theorem:gap-from-gamma}
Consider a blocked Hamiltonian $\tilde H_M=\sum \tilde h_{\tilde\imath}$
(for OBC, $M\ge1$, for PBC, $M\ge 3$),  with
alternating blocks of $p$ and $q$ sites, as obtained from
Lemma~\ref{lem:block-obc} or~\ref{lem:block-pbc}, such that each $\tilde
h_{\tilde\imath}$ only overlaps with its directly neighboring terms.
Let $\Pi_{2p+q}$ denote the orthogonal projection onto $\mathcal S_{2p+q}=\ker \tilde h_{\tilde\imath} = \ker H_{2p+q}$,  and $\Pi_{2p+q}^{\perp} = (\mathbb{1}-\Pi_{2p+q})$. 
Let $Q:=(\Pi_{2p+q}^{\perp} \otimes \mathbb{1}_{q+p})$ and $R:= (\mathbb{1}_{p+q}
\otimes \Pi_{2p+q}^{\perp})$. Then, for any $\gamma_{p,q}\ge0$ such that
\begin{equation}
\label{eq:PQ-QP-condition}
QR+RQ+\gamma_{p,q}(Q+R) \ge 0\ ,
\end{equation}
\begin{equation}
\label{gapeq}
    \varepsilon := 
    (1-2\gamma_{p,q})\,
    \Delta(\tilde h_{\tilde\imath})
\end{equation}
is a uniform lower bound on the spectral gap of $\tilde H_M$, 
$\Delta(\tilde H_M)\ge \varepsilon$,
both for OBC and for PBC. 
\end{theorem}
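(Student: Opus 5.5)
The plan is to reduce the statement to an operator inequality for a sum of projectors, using the standard ``$S^2\ge cS$'' argument. Write $Q_{\tilde\imath}$ for the projector $\Pi_{2p+q}^\perp$ placed on the support of $\tilde h_{\tilde\imath}$, i.e.\ onto the orthogonal complement of $\ker\tilde h_{\tilde\imath}$, and set $S:=\sum_{\tilde\imath}Q_{\tilde\imath}$.

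First, I compare $\tilde H_M$ with $S$. Since $\tilde h_{\tilde\imath}\ge0$ has kernel $\mathcal S_{2p+q}$, and its smallest nonzero eigenvalue is $\Delta(\tilde h_{\tilde\imath})$, we have $\tilde h_{\tilde\imath}\ge \Delta(\tilde h_{\tilde\imath})\,Q_{\tilde\imath}$. Summing gives $\tilde H_M\ge \Delta(\tilde h_{\tilde\imath})\,S$, using that all terms are translates with the same gap. Moreover $\ker S=\bigcap_{\tilde\imath}\ker Q_{\tilde\imath}=\bigcap_{\tilde\imath}\ker\tilde h_{\tilde\imath}=\ker\tilde H_M$, where the last equality holds because all terms are positive semidefinite. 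Hence, on the orthogonal complement of the common kernel, $\tilde H_M\ge \Delta(\tilde h_{\tilde\imath})\,\lambda_{\min}^{\ne0}(S)$. It therefore suffices to show that every nonzero eigenvalue of $S$ is at least $1-2\gamma_{p,q}$. The case $\gamma_{p,q}\ge\tfrac12$ is trivial, since then $\varepsilon\le0$.

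Second, I expand $S^2=S+\sum_{\tilde\imath\ne\tilde\jmath}Q_{\tilde\imath}Q_{\tilde\jmath}$ and split the cross terms into adjacent and non-adjacent pairs.
\begin{itemize}
\item For non-adjacent pairs, the supports are disjoint by the nearest-neighbour assumption. The projectors therefore commute, and $Q_{\tilde\imath}Q_{\tilde\jmath}+Q_{\tilde\jmath}Q_{\tilde\imath}=2Q_{\tilde\imath}Q_{\tilde\jmath}\ge0$, because a product of commuting projectors is again a projector.
\item For each adjacent pair $(\tilde\imath,\tilde\imath+1)$, up to tensoring with identities, the pair $(Q_{\tilde\imath},Q_{\tilde\imath+1})$ is exactly $(Q,R)$ of the statement. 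Hypothesis \eqref{eq:PQ-QP-condition} thus gives $Q_{\tilde\imath}Q_{\tilde\imath+1}+Q_{\tilde\imath+1}Q_{\tilde\imath}\ge-\gamma_{p,q}(Q_{\tilde\imath}+Q_{\tilde\imath+1})$.
\end{itemize}
Summing over adjacent pairs, each $Q_{\tilde\imath}$ appears in at most two such pairs: exactly two for PBC, where $M\ge3$ guarantees the pairs $(\tilde\imath,\tilde\imath\pm1)$ are distinct, and at most two for OBC. This yields $S^2\ge S-2\gamma_{p,q}S=(1-2\gamma_{p,q})S$.

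Finally, $S$ is self-adjoint, so for any eigenvector with eigenvalue $s$ the inequality gives $s^2\ge(1-2\gamma_{p,q})s$. Hence either $s=0$ or $s\ge1-2\gamma_{p,q}$. Combined with the first step, every nonzero eigenvalue of $\tilde H_M$ is at least $(1-2\gamma_{p,q})\Delta(\tilde h_{\tilde\imath})=\varepsilon$, uniformly in $M$.

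The only delicate point I expect is the bookkeeping of how often each $Q_{\tilde\imath}$ enters the adjacent-pair sum. This is what produces the factor $2\gamma_{p,q}$, and it is why $M\ge3$ is needed in the periodic case: for $M=2$, the two terms would overlap on both sides and the nearest-neighbour structure would break.
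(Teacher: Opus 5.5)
Your proof is correct and follows essentially the same route as the paper. You bound $\tilde H_M \ge \Delta(\tilde h_{\tilde\imath})\sum_{\tilde\imath}\Pi^\perp_{\tilde\imath}$, expand the square of this projector sum, keep the non-overlapping cross terms as positive, apply the hypothesis on $Q,R$ to each adjacent pair (each projector enters at most twice, and at the OBC boundary only once), and conclude $S^2\ge(1-2\gamma_{p,q})S$ — exactly the paper's argument, with your explicit check $\ker S=\ker\tilde H_M$ making precise a step the paper uses implicitly.
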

Some remarks: (i) By virtue of Lemmas~\ref{lem:block-obc} and
\ref{lem:block-pbc}, this also provides a lower bound on the spectral gap of the
unblocked Hamiltonian; (ii) In \cref{gapeq}, a lower bound on $\Delta(\tilde
h_{\tilde\imath})$ and an upper bound on $\gamma_{p,q}$ (i.e., a non-optimal $\gamma_{p,q}$ in \cref{eq:PQ-QP-condition}) are sufficient; (iii)
$\Delta(\tilde h_{\tilde\imath})>0$ by definition of the gap. Variants of this
theorem, and its proof, date back to the work of
FNW~\cite{fannes_finitely_1992}.

\emph{Proof.---}We will use the shorthand $P:=\Pi_{2p+q}^\perp$. 
Define  $\sum P_{\tilde\imath}=:\tilde K_M$, where
$P_{\tilde\imath}$ acts starting at the blocked site $\tilde\imath$. 
Then,
\begin{equation}
\begin{aligned}
(\tilde K_M)^2-(1-2\gamma_{p,q})\tilde K_M &= 
\sum_{\tilde\imath}  P_{\tilde\imath}^2
    +\sum_{|\tilde\imath-\tilde\jmath|=1}  
    P_{\tilde\imath} P_{\tilde\jmath} 
 +\sum_{|\tilde\imath-\tilde\jmath|>1} P_{\tilde\imath} P_{\tilde\jmath}
- \sum_{\tilde\imath} (1-2\gamma_{p,q}) P_{\tilde\imath}
\\
&\stackrel{(\star)}{\ge} \sum_{\tilde\imath} 2\gamma_{p,q}  P_{\tilde\imath}
    +\sum_{|\tilde\imath-\tilde\jmath|=1}  
    P_{\tilde\imath} P_{\tilde\jmath} 
\\
&\stackrel{(\dagger)}{\ge}
    \sum_{\tilde\imath=\tilde\jmath+1}  
    P_{\tilde\imath} P_{\tilde\jmath} 
    +P_{\tilde\jmath} P_{\tilde\imath} 
    + \gamma_{p,q}(P_{\tilde\imath}+P_{\tilde\jmath})
\stackrel{\eqref{eq:PQ-QP-condition}}{\ge} 0\ ,
\end{aligned}
\end{equation}
where in $(\star)$, we have used $P_{\tilde\imath}^2=P_{\tilde\imath}$ and 
that for $|\tilde\imath-\tilde\jmath|>1$, $P_{\tilde\imath}P_{\tilde\jmath}\ge0$  as the projectors don't overlap, and $(\dagger)$ is an equality for PBC, while for OBC, we drop a term $\gamma_{p,q}P_{\tilde\imath}\ge0$ for $\tilde\imath$ at the left and right boundary.  
This is equivalent to the statement that $\tilde K_M$ has no eigenvalues between
$0$ and $(1-2\gamma_{p,q})$, i.e.\ $\Delta(\tilde K_M)\ge 1-2\gamma_{p,q}$.
Since $\tilde h_{\tilde\imath}\ge \Delta(\tilde h_{\tilde \imath}) \Pi^\perp_{2p+q}$,
$\tilde H_M = \sum \tilde h_{\tilde\imath} \ge \Delta(\tilde h_{\tilde\imath})\tilde
K_M$, and thus $\Delta(\tilde H_M)\ge \Delta(\tilde h_{\tilde\imath}) \Delta(\tilde K_M)
\ge \varepsilon$.
\hspace*{\fill}$\square$

Note that the condition \cref{eq:PQ-QP-condition} is monotonous in $\gamma_{p,q}$, i.e.,\ if it
holds for some $\gamma_{p,q}$, it also holds for any larger value, and that it always holds for $\gamma_{p,q}=1$ (where the l.h.s.\ becomes $(Q+R)^2$). Since the gap bound becomes non-trivial
only for $\gamma_{p,q}<1/2$ and improves with decreasing $\gamma_{p,q}$, we are
generally looking for the optimal, i.e., smallest $\gamma_{p,q}$. 

Naively, the effort for computing the optimal $\gamma_{p,q}$ will scale with the
underlying Hilbert space dimension, since we have to check positivity of
\cref{eq:PQ-QP-condition}, and thus exponentially with $p$ and $q$.
The key result which we present in the following theorem is that the optimal
$\gamma_{p,q}$ can be computed with an effort which scales linearly with $p$ and
$q$, and polynomially in the bond dimension $D$ of the MPS.

\begin{theorem}[Optimal $\bm{\gamma_{p,q}}$]
\label{theorem:gamma-exact}
Consider an MPS with tensor $A$, and a blocked Hamiltonian $\tilde H_M$ whose
ground space on any number of blocks is spanned by this MPS; this is in
particular the case if $A$ is normal or block-normal (cf.~\cref{sec:parenthams})
and $\tilde H_M$ has been obtained from blocking a parent Hamiltonian of the MPS.
Define  the $D^4 \times D^4$ matrix 
\begin{equation}
\label{eq:thm-gammaexact-defM}
\Xi:=\ 
    \begin{tikzpicture}[baseline]={([yshift=-.5ex]current bounding box.center)}]
            \draw (0,0) rectangle (3.125,0.5);
            \draw (1.5625,0.25) node[]{\small $T_{2p+q}^{-1}$};
            \draw (0.25,0.5) -- (0.25,0.75) -- (0.5,0.75);
            \draw (0.5,0.625) rectangle (1.375,1.125);
            \draw (0.9375,0.875) node[]{\small $\mathcal{T}^{p+q}$};
            \draw (0.5,1) -- (0.25,1) -- (0.25,1.5);
            \draw (1.375,1) -- (1.625,1) -- (1.625,1.5);
            \draw (1.375,0.75) -- (2.125,0.75);
            \draw (2.125,0.625) rectangle (2.625,1.125);
            \draw (2.375,0.875) node[]{\small $\mathcal{T}^{p}$};
            \draw (2.875,0.5) -- (2.875,0.75) -- (2.625,0.75);
            \draw (2.125,1) -- (1.875,1) -- (1.875,1.5);
            \draw (2.625,1) -- (3.375,1);
            \draw (0.25,0) -- (0.25,-1.5);
            \draw (2.875,0) -- (2.875,-0.25) -- (2.625,-0.25);
            \draw (2.625,-0.125) rectangle (2.125,-0.625);
            \draw (2.375,-0.375) node[]{\small $\mathcal{T}^p$};
            \draw (2.125,-0.25) -- (1.0625,-0.25) -- (1.0625,-1.5);
            \draw (2.125,-0.5) -- (1.875,-0.5) -- (1.875,-0.75);
            \draw (2.625,-0.5) -- (3.375,-0.5);
            \draw (1.625,-0.75) rectangle(4.75,-1.25);
            \draw (3.1875,-1) node[]{\small $T_{2p+q}^{-1}$};
            \draw (1.875,-1.25) -- (1.875,-1.5);
            \draw (4.5,-1.25) -- (4.5,-1.5);
            \draw (3.375,-0.625) rectangle (4.25,1.125);
            \draw (3.8125,0.25) node[]{\small $\mathcal{T}^{p+q}$};
            \draw (4.25,-0.5) -- (4.5,-0.5) -- (4.5,-0.75);
            \draw (4.25,1) -- (4.5,1) -- (4.5,1.5);
        \end{tikzpicture} \ ,
\end{equation}
where $T_{2p+q}^{-1}$ is the pseudoinverse (i.e., the
inverse on the support) of $T_{2p+q}$. Then, $\Xi$ has real non-negative spectrum, its leading
eigenvalue is $\lambda_1(\Xi)=1$, and its degeneracy equals the ground state
degeneracy of $\tilde h_{\tilde\imath}+\tilde h_{\tilde\imath+1}$. 

The optimal $\gamma_{p,q}$ in \cref{theorem:gap-from-gamma}, \cref{eq:PQ-QP-condition},  is
then $\gamma_{p,q}=\sqrt{\lambda_2(\Xi)}$, where $\lambda_2(\Xi)$ is the largest
eigenvalue of $\Xi$ distinct from $\lambda_1(\Xi)=1$.
\end{theorem}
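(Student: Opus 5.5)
The plan is to reduce the operator inequality \cref{eq:PQ-QP-condition} to a statement about the angle between two subspaces. Then I would show that the relevant angle is governed by a finite-dimensional operator that is similar to $\Xi$.

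\emph{Step 1 (projector geometry).} Write $E:=\mathbb 1-Q=\Pi_{2p+q}\otimes\mathbb 1_{p+q}$ and $F:=\mathbb 1-R=\mathbb 1_{p+q}\otimes\Pi_{2p+q}$. Let $E\wedge F$ be the projector onto $\operatorname{ran}E\cap\operatorname{ran}F$, which is the ground space of $\tilde h_{\tilde\imath}+\tilde h_{\tilde\imath+1}$.

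By Jordan's lemma (Halmos' two-projection theory), the Hilbert space decomposes into joint invariant subspaces of $Q$ and $R$ of dimension at most two. In each $2$-dimensional block, $Q$ and $R$ are rank-one projectors onto lines at an angle $\theta$. There, a direct computation shows that $QR+RQ+\gamma(Q+R)\ge0$ holds if and only if $\gamma\ge|\cos\theta|$: the eigenvalues of the block are $(1+\gamma)(\cos\theta\pm 1)\cdot\ldots$, and positivity reduces to $\gamma\ge\cos\theta$. The $1$-dimensional blocks impose no constraint for $\gamma\ge0$.

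The cosines of the angles in the $2$-dimensional blocks are the same for the pair $(Q,R)$ as for $(E,F)$. Hence the optimal value is
\[
\gamma_{p,q}=\|EF-E\wedge F\|,\qquad \gamma_{p,q}^2=\max\big(\operatorname{spec}(EFE)\setminus\{1\}\big).
\]
The largest eigenvalue $1$ of $EFE$ has eigenspace $\operatorname{ran}(E\wedge F)$.

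\emph{Step 2 (reduction to $D^4$ dimensions).} Parametrize $\operatorname{ran}F$ by vectors $\phi\otimes\ket{\Psi_{2p+q}(A,Y)}$, and note what $E$ does to them. Since $E$ projects the first $2p+q$ sites onto $\mathcal S_{2p+q}$, every vector in $\operatorname{ran}(EF)$ is of the form
\[
\sum Z_{ab,cd}\,\big[\ket{\Psi_{2p+q}}\text{ with open bonds }a,b\big]\otimes\big[A^{i_1}\cdots A^{i_{p+q}}\big]_{cd}\ket{i_1\dots i_{p+q}} ,
\]
with $Z\in\mathbb C^{D^4}$. The nonzero spectrum of $EFE$ coincides with that of $EF E F E$, and hence with that of the restriction of $EFE$ to this $D^4$-parametrized space $\mathcal V$. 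Concretely, let $V:\mathbb C^{D^4}\to\mathcal H$ be the parametrization map, and define $V'$ analogously for the mirrored space in $\operatorname{ran}(FE)$. Then write
\[
E|_{\mathcal V}=V(V^\dagger V)^{+}V^\dagger, \qquad F|_{\mathcal V'}=V'(V'^\dagger V')^{+}V'^\dagger .
\]
The Gram matrices $V^\dagger V$ are built from $T_{2p+q}$ (giving the pseudoinverses $T_{2p+q}^{-1}$) and from transfer matrices $\mathcal T^{p+q}$. The cross overlaps $V^\dagger V'$ are contractions of $\mathcal T^{p}$ on the shared $p$ sites together with $\mathcal T^{p+q}$ on the outer pieces.

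Assembling
\[
(V^\dagger V)^{+}V^\dagger V'(V'^\dagger V')^{+}V'^\dagger V
\]
and cancelling the factors that act as identities on the supports, I expect to obtain exactly the diagram $\Xi$ of \cref{eq:thm-gammaexact-defM}, up to a similarity transformation.

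\emph{Step 3 (conclusions).} Since $\Xi$ is similar, on its support, to the positive operator $EFE$ restricted to $\mathcal V$, its spectrum is real and non-negative and bounded by $1$. The eigenvalue $1$ is attained exactly on $\operatorname{ran}(E\wedge F)$, so its multiplicity equals the ground-state degeneracy of $\tilde h_{\tilde\imath}+\tilde h_{\tilde\imath+1}$; this uses the assumption that the ground space on any number of blocks is spanned by the MPS. Combining with Step 1 gives $\gamma_{p,q}=\sqrt{\lambda_2(\Xi)}$.

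The main obstacle is Step 2. It requires a careful bookkeeping of the tensor contractions: making sure the parametrization map is handled correctly when it is not injective (hence the pseudoinverse, and the need to check that the kernels of $T_{2p+q}$ do not create spurious eigenvalues), and that the resulting product collapses precisely to the stated diagram.
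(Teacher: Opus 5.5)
Your outline is correct and follows the paper's proof: Jordan's lemma reduces the optimal $\gamma_{p,q}$ to $\sqrt{\lambda_2}$ of $\hat\Xi=EF$, and writing $\Pi_{2p+q}=\mathcal P T_{2p+q}^{-1}\mathcal P^\dagger$ together with cyclicity of the nonzero spectrum collapses this to the $D^4\times D^4$ matrix $\Xi$. The paper's reduction is slightly more direct than your two-subspace version. It factors $EF=XY$, with $Y$ the bra MPS on the outer $p+q$ sites tensored with $\mathcal P^\dagger$ on the other $2p+q$ sites (this is your $V'^\dagger$), and cycles $Y$ to the other side, so no $T_{p+q}^{-1}$ Gram factors ever appear. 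If you keep your route, a few points need fixing. The identity you actually need is $EF=P_{\mathcal V}P_{\mathcal V'}$, which holds since $\mathrm{ran}(EF)\subseteq\mathcal V\subseteq\mathrm{ran}\,E$ and $\mathrm{ran}(FE)\subseteq\mathcal V'\subseteq\mathrm{ran}\,F$; your formulas for $E|_{\mathcal V}$ and $F|_{\mathcal V'}$ only say that both sides are the identity on those subspaces. The $EFEFE$ remark is wrong, since that operator's spectrum consists of the squares, but you do not need it. Finally, what is preserved is the nonzero spectrum with multiplicities, not a full similarity.
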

A key point of this theorem is that it reduces \cref{eq:PQ-QP-condition} -- which involves computing properties of an operator on a space whose dimension depends (exponentially) on $p$ and $q$ -- to the analysis of an operator $\Xi$ which resides on a space of fixed dimension, independent of $p$ and $q$.

\noindent\emph{Proof.---}We want to find the optimal $\gamma_{p,q}$ such that \cref{eq:PQ-QP-condition},
\begin{equation}
QR+RQ+\gamma_{p,q}(Q+R) \ge 0\ ,
\tag{\ref{eq:PQ-QP-condition}}
\end{equation}
holds, where $Q$ and $R$ are projectors onto the orthogonal complement of the span of the MPS $A$ on $p+q+p$
sites, overlapping on the central $p$-site block. To this end, we use two facts:
Jordan's Lemma, and  the cyclicity of the spectrum. Jordan's Lemma states that
for two Hermitian projectors $Q$ and $R$, there exists a basis in which they are
jointly block-diagonal in $1\times 1$ and $2\times 2$ blocks: $Q=\bigoplus_i
Q^{(i)}$, $R=\bigoplus_i R^{(i)}$. Then, $1\times 1$ blocks are necessarily all
either $0$ or $1$, while the $2\times 2$ blocks can be further rotated to orthogonal
projectors onto $(1,0)^T$ and $(\cos\theta_i,\sin\theta_i)^T$ (the
$\theta_i$ are the principal angles between the subspaces onto which $Q$ and $R$
project). 

Let $\hat \Xi:=Q^\perp R^\perp$. Then, $\mathrm{spec}(\hat \Xi)=\mathrm{spec}(\sqrt{Q^\perp}R^\perp\sqrt{Q^\perp})\ge 0$, $\|\hat \Xi\|\le
\|Q^\perp\|,\|R^\perp\|=1$, and $\ker(\tilde h_{\tilde \imath}+\tilde h_{\tilde
\imath+1})=\mathrm{Im}\,Q^\perp\cap \mathrm{Im}\,R^\perp$ is precisely the eigenspace with
leading eigenvalue $+1$ of $\hat \Xi$. Working in the basis which
block-diagonalizes $Q$ and $R$,  it is now straightforward to check that the
optimal $\gamma_{p,q}$ in \cref{eq:PQ-QP-condition} equals $\sqrt{\lambda_2(\hat
\Xi)}$ -- one can just check this relation for each $2 \times 2$ block separately, and observe
that the block with the largest $\gamma_{p,q}$ and the largest $\lambda_2(Q^{(i)} R^{(i)})$
coincide. For completeness, we give a detailed derivation in \cref{appendix:exact-gamma-details}.

$\Pi_{2p+q}$ 
is the projector onto $\mathrm{Im}\,\mathcal P$, with
$\mathcal P:X\mapsto\ket{\Psi_{2p+q}(A,X)}$, cf.\ \cref{eq:Psi-n-M-X}, the map from
virtual to physical indices on $2p+q$ sites, and thus $\Pi_{2p+q} =
{\mathcal P}({\mathcal P}^\dagger {\mathcal P})^{-1}{\mathcal P}^\dagger$, 
where $(\,\cdot\,)^{-1}$ denotes the pseudoinverse. 
In terms of tensor diagrams, 
\begin{equation}
   \Pi_{2p+q}  = \begin{tikzpicture}[baseline={([yshift=-.5ex]current bounding box.center)}]
        \draw (-0.75,0.25) rectangle (0.75,0.75);
        \draw (-0.5,0.25) -- (-0.5,0);
        \draw (0.5,0.25) -- (0.5,0);
        \draw (-0.5,0.75) -- (-0.5,1);
        \draw (0.5,0.75) -- (0.5,1);
        \draw (-0.5,1) -- (0.5,1);
        \draw (-0.5,0) -- (0.5,0);
        \draw (0,0.5) node[]{\small $T_{2p+q}^{-1}$};
        \draw (-0.375,1) -- (-0.375,1.25);
        \draw (0.375,1) -- (0.375,1.25);
        \draw (-0.375,0) -- (-0.375,-0.25);
        \draw (0.375,0) -- (0.375,-0.25);
        \draw[fill = white] (-0.375,0) circle (2pt);
        \draw[fill = white] (-0.375,1) circle (2pt);
        \draw[fill = white] (0.375,0) circle (2pt);
        \draw[fill = white] (0.375,1) circle (2pt);
        \fill (-0.375,0) circle (1.25pt);
        \fill (0.375,0) circle (0.75pt);
        \fill (-0.375,1) circle (1.25pt);
        \fill (0.375,1) circle (0.75pt);
        \end{tikzpicture} =
        \begin{tikzpicture}[baseline={([yshift=-.5ex]current bounding box.center)}]
        \draw (-0.75,0.25) rectangle (0.75,0.75);
        \draw (-0.5,0.25) -- (-0.5,0);
        \draw (0.5,0.25) -- (0.5,0);
        \draw (-0.5,0.75) -- (-0.5,1);
        \draw (0.5,0.75) -- (0.5,1);
        \draw (-0.5,1) -- (0.5,1);
        \draw (-0.5,0) -- (0.5,0);
        \draw (0,0.5) node[]{\small $T_{2p+q}^{-1}$};
        \draw (-0.375,1) -- (-0.375,1.25);
        \draw (0.375,1) -- (0.375,1.25);
        \draw (-0.375,0) -- (-0.375,-0.25);
        \draw (0.375,0) -- (0.375,-0.25);
        \draw[fill = white] (-0.375,0) circle (2pt);
        \draw[fill = white] (-0.375,1) circle (2pt);
        \draw[fill = white] (0.375,0) circle (2pt);
        \draw[fill = white] (0.375,1) circle (2pt);
        \fill (-0.375,1) circle (0.75pt);
        \fill (0.375,1) circle (1.25pt);
        \fill (-0.375,0) circle (0.75pt);
        \fill (0.375,0) circle (1.25pt);
        \end{tikzpicture} \ ,
\end{equation}
where the dot symbols denote blocked tensors on $p+q$ and $p$ sites, respectively:
\begin{equation}\label{eq:blocked_mps}
    \begin{tikzpicture}[baseline={([yshift=-.5ex]current bounding box.center)}]
        \draw (0,0) -- (0,0.4);
        \draw (-0.4,0) -- (0.4,0);
        \draw[fill=white] (0,0) circle (2pt);
        \fill (0,0) circle (1.25pt);
    \end{tikzpicture} \coloneq 
    \begin{tikzpicture}[baseline={([yshift=.9ex]current bounding box.center)}]
        \draw (-0.4,0) -- (2.4,0);
        \foreach \x in {0,0.8,2}
            {\draw (\x,0) -- (\x,0.4);
            \fill (\x,0) circle (2pt);};
        \draw (1.4,0.2) node[] {$\dots$};
        \draw[decorate,decoration={brace,mirror}] (-0.1,-0.1) -- (2.1,-0.1);
        \draw (1,-0.3) node[] {\scriptsize $p+q$~\text{times}};
    \end{tikzpicture}\ , \quad \begin{tikzpicture}[baseline={([yshift=-.5ex]current bounding box.center)}]
        \draw (0,0) -- (0,0.4);
        \draw (-0.4,0) -- (0.4,0);
        \draw[fill=white] (0,0) circle (2pt);
        \fill (0,0) circle (0.75pt);
    \end{tikzpicture} \coloneq 
    \begin{tikzpicture}[baseline={([yshift=.9ex]current bounding box.center)}]
        \draw (-0.4,0) -- (2.4,0);
        \foreach \x in {0,0.8,2}
            {\draw (\x,0) -- (\x,0.4);
            \fill (\x,0) circle (2pt);};
        \draw (1.4,0.2) node[] {$\dots$};
        \draw[decorate,decoration={brace,mirror}] (-0.1,-0.1) -- (2.1,-0.1);
        \draw (1,-0.3) node[] {\scriptsize $p$~\text{times}};
    \end{tikzpicture}\ .
\end{equation}
In order to compute $\gamma_{p,q}^2=\lambda_2(\hat \Xi)$, $\hat\Xi=
(\Pi_{2p+q}\otimes \openone_{p+q}) (\openone_{p+q}\otimes
\Pi_{2p+q})$, we can now use that the non-zero part of the spectrum is
cyclic, and find
\begin{equation}
\gamma_{p,q}^2=
 \lambda_2 \left( \begin{tikzpicture}[baseline={([yshift=-.5ex]current bounding box.center)}]
        \draw (-0.75,0.25) -- (0.75,0.25);
        \draw (-0.75,0.25) -- (-0.75,0.75);
        \draw (-0.75,0.75) -- (0.75,0.75);
        \draw (0.75,0.75) -- (0.75,0.25);
        \draw (-0.5,0.25) -- (-0.5,0);
        \draw (0.5,0.25) -- (0.5,0);
        \draw (-0.5,0.75) -- (-0.5,1);
        \draw (0.5,0.75) -- (0.5,1);
        \draw (-0.5,1) -- (0.5,1);
        \draw (-0.5,0) -- (0.5,0);
        \draw (0,0.5) node[]{\small $T_{2p+q}^{-1}$};
        \draw (-0.375,1) -- (-0.375,1.25);
        \draw (0.375,1) -- (0.375,2.5);
        \draw (-0.375,0) -- (-0.375,-0.25);
        \draw (0.375,0) -- (0.375,-0.25);
        \draw (-1.25,1.25) -- (-0.25,1.25);
        \draw (-1.25,1.25) -- (-1.25,1.5);
        \draw (-0.25,1.25) -- (-0.25,1.5);
        \draw (-1.5,1.5) -- (0,1.5);
        \draw (-1.125,1.25) -- (-1.125,-0.25);
        \draw (-1.5,1.5) -- (-1.5,2);
        \draw (0,1.5) -- (0,2);
        \draw (-1.5,2) -- (0,2);
        \draw (-0.75,1.75) node[]{\small $T_{2p+q}^{-1}$};
        \draw (-1.25,2) -- (-1.25,2.25);
        \draw (-0.25,2) -- (-0.25,2.25);
        \draw (-1.25,2.25) -- (-0.25,2.25);
        \draw (-1.125,2.25) -- (-1.125,2.5);
        \draw (-0.375,2.25) -- (-0.375,2.5);
        \draw[blue, dashed, very thick] (0.625,0.125) -- (-0.875,0.125);
        \draw[blue, dashed, very thick] (-0.875,0.125) -- (-0.875,1.375);
        \draw[blue, dashed, very thick] (-0.875,1.375) -- (-1.375,1.375);
        \draw[blue, dashed, very thick] (0.625,0.125) -- (0.625,-0.375);
        \draw[blue, dashed, very thick]
        (0.625,-0.375) -- (-1.375,-0.375);
        \draw[blue, dashed, very thick]
        (-1.375,-0.375) -- (-1.375,1.375);
        \draw[fill=white] (-0.375,0) circle (2pt);
        \draw[fill=white] (-0.375,1) circle (2pt);
        \draw[fill=white] (-0.375,1.25) circle (2pt);
        \draw[fill=white] (-0.375,2.25) circle (2pt);
        \draw[fill=white] (0.375,0) circle (2pt);
        \draw[fill=white] (0.375,1) circle (2pt);
        \draw[fill=white] (-1.125,1.25) circle (2pt);
        \draw[fill=white] (-1.125,2.25) circle (2pt);
        \fill (-0.375,1) circle (0.75pt);
        \fill (0.375,1) circle (1.25pt);
        \fill (-0.375,0) circle (0.75pt);
        \fill (0.375,0) circle (1.25pt);
        \fill (-0.375,1.25) circle (0.75pt);
        \fill (-1.125,1.25) circle (1.25pt);
        \fill (-1.125,2.25) circle (1.25pt);
        \fill (-0.375,2.25) circle (0.75pt);
        \end{tikzpicture}\right) = \lambda_2\left( \begin{tikzpicture}[baseline={([yshift=-.5ex]current bounding box.center)}]
        \draw (-0.75,0.25) -- (0.75,0.25);
        \draw (-0.75,0.25) -- (-0.75,0.75);
        \draw (-0.75,0.75) -- (0.75,0.75);
        \draw (0.75,0.75) -- (0.75,0.25);
        \draw (-0.5,0.25) -- (-0.5,0);
        \draw (0.5,0.25) -- (0.5,0);
        \draw (-0.5,0.75) -- (-0.5,1);
        \draw (0.5,0.75) -- (0.5,1);
        \draw (-0.5,1) -- (0.5,1);
        \draw (-0.5,2.5) -- (0.5,2.5);
        \draw (0,0.5) node[]{\small $T_{2p+q}^{-1}$};
        \draw (-0.375,1) -- (-0.375,1.25);
        \draw (0.375,1) -- (0.375,2.5);
        \draw (-0.5,2.5) -- (-0.5,2.75);
        \draw (0.5,2.5) -- (0.5,2.75);
        \draw (-1,1.25) -- (-0.25,1.25);
        \draw (-1,1.25) -- (-1,0);
        \draw (-1.25,0) -- (-1.25,1.5);
        \draw (-0.25,1.25) -- (-0.25,1.5);
        \draw (-1.5,1.5) -- (0,1.5);
        \draw (-1.25,2.5) -- (-1,2.5);
        \draw (-1.25,2.5) -- (-1.25,2.75);
        \draw (-1,2.5) -- (-1,2.75);
        \draw (-1.5,1.5) -- (-1.5,2);
        \draw (0,1.5) -- (0,2);
        \draw (-1.5,2) -- (0,2);
        \draw (-0.75,1.75) node[]{\small $T_{2p+q}^{-1}$};
        \draw (-1.25,2) -- (-1.25,2.25);
        \draw (-0.25,2) -- (-0.25,2.25);
        \draw (-1.25,2.25) -- (-0.25,2.25);
        \draw (-1.125,2.25) -- (-1.125,2.5);
        \draw (-0.375,2.25) -- (-0.375,2.5);
        \draw[dashed, blue, very thick] (-1.375,2.375) -- (0.625,2.375);
        \draw[dashed, blue, very thick] (-1.375,2.375) -- (-1.375,2.875);
        \draw[dashed, blue, very thick] (-1.375,2.875) -- (0.625,2.875);
        \draw[dashed, blue, very thick] (0.625,2.875) -- (0.625,2.375);
        \draw[fill=white] (-0.375,2.5) circle (2pt);
        \draw[fill=white] (-0.375,1.25) circle (2pt);
        \draw[fill=white] (-0.375,1) circle (2pt);
        \draw[fill=white] (-0.375,2.25) circle (2pt);
        \draw[fill=white] (0.375,2.5) circle (2pt);
        \draw[fill=white] (0.375,1) circle (2pt);
        \draw[fill=white] (-1.125,2.25) circle (2pt);
        \draw[fill=white] (-1.125,2.5) circle (2pt);
        \fill (-0.375,2.5) circle (0.75pt);
        \fill (0.375,2.5) circle (1.25pt);
        \fill (-0.375,1.25) circle (0.75pt);
        \fill (-0.375,1) circle (0.75pt);
        \fill (0.375,1) circle (1.25pt);
        \fill (-1.125,2.25) circle (1.25pt);
        \fill (-1.125,2.5) circle (1.25pt);
        \fill (-0.375,2.25) circle (0.75pt);
        \end{tikzpicture}\right)\ ,
\end{equation}
where we have used the cyclicity of the spectrum to move the operator enclosed
by blue dashed lines from the bottom to the top. 
Using that
\begin{equation}
    \begin{tikzpicture}[baseline={([yshift=-.5ex]current bounding box.center)}]
        \draw (0,0) -- (0,0.8);
        \draw (-0.4,0) -- (0.4,0);
        \draw (-0.4,0.8) -- (0.4,0.8);
        \draw[fill=white] (0,0) circle (2pt);
        \draw[fill=white] (0,0.8) circle (2pt);
        \fill (0,0) circle (1.25pt);
        \fill (0,0.8) circle (1.25pt);
    \end{tikzpicture} = \begin{tikzpicture}[baseline={([yshift=-.5ex]current bounding box.center)}]
        \draw (-0.4375,-0.2) rectangle (0.4375,1);
        \draw (-0.4375,0) -- (-0.6875,0);
        \draw (0.4375,0) -- (0.6875,0);
        \draw (-0.4375,0.8) -- (-0.6875,0.8);
        \draw (0.4375,0.8) -- (0.6875,0.8);
        \draw (0,0.4) node[]{\small $\mathcal{T}^{p+q}$};
    \end{tikzpicture}\ , \quad 
    \begin{tikzpicture}[baseline={([yshift=-.5ex]current bounding box.center)}]
        \draw (0,0) -- (0,0.8);
        \draw (-0.4,0) -- (0.4,0);
        \draw (-0.4,0.8) -- (0.4,0.8);
        \draw[fill=white] (0,0) circle (2pt);
        \draw[fill=white] (0,0.8) circle (2pt);
        \fill (0,0) circle (0.75pt);
        \fill (0,0.8) circle (0.75pt);
    \end{tikzpicture} = \begin{tikzpicture}[baseline={([yshift=-.5ex]current bounding box.center)}]
        \draw (-0.25,-0.2) -- (0.25,-0.2);
        \draw (-0.25,-0.2) -- (-0.25,1);
        \draw (-0.25,1) -- (0.25,1);
        \draw (0.25,1) -- (0.25,-0.2);
        \draw (-0.5,0) -- (-0.25,0);
        \draw (0.25,0) -- (0.5,0);
        \draw (-0.5,0.8) -- (-0.25,0.8);
        \draw (0.25,0.8) -- (0.5,0.8);
        \draw (0,0.4) node[]{\small $\mathcal{T}^p$};
        \end{tikzpicture}\ ,
\end{equation}
this completes the proof.
\hspace{\fill}$\square$

\vspace*{1.5em}

Let us now discuss consequences of our results, in particular of our main
\cref{theorem:gamma-exact}, both for the existence and for quantitative
bounds on the gap. 

\subsection{Existence of spectral gap for normal and block-normal MPS}

An important consequence of \cref{theorem:gamma-exact}, in combination with the
preceding Lemmas~\ref{lem:block-obc} and~\ref{lem:block-pbc}, and
\cref{theorem:gap-from-gamma}, is a simple proof that parent Hamiltonians for
any normal or block-normal MPS are gapped.

\begin{corollary}
\label{cor:gamma-to-zero}
For normal and block-normal MPS, $\gamma_{p,q}\to 0$ 
as $p\to\infty$ in \cref{theorem:gamma-exact}.
\end{corollary}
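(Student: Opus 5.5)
We use \cref{theorem:gamma-exact}: $\gamma_{p,q}^2=\lambda_2(\Xi)$, where $\Xi$ is a $D^4\times D^4$ matrix built from $T_{2p+q}^{-1}$, $\mathcal T^{p+q}$ and $\mathcal T^p$. The plan is to compute the limit $\Xi_\infty:=\lim_{p\to\infty}\Xi$ explicitly and show that $\Xi_\infty$ is a projector whose rank equals the ground state degeneracy. Its only eigenvalues are then $1$, with the correct multiplicity, and $0$. Since eigenvalues of a finite-dimensional matrix depend continuously on its entries, $\lambda_2(\Xi)\to0$ follows, and hence $\gamma_{p,q}\to0$. The last step uses that $\lambda_1(\Xi)=1$ has degeneracy equal to the ground state degeneracy of $\tilde h_{\tilde\imath}+\tilde h_{\tilde\imath+1}$ for every $p$. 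This degeneracy is independent of $p$ once $p$ exceeds the injectivity length, so the remaining $D^4-\deg$ eigenvalues all converge to $0$.

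We first treat the normal case, with $A$ in left canonical form. By \cref{transferlimit}, $\mathcal T^p\to\mathcal T_\infty$, which is the rank-one map built from the identity on the left and $\rho$ on the right. Viewed vertically, $T_{2p+q}$ then converges to the map $X\mapsto \mathrm{Tr}(X\rho)$ composed with a fixed rank-one structure. For the pseudoinverse we need more care: $T_{2p+q}$ has full rank $D^2$ for $2p+q\ge\ell$, but its limit has rank one, so the pseudoinverse does not converge naively. \textbf{This is the main obstacle.} The remedy is to split $T_n$ using the Gram structure: $T_n$ is the matrix of overlaps $\langle\Psi_n(A,X)|\Psi_n(A,Y)\rangle=\mathrm{Tr}(\ldots)$. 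As $n\to\infty$ this tends to $\mathrm{Tr}(X^\dagger Y\rho)$, or in canonical normalization to $\mathrm{Tr}(\sigma X^\dagger\rho Y)$ with $\sigma=\mathbb 1$ the left fixed point. This limiting pairing is $(X,Y)\mapsto\mathrm{Tr}(\rho Y X^\dagger)$ up to the placement of $\rho$, which is nondegenerate on $\mathcal M_D$ since $\rho>0$. So, reading the legs of $T_n$ in the Gram ordering rather than in the transfer-matrix ordering, the limit is the invertible map $\mathbb 1\otimes\rho$. The convergence is exponential, governed by the second eigenvalue of $\mathcal T$, and therefore $T_{2p+q}^{-1}\to(\mathbb 1\otimes\rho)^{-1}$ in norm.

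Substituting these limits into the diagram \cref{eq:thm-gammaexact-defM}, every $\mathcal T^{p}$ and $\mathcal T^{p+q}$ factors into $\mathbb 1$ and $\rho$ loops. Contracting these should give $\Xi_\infty=\Pi_0$, the rank-one projector onto the vector corresponding to the ground state (the $\rho$ and $\rho^{-1}$ factors cancel). We expect this to be idempotent of rank one, matching the bulk-unique ground state; this should be checked by a direct diagrammatic contraction. For the statement we only need $\Xi_\infty$ to have eigenvalue $1$ with the right multiplicity and all other eigenvalues $0$.

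For the block-normal case, $C=\bigoplus_\alpha A_\alpha$ with each block in canonical form. The transfer matrix $\mathcal T_C$ decomposes into sectors $A_\alpha\otimes\bar A_\beta$. For $\alpha\neq\beta$ these have spectral radius strictly less than $1$, because the blocks generate distinct states (the fact quoted in Footnote~\ref{footnote:orth}). The diagonal sectors converge as in the normal case. Hence $T_{2p+q}$ tends to a block-diagonal invertible map $\bigoplus_\alpha(\mathbb 1\otimes\rho_\alpha)$ on its support, and $\Xi_\infty$ becomes a direct sum of the normal-case projectors, of total rank $k$, again with spectrum in $\{0,1\}$. If the degeneracy is counted per block with boundary matrices, the same argument applies with the multiplicities adjusted; in either case the continuity argument gives $\lambda_2(\Xi)\to0$, uniformly in $q$ because $\mathcal T^{p+q}$ converges at least as fast as $\mathcal T^p$.
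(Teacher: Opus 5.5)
Your architecture matches the paper's: replace $\mathcal T^k$ by $\mathcal T_\infty$ and $T_{2p+q}^{-1}$ by $T_\infty^{-1}$, show that the limit $\Xi_\infty$ is a projector, and conclude by continuity of the spectrum of a fixed-size matrix. You also correctly note that this requires the rank of $\Xi_\infty$ to equal the multiplicity of the eigenvalue $1$ of $\Xi$. The gap is that you get this multiplicity wrong, and you never carry out the contraction that settles it.

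The eigenvalue $1$ of $\Xi$ has multiplicity $\dim\ker(\tilde h_{\tilde\imath}+\tilde h_{\tilde\imath+1})=\dim\mathcal S_{3p+2q}$. This is the \emph{open-boundary} local ground space, which includes all boundary matrices $X$. Its dimension is $D^2$, resp.\ $\sum_\alpha D_\alpha^2$ for block-normal MPS, not $1$, resp.\ $k$, as for the bulk ground state. By continuity, $\Xi_\infty$ must then have eigenvalue $1$ with multiplicity at least $D^2$. So your predicted rank-one $\Xi_\infty$ cannot be right, and the step you defer to ``a direct diagrammatic contraction'' is exactly where the proof lies.

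Carrying it out goes as follows:
\begin{itemize}
\item The $\rho^{-1}$ of each $T_\infty^{-1}$ cancels against the $\rho$ of an adjacent $\mathcal T_\infty$.
\item The first and last bond legs then pass straight through.
\item The remaining internal legs close into a loop carrying a single $\rho$, giving a factor $\mathrm{Tr}\rho=1$.
\item On the two middle legs, read as a $D\times D$ matrix $Y$, the map is $Y\mapsto \mathrm{Tr}(Y)\,\rho$, up to transposition conventions.
\end{itemize}
Thus $\Xi_\infty$ is the identity on the two outer legs tensored with an idempotent rank-one map, i.e.\ a projector of rank exactly $D^2$. For block-normal MPS, $\mathcal T_\infty$ and $T_\infty^{-1}$ are block diagonal, the diagram carries a single block label, and summing over $\alpha$ gives a projector of rank $\sum_\alpha D_\alpha^2$. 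Idempotence alone is not enough: it is this equality of rank and multiplicity that forces $\lambda_2(\Xi)\to 0$. Had the rank exceeded the multiplicity, $\lambda_2(\Xi)$ would tend to $1$.

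Separately, the ``main obstacle'' you identify does not exist. $T_{2p+q}$ in \cref{theorem:gamma-exact} is already the vertical map, i.e.\ the Gram matrix $\mathcal P^\dagger\mathcal P$. The vertical reading of $\mathcal T_\infty$ is $\rho$ on the two left legs tensored with the identity on the two right legs, which is invertible. Only the horizontal reading is rank one, and it is never inverted. Your sentence claiming the vertical limit has rank one is therefore false. Your ``Gram ordering'' fix nonetheless reaches the correct conclusion $T_{2p+q}^{-1}\to T_\infty^{-1}$, which is precisely the paper's observation. For block-normal MPS, you should add that the support of $T_k$, namely the diagonal blocks $\bigoplus_\alpha\mathcal M_{D_\alpha}$, is the same for all large $k$, so that the pseudoinverses indeed converge.
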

\emph{Proof.---}First, note that all objects in \cref{eq:thm-gammaexact-defM} are of finite and fixed dimension, and thus, convergence in any norm is equivalent. It holds that
$\mathcal T^k\to
\mathcal T_\infty$ as $k\to\infty$, where  $\mathcal T_\infty$ is of the form
\cref{transferlimit} for normal MPS, and a direct sum of this form over the
blocks $\alpha$ for block-normal MPS. 
$T_\infty$ (i.e., $\mathcal T_\infty$ acting from bottom to top) is full rank on the entire space for normal MPS, or on the subspaces corresponding to the individual normal blocks for block-normal MPS, and therefore, also $(T_k)^{-1}\to (T_\infty)^{-1}$.
As all objects involved in \cref{eq:thm-gammaexact-defM} are of fixed and finite dimension, $\Xi$ converges to \eqref{eq:thm-gammaexact-defM} with $\mathcal T^k$ replaced by $\mathcal T_\infty$, and $T_k$ by $T_\infty$, which can be easily seen to be a projector, and thus has second largest eigenvalue $0$.
\hspace{\fill}$\square$

Note that this proof crucially relies on the fact that unlike $\hat\Xi$,  $\Xi$
lives in a space of fixed dimension, independent of $p$ and $q$.

\cref{cor:gamma-to-zero} implies the following result.

\begin{theorem}[All parent Hamiltonians are gapped]\label{thm:all_parent_gapped}
Any parent Hamiltonian $H_N$ of a normal or block-normal MPS is gapped, both for
open and periodic boundary conditions; that is, there is a lower bound to
$\Delta(H_N)$ which is uniform in $N$.
\end{theorem}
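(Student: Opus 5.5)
The plan is to chain the earlier results. By \cref{cor:gamma-to-zero}, for a normal or block-normal MPS we have $\gamma_{p,q}\to0$ as $p\to\infty$. I would therefore fix a block size $p_0\ge n-1$ (and, say, $q=0$) with $\gamma_{p_0,0}<1/2$, and keep this choice independent of the system size $N$.

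\textbf{Blocked Hamiltonian.} For the chosen $p_0$, the blocked Hamiltonian of \cref{lem:block-obc} (OBC, or PBC) has terms $\tilde h_{\tilde\imath}$. These are fixed operators on $2p_0+q$ sites, independent of $N$. Their ground space is $\mathcal S_{2p_0+q}$, since they are parent Hamiltonian terms. We need two facts about them.
\begin{itemize}
\item Each blocked term has a strictly positive gap $\Delta(\tilde h_{\tilde\imath})$. This holds because it is a nonzero positive semidefinite operator on a finite-dimensional space, and $N$ does not enter.
\item The hypothesis of \cref{theorem:gamma-exact} is satisfied: the ground space of the blocked Hamiltonian on any number of blocks is spanned by the MPS. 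This is where the parent-Hamiltonian property (as opposed to weak parent) is used. The interaction range exceeds the injectivity length, or for block-normal MPS the conditions of \cref{sec:parenthams} hold, so $\ker H_m=\mathcal S_m$ for all relevant $m$.
\end{itemize}

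\textbf{Gap bound.} \cref{theorem:gap-from-gamma} then gives
\begin{equation}
\Delta(\tilde H_M)\ge(1-2\gamma_{p_0,q})\,\Delta(\tilde h_{\tilde\imath})>0
\end{equation}
for all $M$. The Lemma then transfers this bound to $\Delta(H_N)$ for every $N$ of the form $N=p_0(M+1)+qM$ (OBC) or $N=(p_0+q)M$ (PBC).

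\textbf{Main obstacle: arbitrary $N$.} The step I expect to be most delicate is handling $N$ not of the special form. I would allow the block sizes to vary slightly. Take blocks of sizes $p$ or $p+1$ with $p\ge p_0$, or absorb the remainder into one enlarged block. Then rerun \cref{theorem:gap-from-gamma} with non-identical but nearest-neighbour terms. Its proof only uses pairwise conditions \cref{eq:PQ-QP-condition} for adjacent terms, and $\min_{\tilde\imath}\Delta(\tilde h_{\tilde\imath})$. So it suffices to have a uniform $\gamma<1/2$ over the finitely many block-size combinations that occur. This follows from \cref{cor:gamma-to-zero} applied to each combination, since \cref{theorem:gamma-exact} only requires the ground spaces to be MPS spans. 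With finitely many shapes, the minimum gap of the local terms stays positive. Together with $N$ below any threshold, which has finitely many gaps each positive by finite dimensionality, this yields a lower bound on $\Delta(H_N)$ uniform in $N$.
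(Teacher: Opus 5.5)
Your proposal is correct and follows essentially the same route as the paper. \cref{cor:gamma-to-zero} together with \cref{lem:block-obc} and \cref{theorem:gap-from-gamma} gives the uniform bound for system sizes compatible with a fixed blocking; the remaining $N$ are covered by a nearest-neighbour blocking with finitely many term shapes, where the paper uses exactly your second option of absorbing the remainder $q<p_0$ into a single term on $p_0+q+p_0$ sites; and small $N$ are handled by the minimum over finitely many finite-size gaps. Your $p$/$(p+1)$ variant is a slightly cleaner way to organize the same argument: its family of term shapes does not grow with the block size, so it avoids needing $\gamma$ to vanish uniformly in the remainder $q$ as $p_0\to\infty$, a fact that is true but not stated in the paper.
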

\emph{Proof.---}For system sizes $N$ which are multiples of a sufficiently
large $p_0$, this follows directly from combining \cref{lem:block-obc} with
\cref{theorem:gap-from-gamma} and \cref{cor:gamma-to-zero}. Specifically, let
$q=0$ and choose a $p_0$ which is sufficiently large such that
$\gamma_{p_0,0}<1/2$ (which holds for all $p_0$ above some threshold, as $\gamma_{p,0}\to0$). Then, by virtue of
\cref{lem:block-obc} it holds for the blocked Hamiltonian $\tilde H_M$ with
blocking $p=p_0$, $q=0$ that $\Delta(H_N)\ge \Delta(\tilde H_M)$, and from
\cref{theorem:gap-from-gamma} that $\Delta(\tilde H_M)\ge \varepsilon =
(1-2\gamma_{p_0,0})\Delta(\tilde h_{\tilde\imath})>0$. Thus, $\Delta(H_N)\ge
\varepsilon$ provides a uniform lower bound on the gap of $H_N$ for $N=p_0M$ ($M\ge3$ for PBC).

To obtain a uniform lower bound for arbitrary system sizes $N\ge 3p_0$, write $N=Mp_0+q$, $0\le q<p_0$. It is straightforward to check that \cref{theorem:gap-from-gamma} can be generalized to the case where one Hamiltonian term acts on $p_0+q+p_0$ sites, where the gap is governed by the maximal $\gamma_{p_0,q}$ and the minimal $\Delta(\tilde h_{\tilde\imath})$. Since $q$ takes finitely many values, this will give a non-trivial bound for sufficiently large $p_0$.

If one wants a bound which holds for \emph{all} $N$, even below $3p_0$, one can take the
minimum of this bound and the finite-size gaps for $N<3p_0$.
\hspace*{\fill}$\square$

\subsection{Explicit lower bounds on the spectral gap}\label{subsection:explicit_bounds}

\subsubsection*{Computing lower bounds on the gap}

In order to compute lower bounds on the gap, one can use either
\cref{lem:block-obc} or \cref{lem:block-pbc} to construct a blocked
Hamiltonian with block sizes $p$ and $q$, for which the exact
$\gamma_{p,q}$ can be computed using \cref{theorem:gamma-exact}. Then, one
applies \cref{theorem:gap-from-gamma} to obtain
\begin{equation}\label{eq:general_gap_bound}
\Delta(H_N) \ge \Delta(\tilde H_M) \ge 
    \varepsilon=(1-2\gamma_{p,q})\, \Delta(\tilde h_{\tilde\imath})\ .
\end{equation}
Note that depending on the Lemma used, the $\tilde h_{\tilde\imath}$ are
defined quite differently: When using \cref{lem:block-obc}, which applies
both to OBC and PBC, $\lambda_\omega+\lambda_{\omega'} = 1$ for specific
pairs $(\omega,\omega')$ and $\lambda_\omega=1$ otherwise, cf.\
\cref{eq:lemma-obc-lambdai}, while for \cref{lem:block-pbc}, which is
restricted to PBC, $\sum\lambda_\omega=p+q$, cf.\
\cref{eq:lemma-pbc-lambdai}. The choice of $\lambda_\omega$
can be used to optimize  $\Delta(\tilde{h}_{\tilde{\imath}})$ (for given $p$, $q$),
subject to the linear constraints above and the semidefinite constraints $\lambda_\omega\ge0$. 
A canonical choice is to choose
all $\lambda_\omega$ equal, within the corresponding constraints.

A key choice one has to make when computing a bound on the gap is the
choice of $p$ and $q$. We discuss this tradeoff more specifically in
\cref{sec:examples}.  For now, let us point out that the computationally
most costly step is computing $\Delta(\tilde h_{\tilde\imath})$, which
scales exponentially with the support $r=2p+q$ of $\tilde
h_{\tilde\imath}$. Given a maximal feasible $r$, there is just few
choices of $p$ and $q$ one has to check.  While we expect the largest
$r$ to give the best bound, additionally checking blockings $(p,q)$
with smaller $r=2p+q$ comes at a subleading cost, and can thus be
easily incorporated.

\subsubsection*{Convergence to infinite system value}

The uniform lower bound on the gap $\Delta(H_N^{\mathrm{PBC}})$ of the PBC
Hamiltonian obtained using \cref{lem:block-pbc} converges to the infinite system
value of the OBC gap, $\limsup_n \Delta(H_n^\mathrm{OBC})$, as $p,q\to\infty$,
$q/p\to\infty$. To this end, choose $\lambda_\omega=
\rho_{p,q}:=\tfrac{p+q}{2p+q-(n-1)}\to 1$, such that $\tilde h_{\tilde\imath}=
\rho_{p,q}H_{2p+q}^\mathrm{OBC}$. Then, the bound becomes $\Delta(H_N^\mathrm{PBC})\ge
(1-2\gamma_{p,q}) \rho_{p,q}\Delta(H_{2p+q}^\mathrm{OBC})$, which (since
$\gamma_{p,q}\to 0$, \cref{cor:gamma-to-zero}) converges to
$\limsup_n\Delta(H_{n}^\mathrm{OBC})$.

\subsubsection*{Computational cost}

The computational bottleneck of the method is the computation of the ``local
gap'' $\Delta(\tilde h_{\tilde\imath})$. $\tilde h_{\tilde\imath}$ is a square
matrix of dimension $d^r$, $r={2p+q}$, whose gap can be computed in time
$\mathcal O(d^{3r})=\mathcal O(d^{6p+3q})$ using full diagonalization or
$\mathcal O(d^{r+n})=\mathcal O(d^{2p+q+n})$ using a Krylov method (using that
$\tilde h_{\tilde\imath}$ is a sum of $n$-body terms), respectively.

In comparison, the resources for computing $\gamma_{p,q}$ using
\cref{theorem:gamma-exact} are as follows. The central task is to compute the
second leading eigenvalue of the matrix $\Xi$.  $\Xi$ can be computed at a
computational cost $\mathcal{O}(D^{10})$, and fully diagonalized in
$\mathcal{O}(D^{12})$. Using a Krylov solver and using the structure of $\Xi$,
the computational cost scales as $\mathcal{O}(D^{6})$. The $p$- and
$q$-dependence only enters in computing $\mathcal{T}_p$, $\mathcal{T}_{p+q}$,
and $T_{2p+q}^{-1}$, which are $D^2\times D^2$ matrices, and thus enters as
$\mathcal O((2p+q)D^6)=\mathcal O(rD^6)$. 
Note that the
bottleneck of the method is thus always the computation of $\Delta(\tilde
h_{\tilde\imath})$ in \cref{theorem:gap-from-gamma}: The smallest blocking is
$p=n-1$, $q=0$, and the smallest $n$ is the injectivity length, which in turn is
at least $n \gtrsim \log_d(D^2)$.  Thus, computing $\Delta(\tilde
h_{\tilde\imath})$ scales at least as $\mathcal O(d^{2p+q+n})=\mathcal O(D^6)$. Additionally, one
should keep in mind that the optimal bound on the gap (or even a bound at all)
is typically reached for larger values of $p$.

Note that without our \cref{theorem:gamma-exact}, the bottleneck would be
the computation of $\gamma_{p,q}$ from \cref{eq:PQ-QP-condition}, which
requires computing the lowest eigenvalue of matrix on the l.h.s.\ of
\cref{eq:PQ-QP-condition}  and thus scales as $\mathcal O(d^{3p+2q+n})$ when
using a Krylov method, and $\mathcal O(d^{9p+6q})$ with full
diagonalization.

\subsection{Generalizations}

Let us briefly discuss some possible generalizations to our techniques.

First, one can optimize the $\lambda_\omega$ (given $p$ and $q$)  to yield the
best possible bound on $\Delta(\tilde h_{\tilde\imath})$, and thus on the gap.
Note that this is a semidefinite program (SDP)
and can thus be carried out efficiently. This optimization can be significantly
generalized by not simply rearranging the existing terms $h_i$ with different
weights $\lambda_\omega$, but rather expressing the sum of the $h_i$ within 
each $p$-sized block as $H_p=\sum h_i=A+B$, 
subject to the constraint that 
$A,B\ge0$ and $\ker A=\ker B=\ker H_p$, 
where $A$ and $B$ are absorbed in the left 
and right $\tilde h_{\tilde \imath}$, respectively,
cf.\ Ref.~\cite{Rai2026hierarchyofspectral};
this optimization is yet again an SDP.

Another direction in which one can generalize the method is to not block
to a nearest-neighbor Hamiltonian, but to a Hamiltonian which is a sum of
terms which overlap in more than one way. Then,
\cref{theorem:gap-from-gamma} needs to be generalized to bound each
distinct pair
of overlapping Hamiltonian terms individually. Depending on how these bounds
are constructed, either each of them depends only on one pair of terms (in
which case the variant of \cref{eq:PQ-QP-condition} with the worst
$\gamma$ determines the gap), or one can rearrange terms between them,
giving a set of additional optimization parameters. Another modification in
\cref{theorem:gap-from-gamma} is to not lower-bound the $\tilde h_{\tilde
\imath}$ by projectors. In that case, $\tilde h_{\tilde\imath}$ itself
appears in \cref{eq:PQ-QP-condition}: This might yield a better 
$\gamma_{p,q}$ for the gap bound \cref{gapeq}, yet at the cost that Jordan's Lemma cannot be applied any
more, and thus the optimal $\gamma_{p,q}$ for the modified
\cref{eq:PQ-QP-condition} must be determined directly from said inequality
 (and thus at a higher
computational cost).

Finally, our method can be generalized to apply to arbitrary frustration-free
Hamiltonians (translation invariant or not). To this end, note that we
have only ever used that the ground space for any contiguous region can be
parametrized by an MPS, but we have not assumed that this MPS is
translation invariant, nor even that the MPS tensors are independent of
the block size. Since for any frustration free Hamiltonian, the ground
space can be parametrized by an MPS, which in addition can be constructed
efficiently from the Hamiltonian~\cite{schuch_simple_2025}, this allows to
apply our technique to bound the gap of arbitrary frustration free
Hamiltonians, with the computational cost scaling polynomially with the
ground state degeneracy. A special case of this scenario is given
by not translation invariant MPS and their parent Hamiltonians, to
which our technique straightforwardly generalizes.

\section{Examples}
\label{sec:examples}

\subsection{Introduction}
\label{sec:examples-introduction}

In this section, we provide examples which illustrate the application of our method and showcase its advantages.
We examine Hamiltonians with a normal MPS ground state, such as the AKLT model in~\cref{secaklt} and the VBS for an $\operatorname{SU}(3)$ chain with a 10-dimensional spin representation in~\cref{secrep10}, as well as Hamiltonians with a degenerate ground state spanned by block-normal MPS, namely a three-dimensional deformed clock model (see~\cref{z3sec}) and another $\operatorname{SU}(3)$ VBS with an 8-dimensional representation (\cref{secvbs8}). Through the sufficient conditions provided in~\cref{subsec:parenthams}, we verify that all investigated Hamiltonians are indeed parent Hamiltonians of either normal or block-normal MPS, and thus they are gapped by~\cref{thm:all_parent_gapped}. 
We then apply our method to derive quantitative uniform lower bounds on the spectral gap of each of these models. To this end, we first construct a blocked Hamiltonian:  For OBC,
we do so by using \cref{lem:block-obc} with the canonical choice of $\lambda_\omega=\lambda_{\omega'}=\tfrac{1}{2}$ for $\omega=1,\dots,p-(n-1)$, $\omega'= \omega+(p+q)$, and $\lambda_\omega = 1$ otherwise; for PBC, we apply~\cref{lem:block-pbc} with $\lambda_\omega = \tfrac{p+q}{2p+q-(n-1)}$ for all $\omega$. Subsequently, we apply 
\cref{theorem:gap-from-gamma} to the blocked Hamiltonian, using \cref{theorem:gamma-exact} to compute $\gamma_{p,q}$.

In order to assess the strengths and limitations of our method, we compare it
with other known strategies to bound the spectral gap of frustration-free
systems. To start, we demonstrate the improvements our method provides over
directly applying the FNW bound~\cite{fannes_finitely_1992} or the more general
martingale method of Nachtergaele~\cite{nachtergaele_spectral_1996} for parent
Hamiltonians of (block-)normal MPS. These bounds have the same form
as~\cref{eq:general_gap_bound}, but with the choice $q=0$, $\lambda_\omega =
\tfrac{1}{2}$ for all $\omega$, and with a specific upper bound on
$\gamma_{p,q}$ which depends on spectral properties of the transfer matrix, and
which we provide in~\cref{appendix:fnw-n} for completeness. Moreover, we
compare our method to strategies following a different philosophy -- the
so-called finite-size criteria, which establish a relation between the gap of a
finite-size version of the given Hamiltonian to a uniform bound on the gap for
arbitrary system sizes. Specifically, these are  Knabe's method with
periodic~\cite{knabe_energy_1988} and open
boundaries~\cite{wouters_interrelations_2021}, and the Gosset-Mozgunov
bound~\cite{gosset_local_2016}. We provide a summary of these methods in
\cref{appendix:knabe_and_gosset}.

In the following, in order to distinguish between boundary conditions, we will
denote OBC Hamiltonians by $H$ and PBC Hamiltonians by $H^\circ$. 

\subsection{The AKLT model}
\label{secaklt}
The AKLT model, introduced by Affleck, Kennedy, Lieb, and Tasaki~\cite{affleck_valence_1988}, is a paradigmatic example of a quantum spin chain which has a normal MPS as its unique ground state. The Hamiltonian defining the model is 2-local, and it is given by
\begin{equation}
    h_\mathrm{AKLT} = \frac{1}{6}(\mathbf{S}_i \cdot\mathbf{S}_{i+1})^2 + \frac{1}{2} \mathbf{S}_i \cdot\mathbf{S}_{i+1} + \frac{1}{3}\ ,
\end{equation}
where $\mathbf{S}=(S_x,S_y,S_z)$ are the spin-1 operators. The Hamiltonian $H_\mathrm{AKLT} = \sum_i (h_\mathrm{AKLT})_i$ is by construction the canonical weak parent Hamiltonian (cf.~\cref{localparent}) of the MPS generated by the tensor $A$ with
\begin{equation}
\label{eq:aklt-tensor}
    A^1 = \sqrt{\frac{1}{3}}\begin{pmatrix}
        0 & \sqrt{2} \\ 0 & 0
    \end{pmatrix}, \quad A^0 = \sqrt{\frac{1}{3}}\begin{pmatrix}
        -1 & 0 \\ 0 & 1
    \end{pmatrix}, \quad A^{-1} = \sqrt{\frac{1}{3}}\begin{pmatrix}
        0 & 0 \\ -\sqrt{2} & 0
    \end{pmatrix}.
\end{equation}
This MPS tensor is normal with injectivity length $\ell=2$: It is straightforward to
see that $\operatorname{span}\{A^{i} A^{j}|\ i,j=-1,0,1\} =\mathcal{M}_2$. One
can also directly check that the MPS $\ket{\Psi_3(A,X)}$ spans precisely the ground
space
of $h_{\mathrm{AKLT},3}:=h_\mathrm{AKLT} \otimes \mathbb{1} + \mathbb{1} \otimes
h_\mathrm{AKLT}$,\footnote{Recall that since $h_\mathrm{AKLT}$ is a weak
parent Hamiltonian, $\mathcal S_3=\{\ket{\Psi_3(A,X)}|X\}$ is contained in 
the ground space of $h_{\mathrm{AKLT},3}$, 
 and thus it is sufficient to check that 
$\dim(\ker(h_{\mathrm{AKLT},3})) =\dim S_3=4$.}
which acts on $\ell+1=3$ sites,
and thus $H_\mathrm{AKLT}$ is a
parent Hamiltonian of the MPS \eqref{eq:aklt-tensor}; in particular, the MPS
$\ket{\Psi_N(A,X)}$ is its unique ground  state (with $X=\openone$ for PBC, while the
boundary of the OBC ground state depends on $X$, cf.~\cref{subsec:parenthams}).
\cref{thm:all_parent_gapped} then implies that $H_\mathrm{AKLT}$ is gapped above
the ground state.

Let us now apply \cref{eq:lemma-obc-lambdai} and \cref{eq:lemma-pbc-lambdai}, respectively (with the choice of $\lambda_\omega$ laid out in \cref{sec:examples-introduction}), followed by \cref{theorem:gap-from-gamma} and \cref{theorem:gamma-exact}, to obtain quantitative bounds on the gap.

\begin{figure}
\includegraphics[width=\columnwidth]{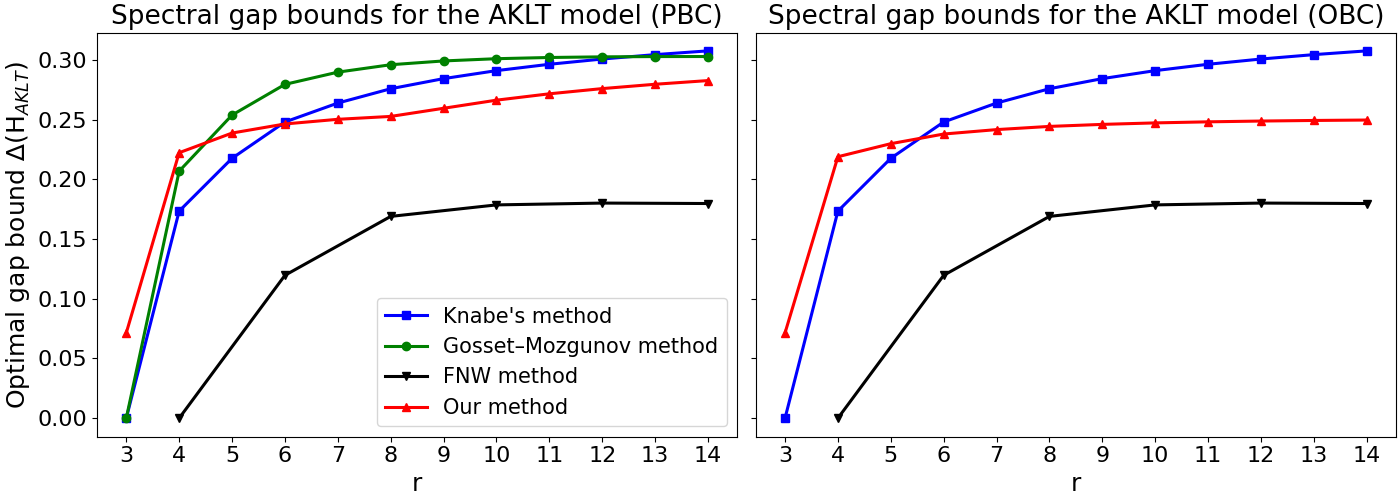}
\caption{
Lower bounds on the spectral gap for the AKLT chain, obtained with
different methods, plotted vs.\ the computational cost (exponential in $r$,
namely, computing the spectral gap of a matrix of size $3^r$),
for PBC (left) and
OBC (right), with shared $y$ axis. We compare our bounds to those obtained with
the FNW method, upon which our method is built, as well as two finite-size
criteria, the Knabe bound and the Gosset-Mozgunov bound (where the latter only
applies to PBC). See text for details. } \label{fig:aklt_gaps} 
\end{figure}

Recall that the computational bottleneck of the method is the diagonalization of
a matrix of size exponential in $r=2p+q$ (which thus scales exponentially in $r$). The smallest such $r$ for which 
\cref{theorem:gamma-exact} gives $\gamma_{p,q} < 1/2$ (and thus a non-trivial 
bound on the gap) is $r=3$, with $p=q=1$; 
finding a nonzero uniform gap thus only requires the diagonalization of a 3-body
Hamiltonian. The obtained bounds improve with larger block sizes:
In~\cref{fig:aklt_gaps}, we show the best bounds obtained with our method
(optimizing over $p$ and $q$ s.th.\ $r=2p+q$) for $3\le r\le 14$. 
For PBC, we obtain the best bound  for $p=3,\ q=8$, namely
\begin{equation}
    \Delta(H^\circ_{\mathrm{AKLT}})  \ge 0.2827\ .
\end{equation}
For OBC, the best bound is obtained for $p=2$ and $q=10$, which is
\begin{equation}
    \Delta(H_{\mathrm{AKLT}}) \geq 0.2497\ .
\end{equation}
For comparison, the best bound provided by the FNW method for $r\le14$ is
$\Delta(H_{\mathrm{AKLT}}) \geq 0.1801$ for the choice $p=6$.\footnote{To
calculate these bounds -- using the method as described in \cref{sec:fnw} -- 
we used that the transfer matrix of the AKLT MPS is normal, so we set $a(p) =
4\cdot\lambda_2(\mathcal{T})^p$ in~\cref{fnw}.}

\cref{fig:aklt_gaps} also provides a comparison of the OBC and PBC bounds
obtained with our method vs.\ other methods. In addition to the FNW method, we
also compare to the Knabe and the Gosset-Mozgunov
bound~\cite{knabe_energy_1988,gosset_local_2016} (cf.~\cref{appendix:knabe_and_gosset}), which are both based on
relating the system-size independent gap bound to the finite-size gap of the
same Hamiltonian. In all of these methods, the computational bottleneck is to
determine the gap of a Hamiltonian on $r$ sites, which we use as cost figure
(i.e., x-axis) to compare the methods in
\cref{fig:aklt_gaps}. Our method significantly improves upon the FNW bound for
all block sizes $r$. On the other hand, we find that while for small $r$, our
method also gives better bounds than the Knabe and Gosset-Mozgunov finite-size
criteria, both of them outperform our method on the AKLT model as $r$ is increased.

\subsection{$\mathbb{Z}_3$-XY model}
\label{z3sec}
The $\mathbb{Z}_3$-XY model, introduced in Ref.~\cite{wouters_interrelations_2021}, is a deformation of the three-state Potts clock model without external field which generalizes the frustration-free line of the $\mathbb{Z}_2$-XY model with a transverse field. The deformation is parametrized by a positive real number, denoted by $t$.

The two-local Hamiltonian of the $\mathbb{Z}_3$-XY model $H_\mathrm {XY}(t)=\sum_i (h_\mathrm{XY}(t))_i$ is given by
\begin{equation}
\label{eq:def-hxy}
    h_{\mathrm{XY}}(t) = \epsilon(t) - \left[ \left( 1 + b(t) \tau_i + b(t) \tau^{\dagger}_i\right) \sigma^{\dagger}_i \sigma_{i+1} \left(1 + b(t) \tau_{i+1} + b(t) \tau^{\dagger}_{i+1} \right) + \frac{f(t)}{2} \left(\tau_i + \tau_{i+1} \right) + \text{h.c.} \right],
\end{equation}
with 
\begin{equation}
    \sigma = \begin{pmatrix}
        0 & 1 & 0 \\
        0 & 0 & 1 \\
        1 & 0 & 0
    \end{pmatrix}, \qquad \tau = \begin{pmatrix}
        1 & 0 & 0 \\
        0 & e^{i\frac{2\pi}{3}} & 0 \\
        0 & 0 & e^{i \frac{4\pi}{3}}
    \end{pmatrix},
\end{equation}
and parameters
\begin{equation}
    f(t) = \frac{6(1-t^6)}{(t^3+2)^2}, \qquad b(t) = \frac{t^3-1}{t^3+2}, \qquad \epsilon(t) = \frac{6(t^6+2)}{(t^3+2)^2}\ ,
\end{equation}
such that for $t=1$ we recover the Potts Hamiltonian:
\begin{equation}
    h_{\mathrm{XY}}(1) =  h_{\mathrm{Potts}} = 2-\sigma_i^\dagger \sigma_{i+1} - \sigma_i \sigma_{i+1}^\dagger\ .
\end{equation}
\begin{figure}[t]
    \centering
    \includegraphics[width=1\linewidth]{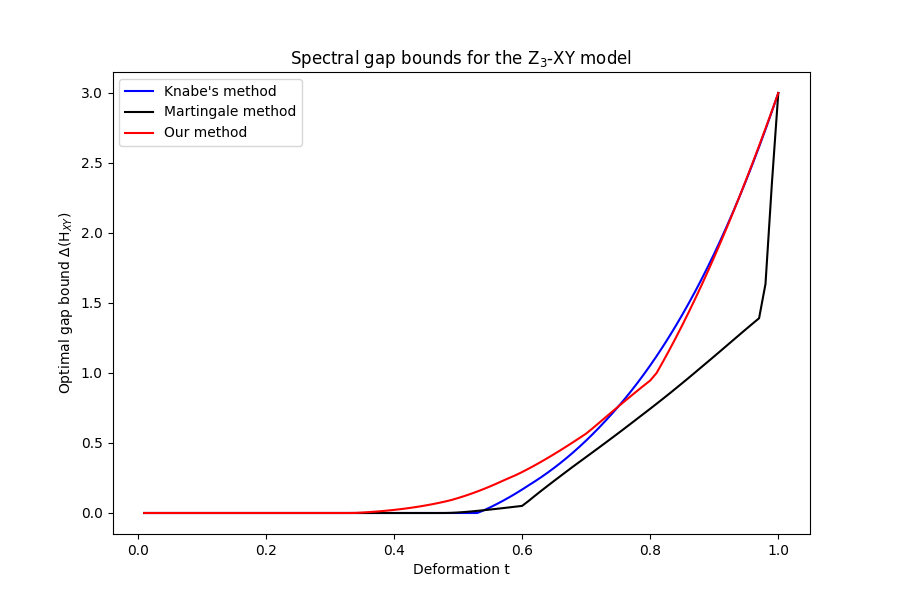}
\caption{\label{fig:02}
Comparison of gap bounds obtained by our method vs.\ a 
variant of  Nachtergaele's martingale method (\cref{appendix:fnw-n},
\cref{thm:martingale,theorem:nachtergaele2}) and a modification of Knabe's method for OBC~\cite{wouters_interrelations_2021}.
The intervals in which the different methods yield a non-zero bound on
the gap are 
$0.4775 < t < 2.0944$ (martingale method), 
$0.5310 < t < 1.8832$ (Knabe method), and 
$0.3374 < t < 2.9638$ (our method, with $(p,q)=(5,0)$).
The kink shown in the curve obtained through our method around $t\approx
0.81$ can be explained from the fact that for $0.81\lesssim t \le 1$,
the choice $(p,q)=(1,0)$ yields the best bound, while for $t \lesssim
0.81$, $r=2p+q=10$ provides the best bound, where  $p$ and $q$ vary. 
The kinks observed in the result from the martingale method have a similar origin.\protect\footnotemark
}
\end{figure}
\footnotetext{For the martingale method there is a kink at $t \approx 0.97$, as well as another kink at $t \approx 0.61$, since \cite{nachtergaele_spectral_1996} provides two similar ways to bound the same quantity, and far from $t=1$ the bound given in~\cref{theorem:nachtergaele2} with $r=10$ is larger, and provides an explicit lower bound on the spectral gap in a larger parameter interval than the ``standard martingale method'', i.e., the combination of~\cref{theorem:gap-from-gamma} with $q=0$, $\lambda_\omega = \tfrac{1}{2}$ for all $\omega$ and~\cref{thm:martingale} for obtaining $\gamma_{p,q}$.}

It is straightforward to check that the OBC Potts Hamiltonian is frustration
free with ground state energy zero, and has a three-fold degenerate ground state
for any system size. The deformation that results in \cref{eq:def-hxy}
preserves that degeneracy for any $t\in (0,\infty)$. This can be shown by
observing that $h_\mathrm{XY}(t)$ is obtained by conjugating $h_\mathrm{Potts}$,
which is positive semidefinite with ground state energy zero, with a product of
single-site invertible operators $\Lambda^{\otimes 2}$, and therefore the zero-energy ground
space of $H_{\mathrm{XY}}(t)$ and $H_\mathrm{Potts}$ are related by conjugation
with  $(\Lambda^{-1})^{\otimes N}$; see  Ref.~\cite{wouters_interrelations_2021}
for details. 
The ground states of $H_{\mathrm{XY}}(t)$ are thus
product states, whose overlaps decay exponentially with the system size. Therefore,
they have a block-injective MPS representation which is generated by the (unnormalized) tensor
\begin{equation}
    A^j = \frac{1}{\sqrt{3}} \begin{pmatrix}
        t^j & 0 & 0 \\
        0 & t^j e^{i\frac{2j\pi}{3}} & 0 \\
        0 & 0 & t^j e^{i\frac{4j\pi}{3}}
    \end{pmatrix}, \quad j\in\{0,1,2\}.
\end{equation}
This particularly simple ground space structure implies that $H_{\mathrm{XY}}$
is a parent Hamiltonian of the MPS generated by $A$ for any deformation
$0<t<\infty$. It thus immediately follows 
by~\cref{thm:all_parent_gapped} 
that $H_\mathrm{XY}(t)$ is gapped 
for all $0<t<\infty$.

Let us now compare the performance of our method to the Knabe bound
and Nachtergaele's martingale method. Since the Hamiltonian is defined
with OBC, we apply~\cref{theorem:gap-from-gamma} together with the
canonical blocking from~\cref{lem:block-obc} and the optimal
$\gamma_{p,q}$ from~\cref{theorem:gamma-exact}. 
To ensure comparability, we fix the computational cost to
diagonalizing a Hamiltonian on $r\le 10$ sites. 
\cref{fig:02} shows the optimal bounds on the gap (optimized over $r\le 10$)
in the interval $0<t\leq 1$ for the different methods.
We see that our
approach, based on applying~\cref{theorem:gap-from-gamma} together
with~\cref{theorem:gamma-exact} outperforms the martingale method, and
provides an improved bound as compared to Knabe's method in the high-deformation (small $t$) regime. 
Moreover, our method is able to provide a quantitative lower bound also in regimes where Knabe's bound fails to do so ($0.3374 < t < 0.5310$).

\subsection{$\operatorname{SU}(3)$ valence bond solids}
\label{ex3}
Historically, valence bond solid (VBS) models are regarded to be precursors to the MPS ansatz, with all VBS states also possessing an exact matrix product state representation~\cite{perez-garcia_matrix_2007}. Several $\operatorname{SU}(3)$-symmetric VBS models were introduced in~\cite{greiter_valence_2007}, with some of them conjectured to being gapped. In this subsection, we prove that two of these Hamiltonians are indeed gapped, and provide explicit lower bounds on their gaps using our method, which we then compare to other finite-size strategies.

\subsubsection{General strategy for constructing 
$\operatorname{SU}(3)$-symmetric models}

We first sketch the basic strategy to construct $\operatorname{SU}(3)$-symmetric VBS states and frustration-free two-local Hamiltonians with a zero energy state described by such a VBS state in abstract terms.  
For a more detailed discussion, we refer to~\cite{greiter_valence_2007}. 
The reader might find it helpful to study this abstract exposition alongside with or after going through the concrete realizations of such models discussed in 
\cref{secvbs8} and
\cref{secrep10}.
We take a tensor product of translations of a singlet in $\bigotimes_i \mathbf{j}_i$ on a grid, where $\mathbf{j}_i$ are irreducible representations of $\operatorname{SU}(3)$, and project onto a specific irreducible representation $\mathbf{r}$ that appears in the direct sum decomposition of $\bigotimes_i \mathbf{j}_i$ on each lattice site (see e.g.,~\cref{repr8,rep10}). We set the two-local Hamiltonian to be
\begin{equation}\label{eq:general_vbs}
    h = \prod_{\mathbf{m} \in S} ((\mathbf{J} \otimes \mathbb{1})(\mathbb{1} \otimes \mathbf{J}) - \sigma(\mathbf{m})) \eqcolon  \prod_{\mathbf{m} \in S} (\mathbf{J}_i\mathbf{J}_{i+1} - \sigma(\mathbf{m})),
\end{equation}
where $\mathbf{J}$ denotes the vector with entries being the eight $r \times r$ matrices that are the generators of $\operatorname{SU}(3)$ in the irreducible representation $\mathbf{r}$, $S$ is the set of irreducible representations $\mathbf{m}$ that appear both in the direct sum decomposition of $\mathbf{r} \otimes \mathbf{r}$ and in the two-site restriction of the unprojected chain (i.e., the tensor product of the translations of the singlet), and $\sigma(\mathbf{m})$ is the eigenvalue of the quadratic Casimir operator in the representation $\mathbf{m}$. This Hamiltonian annihilates the projected state on two sites. If not already positive semidefinite, it can be modified to yield positive eigenvalues for all irreducible representations in the decomposition of $\mathbf{r} \otimes \mathbf{r}$ that are not in $S$, e.g.\ by squaring, or by multiplying with a suitable polynomial in $\mathbf J_i\cdot \mathbf J_{i+1}$, making the constructed VBS state a ground state of $h$ (the latter can also be used to turn $h$ into a projector).

Throughout this section, we will employ the following graphical notation: \begin{tikzpicture}[baseline={([yshift=-.5ex]current bounding box.center)}]
    \draw[fill=white] (0,0) circle (2pt); \end{tikzpicture} will denote an $\operatorname{SU}(3)$ spin that transforms under the fundamental representation $\mathbf{3}$, whereas $\times$ will denote a spin that transforms under its conjugate representation $\bar{\mathbf{3}}$.

\subsubsection{The representation $\mathbf{8}$ VBS}
\label{secvbs8}

We start by discussing a VBS with $\operatorname{SU}(3)$-representation $\mathbf{8}$ per site, that is, the adjoint representation.
In order to obtain a VBS that is symmetric with respect to the adjoint representation 
of $\operatorname{SU}(3)$, we consider the singlet in the space $\mathbf{3} \otimes \bar{\mathbf{3}}=\mathbf{1} \oplus \mathbf{8}$, denoted graphically by \begin{tikzpicture}[baseline={([yshift=-.5ex]current bounding box.center)}]
    \draw (0,0) -- (0.5,0);
    \draw[fill=white] (0,0) circle (2pt);
    \draw (0.5,0) node {$\times$};\end{tikzpicture}. Then, we project the neighboring ends of each singlet pair (corresponding to one lattice site) onto the representation $\mathbf{8}$. This construction can be visually represented as
\begin{equation}
\label{repr8}
    \begin{tikzpicture}[baseline={([yshift=-.5ex]current bounding box.center)}]
    \draw (0,0) -- (0.5,0);
    \draw[fill=white] (0,0) circle (2pt);
    \draw (0.5,0) node {$\times$};
    \draw (1,0) -- (1.5,0);
    \draw[fill=white] (1,0) circle (2pt);
    \draw (1.5,0) node {$\times$};
    \draw (2,0) -- (2.25,0);
    \draw[fill=white] (2,0) circle (2pt);
    \draw (2.6,-0.125) node {$\dots$};
    \draw (0,-0.25) node {$\times$};
    \draw (0,-0.25) -- (-0.25,-0.25);
    \draw (-0.6,-0.125) node {$\dots$};
    \draw (0.5,-0.25) -- (1,-0.25);
    \draw[fill=white] (0.5,-0.25) circle (2pt);
    \draw (1,-0.25) node {$\times$};
    \draw (1.5,-0.25) -- (2,-0.25);
    \draw[fill=white] (1.5,-0.25) circle (2pt);
    \draw (2,-0.25) node {$\times$};
    \draw[blue, fill=blue,opacity=0.2] (-0.15,-0.5) rectangle (0.15,0.25);
    \draw[blue, thick] (0,-0.5) -- (0,-0.75);
    \draw[blue, fill=blue,opacity=0.2] (0.35,-0.5) rectangle (0.65,0.25);
    \draw[blue, thick] (0.5,-0.5) -- (0.5,-0.75);
    \draw[blue, fill=blue,opacity=0.2] (0.85,-0.5) rectangle (1.15,0.25);
    \draw[blue, thick] (1,-0.5) -- (1,-0.75);
    \draw[blue, fill=blue,opacity=0.2] (1.35,-0.5) rectangle (1.65,0.25);
    \draw[blue, thick] (1.5,-0.5) -- (1.5,-0.75);
    \draw[blue, fill=blue,opacity=0.2] (1.85,-0.5) rectangle (2.15,0.25);
    \draw[blue, thick] (2,-0.5) -- (2,-0.75); 
    \end{tikzpicture},
\end{equation}
where the vertical blue lines represent the projection onto $\mathbf{8}$. The simplest MPS tensor that generates the state in \cref{repr8} consists of the Gell-Mann matrices $\Lambda^{a}$, $a = 1,\dots,8$ (see \cref{gm} and \cite{morimoto_z_3_2014} for details). However, it is convenient to transform to a physical basis where the MPS tensor consists of real matrices. The resulting (unnormalized) MPS tensor $A$ is then given by the raising and lowering operators, together with the generators of the Cartan subalgebra of $\mathfrak{su}(3)$:
\begin{equation}
\label{rep8mps}
\begin{aligned}
    A^{\pi^{\pm}} = & \frac{1}{2}(\Lambda^1 \pm i\Lambda^2), \quad A^{\pi^0} = \frac{1}{\sqrt{2}}\Lambda^3, \quad A^{K^{\pm}} = \frac{1}{2}(\Lambda^4\pm i\Lambda^5), \\ 
    A^{K^0} = & \frac{1}{2}(\Lambda^6 + i\Lambda^7), \quad A^{\bar{K}^0} = \frac{1}{2}(\Lambda^6 - i\Lambda^7), \quad A^{\eta} = \frac{1}{\sqrt{2}}\Lambda^8,
\end{aligned}
\end{equation}
where the physical indices correspond to the octet of mesons \cite{morimoto_z_3_2014}. As the linear span of $A^jA^k$ is $\mathcal{M}_3$, the tensor $A$ is normal with injectivity length 2.

For the state sketched in~\cref{repr8}, one can easily check that the Hamiltonian constructed in~\cref{eq:general_vbs} is
\begin{equation}
\label{vbs8ham}
    h_{\operatorname{SU}(3),\mathbf{8}} =  (\mathbf{J}_i \mathbf{J}_{i+1})^2 + \frac{9}{2} \mathbf{J}_i \mathbf{J}_{i+1} + \frac{9}{2},
\end{equation}
where the coefficients of $\mathbf{J}$ are given by \cite{morimoto_z_3_2014}
\begin{equation}
    (J^{a})^{jk} = \frac{1}{2} \operatorname{tr}((A^{j})^{\dagger}(\Lambda^{a}A^k - A^k \Lambda^{a})),
\end{equation}
where $a \in \{1,\dots,8\}$ and the indices $j$ and $k$ correspond to those 
given in \cref{rep8mps}.\footnote{The projector-valued, i.e., spectrally flattened, Hamiltonian is
\begin{equation}
\label{vbs8proj}
    k_{\operatorname{SU}(3),\mathbf{8}} = \left((\mathbf{J}_i \mathbf{J}_{i+1})^2 + \frac{9}{2} \mathbf{J}_i \mathbf{J}_{i+1} + \frac{9}{2}\right)\left(\frac{2}{9} - \frac{11}{90}\mathbf{J}_i \mathbf{J}_{i+1}\right).
\end{equation}
}
This Hamiltonian satisfies $\mathcal{S}_2^A \subset \ker h_{\operatorname{SU}(3),\mathbf{8}}$, where $\mathcal{S}_2^A$ is the space of two-body MPS generated by the tensor $A$ (cf.~\cref{mpsspace}). Notice that $h_{\operatorname{SU}(3),\mathbf{8}}$ is reflection invariant, thus the reflection of the state sketched in~\cref{repr8} is also a ground state. This state is also an MPS, generated by the tensor $B$, where $B^j = (A^j)^T$~\cite{greiter_valence_2007}, thus $B$ is also normal with injectivity length 2. One can verify that $\mathcal{S}_3^A \cap \mathcal{S}_3^B = \{0\}$,\footnote{Note that $\dim\ker h_{\operatorname{SU}(3),\mathbf{8}} = 17$, not the expected 18, since $\mathcal{S}_2^A \cap \mathcal{S}_2^B \neq \{0\}$, but a one-dimensional vector space.} and that the space of the 3-body MPS generated by the tensor $C=A\oplus B$ is exactly the kernel of $h_{\operatorname{SU}(3),\mathbf{8}}\otimes \mathbb{1} + \mathbb{1} \otimes h_{\operatorname{SU}(3),\mathbf{8}}$. As a consequence, $C$ is block-normal, and $H_{\operatorname{SU}(3),\mathbf{8}} = \sum_i (h_{\operatorname{SU}(3),\mathbf{8}})_i$ is a parent Hamiltonian of the MPS generated by $C$. 
In particular, this implies that $H_{\operatorname{SU}(3),\mathbf{8}}$ has a $2$-fold degenerate ground state, and that it is gapped by virtue of \cref{thm:all_parent_gapped}.

Now, we look for explicit lower bounds on the spectral gap of
$H_{\operatorname{SU}(3),\mathbf{8}}$, fixing the computational cost by setting
$r \le 6$. Nachtergaele's martingale method (\cref{thm:martingale} in
\cref{appendix:fnw-n}) does not provide an explicit bound on the gap: The
relevant equation~\eqref{martingale} requires $p\ge5$ to give a bound
$\gamma_{p,q}<\tfrac{1}{2}$; to obtain an explicit bound, one would thus need to
diagonalize a 10-body Hamiltonian (with local dimension 8). In
contrast,~\cref{theorem:gamma-exact} enables the computation of the optimal
$\gamma_{p,q}$, and we obtain that $\gamma_{p,q} < \tfrac{1}{2}$ already for
$(p,q) = (2,0)$. For $r\le 6$, we then obtain the optimal bounds
\begin{equation}
    \Delta(H_{\operatorname{SU}(3),\mathbf{8}}) \geq 0.3809
\end{equation}
with $(p,q) = (2,0)$ for OBC, and
\begin{equation}
    \Delta(H_{\operatorname{SU}(3),\mathbf{8}}^{\circ}) \ge 0.3917
\end{equation}
with $(p,q) = (3,0)$ for PBC.

Comparing to the Knabe and Gosset-Mozgunov bound, we find the following: For the
Knabe bound, the finite-size gap for $r\le 6$ sites is too small for the method
to detect a gap. The Gosset-Mozgunov bound~\cite{gosset_local_2016}, applicable for
PBC, provides the following lower bound on the gap with $r=6$:
\begin{equation}
    \Delta(H_{\operatorname{SU}(3),\mathbf{8}}^{\circ}) \geq
    0.1945\ .
\end{equation}
The bound obtained by our method thus outperforms the Gosset-Mozgunov bound by a factor of $2$.

\subsubsection{The representation $\mathbf{10}$ VBS}
\label{secrep10}

Let us now turn towards an $\operatorname{SU}(3)$ VBS with representation $\mathbf{10}$ per site.
We denote the $\operatorname{SU}(3)$ singlet in the space $\mathbf{3} \otimes \mathbf{3} \otimes \mathbf{3} = \mathbf{1} \oplus 2\cdot \mathbf{8} \oplus \mathbf{10}$ by $\begin{tikzpicture}[baseline={([yshift=-.5ex]current bounding box.center)}]
    \draw (0,0) -- (1,0);
    \draw[fill=white] (0,0) circle (2pt);
    \draw[fill=white] (0.5,0) circle (2pt);
    \draw[fill=white] (1,0) circle (2pt);
\end{tikzpicture}$. We can construct a VBS state that transforms under the representation $\mathbf{10}$ by inserting translations of such a singlet on each lattice site and then projecting onto the representation $\mathbf{10}$. This state can be visualized as
\begin{equation}
\label{rep10}
    \begin{tikzpicture}[baseline={([yshift=-.5ex]current bounding box.center)}]
    \draw (0,0) -- (1,0);
    \draw[fill=white] (0,0) circle (2pt);
    \draw[fill=white] (0.5,0) circle (2pt);
    \draw[fill=white] (1,0) circle (2pt);
    \draw (1.5,0) -- (2.5,0);
    \draw[fill=white] (1.5,0) circle (2pt);
    \draw[fill=white] (2,0) circle (2pt);
    \draw[fill=white] (2.5,0) circle (2pt);
    \draw (3,0) -- (3.75,0);
    \draw[fill=white] (3,0) circle (2pt);
    \draw[fill=white] (3.5,0) circle (2pt);
    \draw (-0.6,-0.25) node {$\dots$};
    \draw (-0.25,-0.25) -- (0,-0.25);
    \draw[fill=white] (0,-0.25) circle (2pt);
    \draw (0.5,-0.25) -- (1.5,-0.25);
    \draw[fill=white] (0.5,-0.25) circle (2pt);
    \draw[fill=white] (1,-0.25) circle (2pt);
    \draw[fill=white] (1.5,-0.25) circle (2pt);
    \draw (2,-0.25) -- (3,-0.25);
    \draw[fill=white] (2,-0.25) circle (2pt);
    \draw[fill=white] (2.5,-0.25) circle (2pt);
    \draw[fill=white] (3,-0.25) circle (2pt);
    \draw (3.5,-0.25) -- (3.75,-0.25);
    \draw[fill=white] (3.5,-0.25) circle (2pt);
    \draw (4.1,-0.25) node {$\dots$};
    \draw (-0.25,-0.5) -- (0.5,-0.5);
    \draw[fill=white] (0,-0.5) circle (2pt);
    \draw[fill=white] (0.5,-0.5) circle (2pt);
    \draw (1,-0.5) -- (2,-0.5);
    \draw[fill=white] (1,-0.5) circle (2pt);
    \draw[fill=white] (1.5,-0.5) circle (2pt);
    \draw[fill=white] (2,-0.5) circle (2pt);
    \draw (2.5,-0.5) -- (3.5,-0.5);
    \draw[fill=white] (2.5,-0.5) circle (2pt);
    \draw[fill=white] (3,-0.5) circle (2pt);
    \draw[fill=white] (3.5,-0.5) circle (2pt);
    \draw[blue, fill=blue,opacity=0.2] (-0.15,-0.75) rectangle (0.15,0.25);
    \draw[blue, thick] (0,-0.75) -- (0,-1);
    \draw[blue, fill=blue,opacity=0.2] (0.35,-0.75) rectangle (0.65,0.25);
    \draw[blue, thick] (0.5,-0.75) -- (0.5,-1);
    \draw[blue, fill=blue,opacity=0.2] (0.85,-0.75) rectangle (1.15,0.25);
    \draw[blue, thick] (1,-0.75) -- (1,-1);
    \draw[blue, fill=blue,opacity=0.2] (1.35,-0.75) rectangle (1.65,0.25);
    \draw[blue, thick] (1.5,-0.75) -- (1.5,-1);
    \draw[blue, fill=blue,opacity=0.2] (1.85,-0.75) rectangle (2.15,0.25);
    \draw[blue, thick] (2,-0.75) -- (2,-1);
    \draw[blue, fill=blue,opacity=0.2] (2.35,-0.75) rectangle (2.65,0.25);
    \draw[blue, thick] (2.5,-0.75) -- (2.5,-1);
    \draw[blue, fill=blue,opacity=0.2] (2.85,-0.75) rectangle (3.15,0.25);
    \draw[blue, thick] (3,-0.75) -- (3,-1);
    \draw[blue, fill=blue,opacity=0.2] (3.35,-0.75) rectangle (3.65,0.25);
    \draw[blue, thick] (3.5,-0.75) -- (3.5,-1);
    \end{tikzpicture},
\end{equation}
where the vertical blue line denotes the projection onto $\mathbf{10}$. To write this state as an MPS, we first provide a translation invariant MPS description of the three singlet thirds found inside each blue box in \cref{rep10}:
\begin{equation}
    \begin{tikzpicture}[baseline={([yshift=-.5ex]current bounding box.center)}]
        \fill (0,0) circle (3pt);
        \draw (0,0) node[anchor=north] {$M$};
        \draw[very thick] (-0.5,0) -- (0.5,0);
        \draw (0,0) -- (0,0.5);
        \draw (-0.05,0) -- (-0.05,0.5);
        \draw (0.05,0) -- (0.05,0.5);
    \end{tikzpicture} = \sum_{a,b,c} M^{abc} \otimes \ket{abc} = \sum_{\substack{i,j,k,l,\\ a,b,c}}\delta_i^{a} \epsilon_{jk}^{b} \delta_l^{c} \ket{ij}\bra{kl} \otimes \ket{abc} = \begin{tikzpicture}[baseline={([yshift=-.5ex]current bounding box.center)}]
        \draw (0,0) -- (0.5,0);
        \draw (0,-0.125) -- (0.625,-0.125);
        \fill (0.5,0) circle (2pt);
        \draw (0.5,0) -- (0.5,0.5);
        \draw (0.625,-0.125) -- (0.625,0.5);
        \fill (0.625,-0.125) circle (2pt);
        \draw (0.75,-0.25) -- (0.75,0.5);
        \fill (0.75,-0.25) circle (2pt);
        \draw (0.75,-0.25) .. controls (0.875,-0.125) .. (1.25,-0.125);
        \draw (0.625,-0.125) .. controls (0.875,0) .. (1.25,0);
    \end{tikzpicture} \ ,
\end{equation}
where $\epsilon_{jk}^b$ is the fully antisymmetric tensor.

The projection onto the representation \textbf{10} is realized by a partial isometry denoted by $W$ (see \cref{W} for details) that matches the basis vectors of the representations $\mathbf{3} \otimes \mathbf{3} \otimes \mathbf{3}$ and $\mathbf{10}$, so the tensor generating the state given in \cref{rep10} is
\begin{equation}
    A^{i} = \sum_{a,b,c} W_{abc}^{i}M^{abc} = \begin{tikzpicture}[baseline={([yshift=-.5ex]current bounding box.center)}]
        \fill (0,0) circle (3pt);
        \draw (0,0) node[anchor=north] {$M$};
        \draw[very thick] (-0.5,0) -- (0.5,0);
        \draw (0,0) -- (0,0.5);
        \draw (-0.05,0) -- (-0.05,0.5);
        \draw (0.05,0) -- (0.05,0.5);
        \draw (-0.25,0.5) -- (0.25,0.5);
        \draw (-0.25,0.5) -- (0,0.9);
        \draw (0.25,0.5) -- (0,0.9);
        \draw (0,0.9) -- (0,1.05);
        \draw (0,0.63) node[] {\tiny $W$};
    \end{tikzpicture} \ .
\end{equation}

So far, this MPS representation has bond dimension 9, and the bonds can be understood to belong to the space $\mathbf{3} \otimes \bar{\mathbf{3}}$. The transfer operator $\mathbb{T}: \rho \mapsto \sum_i (A^{i})^\dagger \rho A^{i}$ of the MPS has 1 as its unique largest eigenvalue, but the corresponding eigenvector is not positive definite: it has a zero eigenvalue. The eigenvector $\ket{\Omega}$ belonging to that zero eigenvalue is the singlet in $\mathbf{3} \otimes \bar{\mathbf{3}}$, thus it transforms under the trivial representation $\mathbf{1}$. 
In fact, one can verify that for all $i$, $A^{i}\ket{\Omega} = 0$ -- that is, the bond of the MPS is not supported on the singlet space of $\mathbf{3}\otimes\bar{\mathbf{3}}$\footnote{This can be understood from the fact that the three $\bm{3}$-irreps have to fuse to the $\mathbf{10}$-representation, which is impossible when starting from a $1$-dimensional irrep on either the left or the right bond.} -- and thus, we can compress the MPS tensor to the $\mathbf{8}$-representation in $\mathbf{3}\otimes\bar{\mathbf{3}}$, resulting in an MPS with bond dimension $8$. As for the previous example, it can be verified that the resulting MPS tensor is normal with injectivity length $\ell =3$.

For the state constructed in~\cref{rep10}, one can construct the Hamiltonian based on \cref{eq:general_vbs}, given by the local terms~\cite{greiter_valence_2007}:
\begin{equation}
\label{vbs10ham}
    h_{\operatorname{SU}(3),\mathbf{10}} =  (\mathbf{J}_i \mathbf{J}_{i+1})^2 + 5\  \mathbf{J}_i \mathbf{J}_{i+1} + 6,
\end{equation}
where $\mathbf{J}$ now consists of the $10 \times 10$ matrices that are the generators of $\operatorname{SU}(3)$ in the 10-dimensional irreducible representation.\footnote{
The projector-valued, i.e., spectrally flattened, Hamiltonian is
\begin{equation}
    k_{\operatorname{SU}(3),\mathbf{10}} = \left((\mathbf{J}_i \mathbf{J}_{i+1})^2 + 5\  \mathbf{J}_i \mathbf{J}_{i+1} + 6 \right) \left(\frac{1}{6} - \frac{2}{45} \mathbf{J}_i\mathbf{J}_{i+1} \right),
\end{equation}
}
It is straightforward to verify that the Hamiltonian $H_{\operatorname{SU}(3),\mathbf{10}} = \sum_{i} (h_{\operatorname{SU}(3),\mathbf{10}})_i$ is a weak parent Hamiltonian (cf.~\cref{sec:weakparent}) of the previously obtained normal MPS. To prove that this is a parent Hamiltonian, it is straightforward to check that the dimension of the Hamiltonian's kernel is indeed $D^2 = 64$ for the system size $\ell+1 = 4$. Consequently,  $H_{\operatorname{SU}(3),\mathbf{10}}$ has a unique ground state and is gapped by~\cref{thm:all_parent_gapped}.

Let us now look for uniform lower bounds on the spectral gap, fixing the
computational cost to diagonalizing a Hamiltonian of up to $r\le 6$ sites. We
find that to obtain an explicit bound using the original FNW method  (\cref{thm:FNW92}
in \cref{sec:fnw}), we would need to diagonalize a $16$-site Hamiltonian, which is prohibitive.
In contrast, the optimal $\gamma_{p,q}$, calculated
using~\cref{theorem:gamma-exact}, gets below the $\tfrac{1}{2}$ threshold -- and
thus, one can obtain a quantitative bound on the gap -- already for $(p,q)
=(2,0)$, i.e., $r=4$. By applying~\cref{theorem:gap-from-gamma} we find that
choosing larger block sizes gives tighter lower bounds for the spectral gap.
Specifically,  we obtain
\begin{equation}
    \Delta(H_{\operatorname{SU}(3),\mathbf{10}}) \geq 0.2158
\end{equation}
with $(p,q) = (3,0)$ for OBC and
\begin{equation}
    \Delta(H_{\operatorname{SU}(3),\mathbf{10}}^\circ) \geq 0.2204
\end{equation}
with $(p,q) =(3,0)$ for PBC.
In contrast, both for the Knabe and the Gosset-Mozgunov method, the finite-size gaps for $r\le 6$ are too small to yield a non-zero uniform lower bound on the spectral gap.
Our method is thus the only strategy which can provide an explicit bound on the gap of this model.

\section{Conclusion}
In this article, we have provided a technique to compute rigorous lower bounds
on the  spectral gaps of parent Hamiltonians of normal and block-normal Matrix
Product States (MPS), building on the martingale method and improving it in several
places. The key improvement is an efficient algorithm to compute the central
quantity $\gamma_{p,q}$ of the martingale method -- measuring the angle between 
ground spaces on overlapping
blocks -- exactly, at a cost which only scales linearly with the block size.
This enables a significant improvement over both the original martingale method,
which only gave a bound on $\gamma_{p,q}$, and over direct computation of
the exact $\gamma_{p,q}$, which scales exponentially in the size of
the overlapping blocks.  The computational bottleneck then becomes computing the gap
of a suitable blocked Hamiltonian on a single block, which scales exponentially in the
size of the block, rather than of two overlapping ones. The ability to compute
$\gamma_{p,q}$ exactly \emph{and} efficiently allows to reach
significantly larger block sizes, and thus to obtain better bounds on the gap, or,
in some cases, to obtain a non-trivial
bound at all.  We demonstrated the ability of our method to outperform other
methods through a number of examples: We observed that it performs better
than finite-size methods such as the Knabe and the Gosset-Mozgunov method, in particular for
high-dimensional systems, including $\mathrm{SU}(3)$ valence bond models, or for
strongly deformed models with large correlation lengths. As a side result, we
found that our numerically motivated method to compute $\gamma_{p,q}$ also
gave a much simplified proof that $\gamma_{p,q}\to0$, which implies the
existence of a spectral gap for all parent Hamiltonians of normal and block-normal MPS.

Two natural extensions of the method, to be explored in future work, are its
application to general frustration free Hamiltonians, where a (non-translation invariant) MPS
describing the ground space is generated dynamically from the Hamiltonian, and
the extension of the reshuffling in \cref{lem:block-obc} and~\ref{lem:block-pbc}
to a telescopic sum of arbitrary operators.

\section*{Acknowledgements}
We thank Ilya Kull for useful discussions. This research was funded by the Austrian Science Fund (FWF) [Grant Numbers 
\href{https://doi.org/10.55776/COE1}{10.55776/COE1},
\href{https://doi.org/10.55776/F71}{10.55776/F71},
and \href{https://doi.org/10.55776/P36305}{10.55776/P36305}] the European Union through NextGenerationEU, and the European Research Council (ERC) under the European Union’s Horizon 2020 research and innovation programme (grant agreement No.~863476). For open access purposes, the authors have applied a CC BY public copyright license to any author accepted manuscript version arising from this submission.

\bibliographystyle{quantum.bst}
\bibliography{references}

\appendix
\crefalias{section}{appsec}
\crefalias{subsection}{appsec}
\section{Additional details for the proof of~\cref{theorem:gamma-exact}}
\label{appendix:exact-gamma-details}
\cref{theorem:gamma-exact} provides an efficient and optimal solution to the inequality~\cref{eq:PQ-QP-condition} for parent Hamiltonians of normal and block-normal MPS. In this section, we provide a more detailed derivation of the equality $\gamma_{p,q} = \sqrt{\lambda_2(\hat{\Xi}})$ given in its proof.

We first state Jordan's lemma \cite{bhatia_matrix_1997}: let $Q$ and $R$ be orthogonal projectors on a finite-dimensional Hilbert-space $\mathcal{H}$ and let $0 < \theta_1 \leq \dots \leq \theta_k < \pi/2$ be the principal angles between the subspaces $\operatorname{Im}(Q)$ and $\operatorname{Im}(R)$. Then there exists an orthonormal basis of $\mathcal{H}$ in which $Q$ and $R$ are simultaneously block-diagonal, and the blocks are either $2 \times 2$:
\begin{equation}
    \label{jordaneq2}
        Q^{(i)} = \begin{pmatrix}
            1 & 0 \\ 0 & 0
        \end{pmatrix}, \quad R^{(i)} = \begin{pmatrix}
            \cos^2\theta_i & \cos \theta_i \sin \theta_i \\ \cos \theta_i \sin \theta_i & \sin^2 \theta_i
        \end{pmatrix} \quad (i \in \{1,\dots,k\}),
\end{equation}
or $1 \times 1$ of the form
\begin{equation}
    \label{jordaneq1}
        \mathfrak{q}^{(i)},\mathfrak{r}^{(i)} \in \{(0),(1)\} \quad (i \in \{k+1,\dots,n-k\}).
    \end{equation}

The power of this result is the possibility to determine polynomials of $Q$ and $R$ simply by reducing the problem to $1 \times 1$ and $2 \times 2$ blocks. 

By writing out the projection $\Pi$ explicitly, \cref{eq:PQ-QP-condition} becomes
\begin{equation}
\label{prineq}
    \{\Pi_{2p+q}^{\perp}\otimes \mathbb{1}_{p+q}, \mathbb{1}_{p+q} \otimes \Pi_{2p+q}^{\perp}\} \geq - \gamma_{p,q} (\Pi_{2p+q}^{\perp}\otimes \mathbb{1}_{p+q} + \mathbb{1}_{p+q} \otimes \Pi_{2p+q}^{\perp}).
\end{equation}

We now compute the smallest nonnegative number $\gamma_{p,q}$ that solves~\cref{eq:PQ-QP-condition} for $Q = \Pi_{2p+q}^{\perp} \otimes \mathbb{1}_{p+q}$ and $R = \mathbb{1}_{p+q} \otimes \Pi_{2p+q}^{\perp}$ by applying Jordan's lemma. Due to the block-diagonal structure of $Q$ and $R$, we set $\gamma_{p,q} = \max_i \gamma_{p,q}^{(i)}$, where $\gamma_{p,q}^{(i)}$ is the smallest nonnegative number satisfying the inequality in the $1 \times 1$ or $2 \times 2$ block indexed by $i$. Note that choosing any $\gamma_{p,q}$ larger than $\max_i \gamma_{p,q}^{(i)}$ would automatically solve~\cref{eq:PQ-QP-condition}, due to $\gamma_{p,q}$ being a prefactor of a positive semidefinite term. Let us now determine the values of $\gamma_{p,q}^{(i)}$.

Since the $1\times 1$ blocks $\mathfrak{q}^{(i)}$ and $\mathfrak{r}^{(i)}$ are either 0 or 1, we have $\mathfrak{q}^{(i)}\mathfrak{r}^{(i)} + \mathfrak{r}^{(i)}\mathfrak{q}^{(i)} \geq 0$, so we have $\gamma_{p,q}^{(i)} = 0$ for all $i \in \{k+1,\dots,n-k\}$. On the level of $2 \times 2$ blocks $Q^{(i)}$ and $R^{(i)}$, \cref{eq:PQ-QP-condition} reduces to the system of inequalities
\begin{equation}
    Q^{(i)}\cdot R^{(i)} + R^{(i)} \cdot Q^{(i)} + \gamma_{p,q}^{(i)} (Q^{(i)}+R^{(i)}) \geq 0
\end{equation}
for $i \in \{1,\dots,k\}$. Any of these inequalities can be solved in a straightforward way by diagonalizing the $2 \times 2$ matrix on the left-hand side, leading us to
\begin{equation}
\label{gammasineq}
    \gamma_{p,q}^{(i)} = \cos \theta_i \quad \forall i \in \{1,\dots,k\}.
\end{equation}
Since the eigenvalues of each $Q^{(i)} \cdot R^{(i)}$ are $\{\cos^2\theta_i, 0\}$, the value of $\gamma_{p,q}^{(i)}$ is the square root of the nonzero eigenvalue of $Q^{(i)}\cdot R^{(i)}$ for any $i \in \{1,\dots,k\}$ and thus $\gamma_{p,q}$ is the square root of the largest eigenvalue of $Q \cdot R$ in the interval $(0,1)$.

Exchanging $Q$ with $\mathbb{1}-Q$ and $R$ with $\mathbb{1}-R$ in the block-diagonal form provides the same system of inequalities as in \cref{gammasineq}, since it only flips the diagonal entries of each $Q^{(i)}$ and $R^{(i)}$. Consequently, it is possible to work directly with ground space projectors, and the value of $\gamma_{p,q}$ remains unchanged if we exchange $\Pi_{2p+q}^{\perp}$ with $\Pi_{2p+q}$ in~\cref{prineq}. In addition, the largest eigenvalue of $(\Pi_{2p+q}\otimes \mathbb{1}_{p+q})(\mathbb{1}_{p+q} \otimes \Pi_{2p+q})$ is 1, since it leaves the $(3p+2q)$-site ground states invariant, so $\gamma_{p,q}$ is the square root of the largest eigenvalue of $(\Pi_{2p+q}\otimes \mathbb{1}_{p+q})(\mathbb{1}_{p+q} \otimes \Pi_{2p+q}) = \hat \Xi$ distinct from 1.

\section{Overview of the results of Fannes-Nachtergaele-Werner (FNW) and Nachtergaele (martingale method)}
\label{appendix:fnw-n}
In this appendix, we provide an overview of the original results of Fannes-Nachtergaele-Werner (FNW) and of Nachtergaele in the terminology of MPS, and provide a self-contained derivation of the FNW result.

\subsection{Normal MPS and the result of Fannes-Nachtergaele-Werner}\label{sec:fnw}

In the context of parent Hamiltonians of normal MPS, \cref{theorem:gap-from-gamma} was first introduced in~\cite{fannes_finitely_1992}, with the specific choice $\tilde{h}_{\tilde{\imath}}=\tfrac{H_{2p}}{2}$. In order to prove that parent Hamiltonians of normal MPS are gapped, they provided a $p$-dependent upper bound on the value of $\gamma_{p,q}$ for any choice of $q\ge 0$. In the following, we restate this upper bound and reformulate its proof by Jordan's lemma and the graphical notation natural for MPS.
\begin{theorem}[FNW~\cite{fannes_finitely_1992}]\label{thm:FNW92}
    For parent Hamiltonians of normal MPS with injectivity length $\ell$, let $\rho$ be the unique fixed point of $\mathcal T$ with $\Tr(\rho)=1$. We set $a(p):=\Tr(\rho^{-1})\|\mathcal{T}^p-\mathcal{T}_\infty\|$, then we have for all $q \ge 0$ and $p \ge \ell$ large enough such that $a(p) <1$:
    \begin{equation}
    \label{fnw}
    \gamma_{p,q} \leq a(p) \frac{1+a(p)}{1-a(p)} 
\ .
\end{equation}
If $\mathcal T$ is a normal matrix, it can be unitarily
diagonalized, and  $\rho=\frac1D\mathbb{1}$, hence one can choose $a(p)\le D^2
|\lambda_2(\mathcal{T})|^p$.
\end{theorem}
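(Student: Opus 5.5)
The plan is to reduce $\gamma_{p,q}$ to an operator-norm quantity and then estimate that quantity by separating out the asymptotic transfer matrix $\mathcal T_\infty$ of \cref{transferlimit}.

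\emph{Reduction.} Let $E_L:=\Pi_{2p+q}\otimes\mathbb 1_{p+q}$ and $E_R:=\mathbb 1_{p+q}\otimes\Pi_{2p+q}$ act on the $p|q|p|q|p$ chain. Let $E_{LR}$ be the projector onto $\operatorname{Im}E_L\cap\operatorname{Im}E_R$, which is $\mathcal S_{3p+2q}$. By the Jordan-lemma analysis of \cref{appendix:exact-gamma-details}, $\gamma_{p,q}$ equals the cosine of the smallest nonzero principal angle between $\operatorname{Im}E_L$ and $\operatorname{Im}E_R$. Hence
\[
\gamma_{p,q}=\|E_LE_R-E_{LR}\| = \sup\bigl\{|\langle\phi,\chi\rangle| : \phi\in\operatorname{Im}E_L\ominus\mathcal S_{3p+2q},\ \chi\in\operatorname{Im}E_R\ominus\mathcal S_{3p+2q},\ \|\phi\|=\|\chi\|=1\bigr\}.
\]
It therefore suffices to show $\|E_LE_R-E_{LR}\|\le a(p)\frac{1+a(p)}{1-a(p)}$ for every $q\ge0$.

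\emph{Parametrization and splitting.} We put $A$ in the canonical form \cref{canonical}. A vector in $\operatorname{Im}E_L$ is $\sum_k\ket{\Psi_{2p+q}(A,X_k)}\otimes\ket{\xi_k}$, which is an MPS on the left $2p+q$ sites whose right bond is contracted with an arbitrary vector on the remaining $q+p$ sites. Vectors in $\operatorname{Im}E_R$ have the mirror form. Any inner product $\langle\phi,\chi\rangle$, and equally the action of $E_L$ on a vector of $\operatorname{Im}E_R$, can be contracted so that the only place where the two ground spaces talk to each other is the shared central $p$-block, which contributes a factor $\mathcal T^p$.

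We write $\mathcal T^p=\mathcal T_\infty+(\mathcal T^p-\mathcal T_\infty)$. The $\mathcal T_\infty$ part is rank one: it is $\rho$ composed with the identity contraction. Replacing $\mathcal T^p$ by $\mathcal T_\infty$ cuts the bond through the middle block, so the left and right boundary data factorize. In that idealized problem $E_L$ and $E_R$ commute and $E_LE_R=E_{LR}$; this is the same mechanism that makes $\Xi$ a projector in the proof of \cref{cor:gamma-to-zero}. The whole of $E_LE_R-E_{LR}$ then comes from the remainder term $\mathcal T^p-\mathcal T_\infty$. We bound that term by $\|\mathcal T^p-\mathcal T_\infty\|$ times the norms of the boundary matrices, which we control as follows.

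\emph{Converting norms (expected main obstacle).} The boundary matrices are the virtual-level objects contracted into the central block. To bound them by the physical norms $\|\phi\|,\|\chi\|$, we use that the Gram form of $X\mapsto\Psi_p(A,X)$ is $T_p=T_\infty+(T_p-T_\infty)$. Here $T_\infty$ is the $\rho$-weighted Hilbert--Schmidt form, and inverting it costs a factor $\Tr(\rho^{-1})$, since $\|X\|_2^2\le\Tr(\rho^{-1})\,\langle X,T_\infty X\rangle$ up to the appropriate normalization. The deviation $T_p-T_\infty$ enters relative to $T_\infty$ with size at most $a(p)$. By a Neumann series, the inverse Gram form, and hence the pseudoinverse projectors $\mathcal P(\mathcal P^\dagger\mathcal P)^{-1}\mathcal P^\dagger$, is controlled up to a factor $(1-a(p))^{-1}$. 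The mismatch between the projection with the true Gram form and the idealized one contributes the extra $(1+a(p))$ factor.

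The careful bookkeeping of these three effects is where I expect the real work:
\begin{itemize}
\item the remainder term itself, of size $a(p)$;
\item the perturbed normalization, contributing $(1-a)^{-1}$;
\item the cross terms between the perturbed and unperturbed projections, contributing $1+a$.
\end{itemize}
Together they should assemble into $a(1+a)/(1-a)$. The condition $p\ge\ell$ guarantees that the relevant $T$'s are invertible on $\mathcal M_D$, and $a(p)<1$ guarantees convergence of the Neumann series.

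\emph{Normal $\mathcal T$.} If $\mathcal T$ is a normal matrix, its fixed point is $\rho=\mathbb 1/D$, so $\Tr(\rho^{-1})=D^2$. Since $\mathcal T$ is unitarily diagonalizable, $\|\mathcal T^p-\mathcal T_\infty\|=|\lambda_2(\mathcal T)|^p$, which gives $a(p)\le D^2|\lambda_2(\mathcal T)|^p$.
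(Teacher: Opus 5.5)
\textbf{Verdict: there is a genuine gap.} Your reduction to $\gamma_{p,q}=\|E_LE_R-E_{LR}\|$ (Jordan's lemma plus the intersection property) agrees with the paper, as do the parametrization of $\operatorname{Im}E_L$ and $\operatorname{Im}E_R$, the splitting $\mathcal T^p=\mathcal T_\infty+(\mathcal T^p-\mathcal T_\infty)$ of the shared middle block, and your remark for normal $\mathcal T$. The problem is that the estimate itself, which is the whole content of the theorem, is deferred to bookkeeping that ``should assemble'' into $a\frac{1+a}{1-a}$.

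\textbf{The $\mathcal T_\infty$ term does not vanish.} The heuristic you offer is wrong at the decisive point: it is not true that all of $E_LE_R-E_{LR}$ comes from the remainder term. Take $\phi\in\operatorname{Im}E_L\ominus\mathcal S_{3p+2q}$ and $\chi\in\operatorname{Im}E_R\ominus\mathcal S_{3p+2q}$. The $\mathcal T_\infty$-part of $\langle\phi,\chi\rangle$ factorizes into a $\rho$-inner product $\langle\Delta_\chi,\Delta_\phi\rangle_\rho$ of two $D\times D$ matrices. These are obtained by contracting the free data of $\chi$ (resp.\ $\phi$) against the adjacent $(p+q)$-block of the MPS, and the inner product is nonzero in general. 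The idealized commutation picture does not kill it, because $\phi,\chi$ are orthogonal to the \emph{true} $\mathcal S_{3p+2q}$, not to the intersection of idealized ranges.

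\textbf{The missing idea} is to feed that orthogonality back in. Since $\langle\Psi_{3p+2q}(A,Z),\chi\rangle=0$ for all $Z$, splitting the middle $\mathcal T^p$ once more shows that $\langle Z,\Delta_\chi\rangle_\rho$ equals minus a remainder term. Taking $Z=\Delta_\chi$ gives $\|\Delta_\chi\|_\rho\le\frac{a}{\sqrt{1-a}}\|\chi\|$, and likewise for $\phi$. Cauchy--Schwarz then bounds the $\mathcal T_\infty$-term by $\frac{a^2}{1-a}\|\phi\|\,\|\chi\|$. This additive second-order term, added to the $\frac{a}{1-a}$ from the remainder, is where $a\frac{1+a}{1-a}$ comes from. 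It is not a multiplicative ``cross-term'' factor $1+a$.

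\textbf{Where $(1-a)^{-1}$ and $\Tr(\rho^{-1})$ actually come from.} They do not arise from a Neumann series for an inverse Gram form; the paper never inverts anything. The argument runs as follows.
\begin{enumerate}
\item Bound the remainder term by $\|\mathcal T^p-\mathcal T_\infty\|\,\|M\|_1$, where $M$ is the rest of the network viewed as an operator on the cut bonds.
\item Insert $\rho^{1/2}\rho^{-1/2}$ on both legs and apply H\"older. This is exactly where $\|\rho^{-1/2}\otimes\rho^{-1/2}\|_2=\Tr(\rho^{-1})$ enters.
\item Split the remaining $2$-norm into two halves. Each half is the square root of the idealized norm $\|\phi\|_\infty^2$, meaning the norm with the middle block replaced by $\mathcal T_\infty$.
\item Apply the same estimate to $\langle\phi,\phi\rangle$. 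This gives $\bigl|\|\phi\|^2-\|\phi\|_\infty^2\bigr|\le a\|\phi\|_\infty^2$, hence $\|\phi\|_\infty^2\le\|\phi\|^2/(1-a)$.
\end{enumerate}
This is a self-consistent bound, not a series.

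\textbf{Why your Gram-form route is unclear.} It would need a relative bound $\|T_\infty^{-1/2}(T_p-T_\infty)T_\infty^{-1/2}\|\le a(p)$, and this is not immediate. The operator norm is not preserved under the index regrouping between $\mathcal T^p$ and $T_p$. Moreover, inverting $T_\infty\simeq\mathbb 1\otimes\rho$ costs $\|\rho^{-1}\|$ rather than producing $\Tr(\rho^{-1})$ in the required way. So even if this route can be made rigorous, it is not clear it yields the stated constant.
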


Since $a(p)\to 0$, we have that $\gamma_{p,q}\to0$ as $p\to\infty$, and
thus it will always give a non-trivial lower bound on the gap in
\cref{theorem:gap-from-gamma} for large enough values of $p$. This bound provides a different way of proving that parent Hamiltonians of normal MPS are always gapped.

\noindent\emph{Proof.---} We first prove that
\begin{equation}
\label{gammanorm}
\begin{aligned}
     \{\Pi_{2p+q}^{\perp} & \otimes \mathbb{1}_{p+q}, \mathbb{1}_{p+q} \otimes \Pi_{2p+q}^{\perp}\} \\ &\geq - \|(\Pi_{2p+q}\otimes \mathbb{1}_{p+q})(\mathbb{1}_{p+q} \otimes \Pi_{2p+q})-\Pi_{3p+2q}\|\cdot(\Pi_{2p+q}^{\perp}\otimes \mathbb{1}_{p+q} + \mathbb{1}_{p+q} \otimes \Pi_{2p+q}^{\perp}),
\end{aligned}
\end{equation}
more precisely, that $\gamma_{p,q} =\|(\Pi_{2p+q}\otimes \mathbb{1}_{p+q})(\mathbb{1}_{p+q} \otimes \Pi_{2p+q})-\Pi_{3p+2q}\|$. As seen in~\cref{appendix:exact-gamma-details}, we can simultaneously block-diagonalize the orthogonal projectors $(\Pi_{2p+q}\otimes \mathbb{1}_{p+q})$ and $(\mathbb{1}_{p+q} \otimes \Pi_{2p+q})$ due to Jordan's lemma. On the level of $1\times 1$ blocks, the eigenvalues of the anticommutator are nonnegative, so the inequality is trivially satisfied. Due to the intersection property of normal MPS~\cite{schuch_peps_2010}, $\Pi_{3p+2q}$ is zero on the $2 \times 2$ blocks, and direct calculation shows that~\cref{gammanorm} reduces to a system of inequalities
\begin{equation}
    \cos\theta_i (1\pm \cos\theta_i) \geq \cos\theta_1(\cos\theta_i \mp 1), 
\end{equation}
which is satisfied for $\theta_i \geq \theta_1$, and thus the claim is proven.

By setting $\mathcal{Q} \coloneq \operatorname{Im}(\Pi_{2p+q}\otimes \mathbb{1}_{p+q}) \cap \operatorname{Im}(\Pi_{3p+2q})^{\perp}$ and $\mathcal{R} \coloneq \operatorname{Im}(\mathbb{1}_{p+q} \otimes \Pi_{2p+q}) \cap \operatorname{Im}(\Pi_{3p+2q})^{\perp}$, we can write
\begin{equation}
\gamma_{p,q} = \sup_{\substack{\ket{\Psi} \in \mathcal{R} \setminus \{0\} \\ \ket{\Phi} \in  \mathcal{Q} \setminus \{0\}}} \frac{|\braket{\Phi}{\Psi}|}{ \|\ket{\Phi}\| \cdot \|\ket{\Psi}\|}.
\end{equation}
As $\ket{\Phi} \in \mathcal{Q} \setminus \{0\}$ and $\ket{\Psi} \in \mathcal{R} \setminus \{0\}$, in graphical notation (cf.~\cref{eq:blocked_mps}) we have
\begin{equation}
    \ket{\Phi} = \begin{tikzpicture}[baseline={([yshift=-.5ex]current bounding box.center)}]
        \draw (0,0) -- (1.6,0);
        \draw (0,-0.8) -- (0,0);
        \draw (1.6,-0.8) -- (1.6,0);
        \draw (0,-0.8) -- (1.6,-0.8);
        \draw (0.4,0) -- (0.4,0.4);
        \draw (1,0) -- (1,0.4);
        \draw (1.6,0) -- (1.6,0.4);
        \fill (1.6,-0.4) ellipse (2pt and 14pt);
        \draw (1.6,-0.4) node[anchor=west]{$\hat{\Phi}$};
        \draw[fill=white] (0.4,0) circle (2pt);
        \draw[fill=white] (1,0) circle (2pt);
        \fill (1,0) circle (0.75pt);
        \fill (0.4,0) circle (1.25pt);
    \end{tikzpicture}, \qquad \ket{\Psi} = \begin{tikzpicture}[baseline={([yshift=-.5ex]current bounding box.center)}]
        \draw (0,0) -- (1.6,0);
        \draw (0,-0.8) -- (0,0);
        \draw (1.6,-0.8) -- (1.6,0);
        \draw (0,-0.8) -- (1.6,-0.8);
        \draw (0.6,0) -- (0.6,0.4);
        \draw (1.2,0) -- (1.2,0.4);
        \draw (0,0) -- (0,0.4);
        \fill (0,-0.4) ellipse (2pt and 14pt);
        \draw (0,-0.4) node[anchor=east]{$\hat{\Psi}$};
        \draw[fill=white] (0.6,0) circle (2pt);
        \draw[fill=white] (1.2,0) circle (2pt);
        \fill (0.6,0) circle (0.75pt);
        \fill (1.2,0) circle (1.25pt);
    \end{tikzpicture} \, ,
\end{equation}
where $\hat{\Phi}$ and $\hat{\Psi}$ are $d^{p+q}$-tuples of $D\times D$ matrices with the additional property that $\Pi_{3p+2q}\ket{\Phi} = \Pi_{3p+2q}\ket{\Psi} =0$, i.e., $\ket{\Phi}$ and $\ket{\Psi}$ are orthogonal to the matrix product ground state $\ket{\Psi_{3p+2q}(A,Z)}$ for any boundary $Z \in \mathcal{M}_D$:
\begin{equation}
\label{orth}
    \begin{tikzpicture}[baseline={([yshift=-.5ex]current bounding box.center)}]
        \draw (0,0) -- (1.6,0);
        \draw (0,-0.8) -- (0,0);
        \draw (1.6,-0.8) -- (1.6,0);
        \draw (0,-0.8) -- (1.6,-0.8);
        \draw (0.4,0) -- (0.4,0.4);
        \draw (1,0) -- (1,0.4);
        \draw (1.6,0) -- (1.6,0.4);
        \fill (1.6,-0.4) ellipse (2pt and 14pt);
        \draw (1.6,-0.4) node[anchor=west]{$\hat{\Phi}$};
        \draw (0.2,0.4) rectangle (1.8,1.2);
        \fill (1,1.2) circle (2pt);
        \draw (1,1.2) node[anchor=south]{$\bar{Z}$};
        \draw[fill=white] (0.4,0) circle (2pt);
        \draw[fill=white] (1,0) circle (2pt);
        \draw[fill=white] (1.6,0.4) circle (2pt);
        \draw[fill=white] (0.4,0.4) circle (2pt);
        \draw[fill=white] (1,0.4) circle (2pt);
        \fill (0.4,0) circle (1.25pt);
        \fill (1,0) circle (0.75pt);
        \fill (1.6,0.4) circle (1.25pt);
        \fill (0.4,0.4) circle (1.25pt);
        \fill (1,0.4) circle (0.75pt);
    \end{tikzpicture}
    = \begin{tikzpicture}[baseline={([yshift=-.5ex]current bounding box.center)}]
        \draw (0,0) -- (1.6,0);
        \draw (0,-0.8) -- (0,0);
        \draw (1.6,-0.8) -- (1.6,0);
        \draw (0,-0.8) -- (1.6,-0.8);
        \draw (0.6,0) -- (0.6,0.4);
        \draw (1.2,0) -- (1.2,0.4);
        \draw (0,0) -- (0,0.4);
        \fill (0,-0.4) ellipse (2pt and 14pt);
        \draw (0,-0.4) node[anchor=east]{$\hat{\Psi}$};
        \fill (0.6,1.2) circle (2pt);
        \draw (-0.2,0.4) rectangle (1.4,1.2);
        \draw (0.6,1.2) node[anchor=south] {$\bar{Z}$};
        \draw[fill=white] (0.6,0) circle (2pt);
        \draw[fill=white] (1.2,0) circle (2pt);
        \draw[fill=white] (0,0.4) circle (2pt);
        \draw[fill=white] (0.6,0.4) circle (2pt);
        \draw[fill=white] (1.2,0.4) circle (2pt);
        \fill (0.6,0) circle (0.75pt);
        \fill (1.2,0) circle (1.25pt);
        \fill (0,0.4) circle (1.25pt);
        \fill (0.6,0.4) circle (0.75pt);
        \fill (1.2,0.4) circle (1.25pt);
    \end{tikzpicture} = 0.
\end{equation}

We assume that the MPS tensor is in the canonical form given in~\cref{canonical} with the right fixed point of the transfer matrix denoted by $\rho$. On the space of $D \times D$ matrices we define $\langle X,Y\rangle_\rho = \Tr (\rho X^{\dagger}Y)$ and denote its induced norm by $\|.\|_\rho$. We introduce the Schatten $p$-norms given by $\|X\|_p = \left(\Tr\sqrt{X^{\dagger}X}^{\ p}\right)^{1/p}$ for $p \in [1,\infty]$ and recall Hölder's inequality which states that for $x,y,z\in[1,\infty]$ satisfying
\begin{equation}
    \frac{1}{x} + \frac{1}{y} = \frac{1}{z}
\end{equation}
we have
\begin{equation}
    \|XY\|_z \leq \|X\|_x \|Y\|_y\ .
\end{equation}
The operator norm corresponds to the Schatten $\infty$-norm, and we furthermore have $\|X\|_\rho = \|\sqrt{\rho} X\|_2$. Throughout the proof, we will repeatedly use the fact that the Schatten 2-norm of a tensor is invariant under arbitrary permutations of its indices.

To find an upper bound on $|\braket{\Phi}{\Psi}|$, we first use the triangle-inequality:
\begin{equation}
\label{fnw_subresult0}
    |\braket{\Phi}{\Psi}| \leq \left\vert \begin{tikzpicture}[baseline={([yshift=-.5ex]current bounding box.center)}]
        \draw (0,0) -- (1.6,0);
        \draw (0,-0.8) -- (0,0);
        \draw (1.6,-0.8) -- (1.6,0);
        \draw (0,-0.8) -- (1.6,-0.8);
        \draw (0.6,0) -- (0.6,0.4);
        \draw (1.2,0) -- (1.2,0.4);
        \draw (0,0) -- (0,0.4);
        \fill (0,-0.4) ellipse (2pt and 14pt);
        \draw (0,-0.4) node[anchor=east]{$\hat{\Psi}$};
        \draw (-0.4,0.4) -- (1.2,0.4);
        \draw (-0.4,0.4) -- (-0.4,1.2);
        \draw (1.2,0.4) -- (1.2,1.2);
        \draw (-0.4,1.2) -- (1.2,1.2);
        \fill (1.2,0.8) ellipse (2pt and 14pt);
        \draw (1.2,0.8) node[anchor=west]{$\bar{\hat{\Phi}}$};
        \draw[fill=white] (0.6,0) circle (2pt);
        \draw[fill=white] (0.6,0.4) circle (2pt);
        \draw[fill=white] (0,0.4) circle (2pt);
        \draw[fill=white] (1.2,0) circle (2pt);
        \fill (0.6,0) circle (0.75pt);
        \fill (0.6,0.4) circle (0.75pt);
        \fill (0.0,0.4) circle (1.25pt);
        \fill (1.2,0) circle (1.25pt);
    \end{tikzpicture} - \begin{tikzpicture}[baseline={([yshift=-.5ex]current bounding box.center)}]
        \draw (0,0) -- (0.5,0);
        \draw (0.5,0) -- (0.5,0.4);
        \draw (0.5,0.2) node[anchor=east]{$\rho$};
        \draw (0.7,0) -- (0.7,0.4);
        \draw (0.7,0) -- (1.6,0);
        \draw (0,-0.8) -- (0,0);
        \draw (1.6,-0.8) -- (1.6,0);
        \draw (0,-0.8) -- (1.6,-0.8);
        \draw (1.2,0) -- (1.2,0.4);
        \draw (0,0) -- (0,0.4);
        \fill (0,-0.4) ellipse (2pt and 14pt);
        \draw (0,-0.4) node[anchor=east]{$\hat{\Psi}$};
        \draw (-0.4,0.4) -- (0.5,0.4);
        \draw (0.7,0.4) -- (1.2,0.4);
        \draw (-0.4,0.4) -- (-0.4,1.2);
        \draw (1.2,0.4) -- (1.2,1.2);
        \draw (-0.4,1.2) -- (1.2,1.2);
        \fill (1.2,0.8) ellipse (2pt and 14pt);
        \draw (1.2,0.8) node[anchor=west]{$\bar{\hat{\Phi}}$};
        \draw[fill=white] (1.2,0) circle (2pt);
        \draw[fill=white] (0,0.4) circle (2pt);
        \fill (0.5,0.2) circle (2pt);
        \fill (1.2,0) circle (1.25pt);
        \fill (0.0,0.4) circle (1.25pt);
    \end{tikzpicture} \right\vert + \left\vert \begin{tikzpicture}[baseline={([yshift=-.5ex]current bounding box.center)}]
        \draw (0,0) -- (0.5,0);
        \draw (0.5,0) -- (0.5,0.4);
        \draw (0.5,0.2) node[anchor=east]{$\rho$};
        \draw (0.7,0) -- (0.7,0.4);
        \draw (0.7,0) -- (1.6,0);
        \draw (0,-0.8) -- (0,0);
        \draw (1.6,-0.8) -- (1.6,0);
        \draw (0,-0.8) -- (1.6,-0.8);
        \draw (1.2,0) -- (1.2,0.4);
        \draw (0,0) -- (0,0.4);
        \fill (0,-0.4) ellipse (2pt and 14pt);
        \draw (0,-0.4) node[anchor=east]{$\hat{\Psi}$};
        \draw (-0.4,0.4) -- (0.5,0.4);
        \draw (0.7,0.4) -- (1.2,0.4);
        \draw (-0.4,0.4) -- (-0.4,1.2);
        \draw (1.2,0.4) -- (1.2,1.2);
        \draw (-0.4,1.2) -- (1.2,1.2);
        \fill (1.2,0.8) ellipse (2pt and 14pt);
        \draw (1.2,0.8) node[anchor=west]{$\bar{\hat{\Phi}}$};
        \draw[fill=white] (1.2,0) circle (2pt);
        \draw[fill=white] (0,0.4) circle (2pt);
        \fill (0.5,0.2) circle (2pt);
        \fill (1.2,0) circle (1.25pt);
        \fill (0.0,0.4) circle (1.25pt);
    \end{tikzpicture} \right\vert.
\end{equation}
Since the transfer matrix of a normal MPS in the canonical form satisfies~\cref{transferlimit}, the first expression of the right-hand side can be understood as a composition with the map $(\mathcal{T}^p-\mathcal{T}_{\infty})$, so by $|\operatorname{Tr}(XY)| \leq \|X\| \cdot \|Y\|_1$ we obtain:
\begin{equation}
    \left\vert \begin{tikzpicture}[baseline={([yshift=-.5ex]current bounding box.center)}]
        \draw (0,0) -- (1.6,0);
        \draw (0,-0.8) -- (0,0);
        \draw (1.6,-0.8) -- (1.6,0);
        \draw (0,-0.8) -- (1.6,-0.8);
        \draw (0.6,0) -- (0.6,0.4);
        \draw (1.2,0) -- (1.2,0.4);
        \draw (0,0) -- (0,0.4);
        \fill (0,-0.4) ellipse (2pt and 14pt);
        \draw (0,-0.4) node[anchor=east]{$\hat{\Psi}$};
        \draw (-0.4,0.4) -- (1.2,0.4);
        \draw (-0.4,0.4) -- (-0.4,1.2);
        \draw (1.2,0.4) -- (1.2,1.2);
        \draw (-0.4,1.2) -- (1.2,1.2);
        \fill (1.2,0.8) ellipse (2pt and 14pt);
        \draw (1.2,0.8) node[anchor=west]{$\bar{\hat{\Phi}}$};
        \draw[fill=white] (0.6,0) circle (2pt);
        \draw[fill=white] (0.6,0.4) circle (2pt);
        \draw[fill=white] (0,0.4) circle (2pt);
        \draw[fill=white] (1.2,0) circle (2pt);
        \fill (0.6,0) circle (0.75pt);
        \fill (0.6,0.4) circle (0.75pt);
        \fill (0.0,0.4) circle (1.25pt);
        \fill (1.2,0) circle (1.25pt);
    \end{tikzpicture} - \begin{tikzpicture}[baseline={([yshift=-.5ex]current bounding box.center)}]
        \draw (0,0) -- (0.5,0);
        \draw (0.5,0) -- (0.5,0.4);
        \draw (0.5,0.2) node[anchor=east]{$\rho$};
        \draw (0.7,0) -- (0.7,0.4);
        \draw (0.7,0) -- (1.6,0);
        \draw (0,-0.8) -- (0,0);
        \draw (1.6,-0.8) -- (1.6,0);
        \draw (0,-0.8) -- (1.6,-0.8);
        \draw (1.2,0) -- (1.2,0.4);
        \draw (0,0) -- (0,0.4);
        \fill (0,-0.4) ellipse (2pt and 14pt);
        \draw (0,-0.4) node[anchor=east]{$\hat{\Psi}$};
        \draw (-0.4,0.4) -- (0.5,0.4);
        \draw (0.7,0.4) -- (1.2,0.4);
        \draw (-0.4,0.4) -- (-0.4,1.2);
        \draw (1.2,0.4) -- (1.2,1.2);
        \draw (-0.4,1.2) -- (1.2,1.2);
        \fill (1.2,0.8) ellipse (2pt and 14pt);
        \draw (1.2,0.8) node[anchor=west]{$\bar{\hat{\Phi}}$};
        \draw[fill=white] (1.2,0) circle (2pt);
        \draw[fill=white] (0,0.4) circle (2pt);
        \fill (0.5,0.2) circle (2pt);
        \fill (1.2,0) circle (1.25pt);
        \fill (0.0,0.4) circle (1.25pt);
    \end{tikzpicture} \right\vert \leq \|\mathcal{T}^p-\mathcal{T}_{\infty}\| \cdot \left\| \begin{tikzpicture}[baseline={([yshift=-.5ex]current bounding box.center)}]
        \draw (0,0) -- (0.4,0);
        \draw (0.8,0) -- (1.6,0);
        \draw (0,-0.8) -- (0,0);
        \draw (1.6,-0.8) -- (1.6,0);
        \draw (0,-0.8) -- (1.6,-0.8);
        \draw (1.2,0) -- (1.2,0.4);
        \draw (0,0) -- (0,0.4);
        \fill (0,-0.4) ellipse (2pt and 14pt);
        \draw (0,-0.4) node[anchor=east]{$\hat{\Psi}$};
        \draw (-0.4,0.4) -- (0.4,0.4);
        \draw (0.8,0.4) -- (1.2,0.4);
        \draw (-0.4,0.4) -- (-0.4,1.2);
        \draw (1.2,0.4) -- (1.2,1.2);
        \draw (-0.4,1.2) -- (1.2,1.2);
        \fill (1.2,0.8) ellipse (2pt and 14pt);
        \draw (1.2,0.8) node[anchor=west]{$\bar{\hat{\Phi}}$};
        \draw[fill=white] (0,0.4) circle (2pt);
        \draw[fill=white] (1.2,0) circle (2pt);
        \fill (0.0,0.4) circle (1.25pt);
        \fill (1.2,0) circle (1.25pt);
    \end{tikzpicture}  \right\|_1 .
\end{equation}
The tensor on the right-hand side is viewed as an operator from the left indices to the right indices. By putting $\rho^{1/2} \cdot \rho^{-1/2}$ on both input indices, we can use Hölder's inequality:
\begin{equation}
    \left\| \begin{tikzpicture}[baseline={([yshift=-.5ex]current bounding box.center)}]
        \draw (0,0) -- (0.4,0);
        \draw (0.8,0) -- (1.6,0);
        \draw (0,-0.8) -- (0,0);
        \draw (1.6,-0.8) -- (1.6,0);
        \draw (0,-0.8) -- (1.6,-0.8);
        \draw (1.2,0) -- (1.2,0.4);
        \draw (0,0) -- (0,0.4);
        \fill (0,-0.4) ellipse (2pt and 14pt);
        \draw (0,-0.4) node[anchor=east]{$\hat{\Psi}$};
        \draw (-0.4,0.4) -- (0.4,0.4);
        \draw (0.8,0.4) -- (1.2,0.4);
        \draw (-0.4,0.4) -- (-0.4,1.2);
        \draw (1.2,0.4) -- (1.2,1.2);
        \draw (-0.4,1.2) -- (1.2,1.2);
        \fill (1.2,0.8) ellipse (2pt and 14pt);
        \draw (1.2,0.8) node[anchor=west]{$\bar{\hat{\Phi}}$};
        \draw[fill=white] (0,0.4) circle (2pt);
        \draw[fill=white] (1.2,0) circle (2pt);
        \fill (0.0,0.4) circle (1.25pt);
        \fill (1.2,0) circle (1.25pt);
    \end{tikzpicture}  \right\|_1 \leq \|\rho^{-1/2} \otimes \rho^{-1/2}\|_2 \cdot \left\| \begin{tikzpicture}[baseline={([yshift=-.5ex]current bounding box.center)}]
        \draw (0,0) -- (0.6,0);
        \draw (0.9,0) -- (1.6,0);
        \draw (0,-0.8) -- (0,0);
        \draw (1.6,-0.8) -- (1.6,0);
        \draw (0,-0.8) -- (1.6,-0.8);
        \draw (1.2,0) -- (1.2,0.4);
        \draw (0,0) -- (0,0.4);
        \fill (0,-0.4) ellipse (2pt and 14pt);
        \draw (0,-0.4) node[anchor=east]{$\hat{\Psi}$};
        \draw (-0.4,0.4) -- (0.6,0.4);
        \draw (0.9,0.4) -- (1.2,0.4);
        \draw (-0.4,0.4) -- (-0.4,1.2);
        \draw (1.2,0.4) -- (1.2,1.2);
        \draw (-0.4,1.2) -- (1.2,1.2);
        \fill (1.2,0.8) ellipse (2pt and 14pt);
        \draw (1.2,0.8) node[anchor=west]{$\bar{\hat{\Phi}}$};
        \fill (0.3,0) circle (2pt);
        \fill (0.3,0.4) circle (2pt);
        \draw (0.3,0) node[anchor=north] {\tiny $\sqrt{\rho}$};
        \draw (0.3,0.4) node[anchor=south] {\tiny $\sqrt{\rho}$};
        \draw[fill=white] (0,0.4) circle (2pt);
        \draw[fill=white] (1.2,0) circle (2pt);
        \fill (0.0,0.4) circle (1.25pt);
        \fill (1.2,0) circle (1.25pt);
    \end{tikzpicture}  \right\|_2.
\end{equation}
Due to the 2-norm being invariant under index permutations, we can reinterpret the tensor on the right as a linear map from the bottom two to the top two indices. Since the Schatten norms are submultiplicative, we have
\begin{equation}
    \left\| \begin{tikzpicture}[baseline={([yshift=-.5ex]current bounding box.center)}]
        \draw (0,0) -- (0.6,0);
        \draw (0.9,0) -- (1.6,0);
        \draw (0,-0.8) -- (0,0);
        \draw (1.6,-0.8) -- (1.6,0);
        \draw (0,-0.8) -- (1.6,-0.8);
        \draw (1.2,0) -- (1.2,0.4);
        \draw (0,0) -- (0,0.4);
        \fill (0,-0.4) ellipse (2pt and 14pt);
        \draw (0,-0.4) node[anchor=east]{$\hat{\Psi}$};
        \draw (-0.4,0.4) -- (0.6,0.4);
        \draw (0.9,0.4) -- (1.2,0.4);
        \draw (-0.4,0.4) -- (-0.4,1.2);
        \draw (1.2,0.4) -- (1.2,1.2);
        \draw (-0.4,1.2) -- (1.2,1.2);
        \fill (1.2,0.8) ellipse (2pt and 14pt);
        \draw (1.2,0.8) node[anchor=west]{$\bar{\hat{\Phi}}$};
        \fill (0.3,0) circle (2pt);
        \fill (0.3,0.4) circle (2pt);
        \draw (0.3,0) node[anchor=north] {\tiny $\sqrt{\rho}$};
        \draw (0.3,0.4) node[anchor=south] {\tiny $\sqrt{\rho}$};
        \draw[fill=white] (0,0.4) circle (2pt);
        \draw[fill=white] (1.2,0) circle (2pt);
        \fill (0.0,0.4) circle (1.25pt);
        \fill (1.2,0) circle (1.25pt);
    \end{tikzpicture}  \right\|_2 \leq \left\| \begin{tikzpicture}[baseline={([yshift=-.5ex]current bounding box.center)}]
        \draw (0,0) -- (0.6,0);
        \draw (0.6,0) -- (0.6,0.4);
        \draw (0.9,0) -- (1.6,0);
        \draw (0.9,0) -- (0.9,0.4);
        \draw (0,-0.8) -- (0,0);
        \draw (1.6,-0.8) -- (1.6,0);
        \draw (0,-0.8) -- (1.6,-0.8);
        \draw (1.2,0) -- (1.2,0.4);
        \draw (0,0) -- (0,0.4);
        \fill (0,-0.4) ellipse (2pt and 14pt);
        \draw (0,-0.4) node[anchor=east]{$\hat{\Psi}$};
        \fill (0.6,0.2) circle (2pt);
        \draw (0.6,0.2) node[anchor=east] {\scriptsize $\sqrt{\rho}$};
        \draw[fill=white] (1.2,0) circle (2pt);
        \fill (1.2,0) circle (1.25pt);
    \end{tikzpicture} \ \right\|_2 \cdot \left\| \ \begin{tikzpicture}[baseline={([yshift=-.5ex]current bounding box.center)}]
        \draw (1.2,0) -- (1.2,-0.4);
        \draw (0,0) -- (0,-0.4);
        \draw (-0.4,-0.4) -- (0.6,-0.4);
        \draw (0.9,-0.4) -- (1.2,-0.4);
        \draw (0.9,-0.4) -- (0.9,0);
        \draw (0.6,-0.4) -- (0.6,0);
        \draw (-0.4,-0.4) -- (-0.4,-1.2);
        \draw (1.2,-0.4) -- (1.2,-1.2);
        \draw (-0.4,-1.2) -- (1.2,-1.2);
        \fill (1.2,-0.8) ellipse (2pt and 14pt);
        \draw (1.2,-0.8) node[anchor=west]{$\hat{\Phi}$};
        \fill (0.6,-0.2) circle (2pt);
        \draw (0.6,-0.2) node[anchor=east] {\scriptsize $\sqrt{\rho}$};
        \draw[fill=white] (0,-0.4) circle (2pt);
        \fill (0.0,-0.4) circle (1.25pt);
    \end{tikzpicture} \right\|_2.
\end{equation}
Combining these inequalities and setting 
\begin{equation}
    a(p) \coloneq \|\rho^{-1/2} \otimes \rho^{-1/2}\|_2 \cdot \|\mathcal{T}^p - \mathcal{T}_\infty\| = \Tr(\rho^{-1})\cdot \|\mathcal{T}^p-\mathcal{T}_{\infty}\|,
\end{equation}
we can write
\begin{equation}
\label{fnw_subresult1}
     \left\vert \begin{tikzpicture}[baseline={([yshift=-.5ex]current bounding box.center)}]
        \draw (0,0) -- (1.6,0);
        \draw (0,-0.8) -- (0,0);
        \draw (1.6,-0.8) -- (1.6,0);
        \draw (0,-0.8) -- (1.6,-0.8);
        \draw (0.6,0) -- (0.6,0.4);
        \draw (1.2,0) -- (1.2,0.4);
        \draw (0,0) -- (0,0.4);
        \fill (0,-0.4) ellipse (2pt and 14pt);
        \draw (0,-0.4) node[anchor=east]{$\hat{\Psi}$};
        \draw (-0.4,0.4) -- (1.2,0.4);
        \draw (-0.4,0.4) -- (-0.4,1.2);
        \draw (1.2,0.4) -- (1.2,1.2);
        \draw (-0.4,1.2) -- (1.2,1.2);
        \fill (1.2,0.8) ellipse (2pt and 14pt);
        \draw (1.2,0.8) node[anchor=west]{$\bar{\hat{\Phi}}$};
        \draw[fill=white] (0.6,0) circle (2pt);
        \draw[fill=white] (0.6,0.4) circle (2pt);
        \draw[fill=white] (0,0.4) circle (2pt);
        \draw[fill=white] (1.2,0) circle (2pt);
        \fill (0.6,0) circle (0.75pt);
        \fill (0.6,0.4) circle (0.75pt);
        \fill (0.0,0.4) circle (1.25pt);
        \fill (1.2,0) circle (1.25pt);
    \end{tikzpicture} - \begin{tikzpicture}[baseline={([yshift=-.5ex]current bounding box.center)}]
        \draw (0,0) -- (0.5,0);
        \draw (0.5,0) -- (0.5,0.4);
        \draw (0.5,0.2) node[anchor=east]{$\rho$};
        \draw (0.7,0) -- (0.7,0.4);
        \draw (0.7,0) -- (1.6,0);
        \draw (0,-0.8) -- (0,0);
        \draw (1.6,-0.8) -- (1.6,0);
        \draw (0,-0.8) -- (1.6,-0.8);
        \draw (1.2,0) -- (1.2,0.4);
        \draw (0,0) -- (0,0.4);
        \fill (0,-0.4) ellipse (2pt and 14pt);
        \draw (0,-0.4) node[anchor=east]{$\hat{\Psi}$};
        \draw (-0.4,0.4) -- (0.5,0.4);
        \draw (0.7,0.4) -- (1.2,0.4);
        \draw (-0.4,0.4) -- (-0.4,1.2);
        \draw (1.2,0.4) -- (1.2,1.2);
        \draw (-0.4,1.2) -- (1.2,1.2);
        \fill (1.2,0.8) ellipse (2pt and 14pt);
        \draw (1.2,0.8) node[anchor=west]{$\bar{\hat{\Phi}}$};
        \draw[fill=white] (1.2,0) circle (2pt);
        \draw[fill=white] (0,0.4) circle (2pt);
        \fill (0.5,0.2) circle (2pt);
        \fill (1.2,0) circle (1.25pt);
        \fill (0.0,0.4) circle (1.25pt);
    \end{tikzpicture} \right\vert \leq a(p) \cdot \left\| \begin{tikzpicture}[baseline={([yshift=-.5ex]current bounding box.center)}]
        \draw (0,0) -- (0.6,0);
        \draw (0.6,0) -- (0.6,0.4);
        \draw (0.9,0) -- (1.6,0);
        \draw (0.9,0) -- (0.9,0.4);
        \draw (0,-0.8) -- (0,0);
        \draw (1.6,-0.8) -- (1.6,0);
        \draw (0,-0.8) -- (1.6,-0.8);
        \draw (1.2,0) -- (1.2,0.4);
        \draw (0,0) -- (0,0.4);
        \fill (0,-0.4) ellipse (2pt and 14pt);
        \draw (0,-0.4) node[anchor=east]{$\hat{\Psi}$};
        \fill (0.6,0.2) circle (2pt);
        \draw (0.6,0.2) node[anchor=east] {\scriptsize $\sqrt{\rho}$};
        \draw[fill=white] (1.2,0) circle (2pt);
        \fill (1.2,0) circle (1.25pt);
    \end{tikzpicture} \ \right\|_2 \cdot \left\| \ \begin{tikzpicture}[baseline={([yshift=-.5ex]current bounding box.center)}]
        \draw (1.2,0) -- (1.2,-0.4);
        \draw (0,0) -- (0,-0.4);
        \draw (-0.4,-0.4) -- (0.6,-0.4);
        \draw (0.9,-0.4) -- (1.2,-0.4);
        \draw (0.9,-0.4) -- (0.9,0);
        \draw (0.6,-0.4) -- (0.6,0);
        \draw (-0.4,-0.4) -- (-0.4,-1.2);
        \draw (1.2,-0.4) -- (1.2,-1.2);
        \draw (-0.4,-1.2) -- (1.2,-1.2);
        \fill (1.2,-0.8) ellipse (2pt and 14pt);
        \draw (1.2,-0.8) node[anchor=west]{$\hat{\Phi}$};
        \fill (0.6,-0.2) circle (2pt);
        \draw (0.6,-0.2) node[anchor=east] {\scriptsize $\sqrt{\rho}$};
        \draw[fill=white] (0,-0.4) circle (2pt);
        \fill (0.0,-0.4) circle (1.25pt);
    \end{tikzpicture} \right\|_2.
\end{equation}

Applying the same approximation procedure to $\braket{\Phi}{\Phi}$ (the calculation for $\braket{\Psi}{\Psi}$ is analogous) then yields
\begin{equation}
    \left\vert \ \begin{tikzpicture}[baseline={([yshift=-.5ex]current bounding box.center)}]
        \draw (0,0) -- (1.6,0);
        \draw (0,-0.8) -- (0,0);
        \draw (1.6,-0.8) -- (1.6,0);
        \draw (0,-0.8) -- (1.6,-0.8);
        \draw (0.4,0) -- (0.4,0.4);
        \draw (1,0) -- (1,0.4);
        \draw (1.6,0) -- (1.6,0.4);
        \fill (1.6,-0.4) ellipse (2pt and 14pt);
        \fill (1.6,0.8) ellipse (2pt and 14pt);
        \draw (1.6,-0.4) node[anchor=west]{$\hat{\Phi}$};
        \draw (1.6,0.8) node[anchor=west]{$\bar{\hat{\Phi}}$};
        \draw (0,0.4) -- (1.6,0.4);
        \draw (0,0.4) -- (0,1.2);
        \draw (0,1.2) -- (1.6,1.2);
        \draw (1.6,0.4) -- (1.6,1.2);
        \draw[fill=white] (0.4,0) circle (2pt);
        \draw[fill=white] (0.4,0.4) circle (2pt);
        \draw[fill=white] (1,0) circle (2pt);
        \draw[fill=white] (1,0.4) circle (2pt);
        \fill (0.4,0) circle (1.25pt);
        \fill (0.4,0.4) circle (1.25pt);
        \fill (1,0) circle (0.75pt);
        \fill (1,0.4) circle (0.75pt);
    \end{tikzpicture} - \begin{tikzpicture}[baseline={([yshift=-.5ex]current bounding box.center)}]
        \draw (0,0) -- (1,0);
        \draw (1.3,0) -- (1.6,0);
        \draw (1,0) -- (1,0.4);
        \draw (1.3,0) -- (1.3,0.4);
        \draw (0,-0.8) -- (0,0);
        \draw (1.6,-0.8) -- (1.6,0);
        \draw (0,-0.8) -- (1.6,-0.8);
        \draw (0.4,0) -- (0.4,0.4);
        \draw (1.6,0) -- (1.6,0.4);
        \fill (1.6,-0.4) ellipse (2pt and 14pt);
        \fill (1.6,0.8) ellipse (2pt and 14pt);
        \draw (1.6,-0.4) node[anchor=west]{$\hat{\Phi}$};
        \draw (1.6,0.8) node[anchor=west]{$\bar{\hat{\Phi}}$};
        \draw (0,0.4) -- (1,0.4);
        \draw (1.3,0.4) -- (1.6,0.4);
        \draw (0,0.4) -- (0,1.2);
        \draw (0,1.2) -- (1.6,1.2);
        \draw (1.6,0.4) -- (1.6,1.2);
        \fill (1,0.2) circle (2pt);
        \draw (1,0.2) node [anchor=east]{$\rho$};
        \draw[fill=white] (0.4,0) circle (2pt);
        \draw[fill=white] (0.4,0.4) circle (2pt);
        \fill (0.4,0) circle (1.25pt);
        \fill (0.4,0.4) circle (1.25pt);
    \end{tikzpicture}  \right\vert \leq a(p) \cdot \left\| \ \begin{tikzpicture}[baseline={([yshift=-.5ex]current bounding box.center)}]
        \draw (1.2,0) -- (1.2,-0.4);
        \draw (0,0) -- (0,-0.4);
        \draw (-0.4,-0.4) -- (0.6,-0.4);
        \draw (0.9,-0.4) -- (1.2,-0.4);
        \draw (0.9,-0.4) -- (0.9,0);
        \draw (0.6,-0.4) -- (0.6,0);
        \draw (-0.4,-0.4) -- (-0.4,-1.2);
        \draw (1.2,-0.4) -- (1.2,-1.2);
        \draw (-0.4,-1.2) -- (1.2,-1.2);
        \fill (1.2,-0.8) ellipse (2pt and 14pt);
        \draw (1.2,-0.8) node[anchor=west]{$\hat{\Phi}$};
        \fill (0.6,-0.2) circle (2pt);
        \draw (0.6,-0.2) node[anchor=east] {\scriptsize $\sqrt{\rho}$};
        \draw[fill=white] (0,-0.4) circle (2pt);
        \fill (0.0,-0.4) circle (1.25pt);
    \end{tikzpicture} \right\|^2_2 = a(p) \cdot \begin{tikzpicture}[baseline={([yshift=-.5ex]current bounding box.center)}]
        \draw (0,0) -- (1,0);
        \draw (1.3,0) -- (1.6,0);
        \draw (1,0) -- (1,0.4);
        \draw (1.3,0) -- (1.3,0.4);
        \draw (0,-0.8) -- (0,0);
        \draw (1.6,-0.8) -- (1.6,0);
        \draw (0,-0.8) -- (1.6,-0.8);
        \draw (0.4,0) -- (0.4,0.4);
        \draw (1.6,0) -- (1.6,0.4);
        \fill (1.6,-0.4) ellipse (2pt and 14pt);
        \fill (1.6,0.8) ellipse (2pt and 14pt);
        \draw (1.6,-0.4) node[anchor=west]{$\hat{\Phi}$};
        \draw (1.6,0.8) node[anchor=west]{$\bar{\hat{\Phi}}$};
        \draw (0,0.4) -- (1,0.4);
        \draw (1.3,0.4) -- (1.6,0.4);
        \draw (0,0.4) -- (0,1.2);
        \draw (0,1.2) -- (1.6,1.2);
        \draw (1.6,0.4) -- (1.6,1.2);
        \fill (1,0.2) circle (2pt);
        \draw (1,0.2) node [anchor=east]{$\rho$};
        \draw[fill=white] (0.4,0) circle (2pt);
        \draw[fill=white] (0.4,0.4) circle (2pt);
        \fill (0.4,0) circle (1.25pt);
        \fill (0.4,0.4) circle (1.25pt);
    \end{tikzpicture}.
\end{equation}
As a consequence, we can bound the 2-norms in \cref{fnw_subresult1} by the vector norms of $\ket{\Psi}$ and $\ket{\Phi}$ as
\begin{equation}
\label{fnw_subresult2}
\begin{aligned}
    \left\| \ \begin{tikzpicture}[baseline={([yshift=-.5ex]current bounding box.center)}]
        \draw (1.2,0) -- (1.2,-0.4);
        \draw (0,0) -- (0,-0.4);
        \draw (-0.4,-0.4) -- (0.6,-0.4);
        \draw (0.9,-0.4) -- (1.2,-0.4);
        \draw (0.9,-0.4) -- (0.9,0);
        \draw (0.6,-0.4) -- (0.6,0);
        \draw (-0.4,-0.4) -- (-0.4,-1.2);
        \draw (1.2,-0.4) -- (1.2,-1.2);
        \draw (-0.4,-1.2) -- (1.2,-1.2);
        \fill (1.2,-0.8) ellipse (2pt and 14pt);
        \draw (1.2,-0.8) node[anchor=west]{$\hat{\Phi}$};
        \fill (0.6,-0.2) circle (2pt);
        \draw (0.6,-0.2) node[anchor=east] {\scriptsize $\sqrt{\rho}$};
        \draw[fill=white] (0,-0.4) circle (2pt);
        \fill (0.0,-0.4) circle (1.25pt);
    \end{tikzpicture} \right\|_2 & \leq \frac{1}{\sqrt{1-a(p)}} \sqrt{\begin{tikzpicture}[baseline={([yshift=-.5ex]current bounding box.center)}]
        \draw (0,0) -- (1.6,0);
        \draw (0,-0.8) -- (0,0);
        \draw (1.6,-0.8) -- (1.6,0);
        \draw (0,-0.8) -- (1.6,-0.8);
        \draw (0.4,0) -- (0.4,0.4);
        \draw (1,0) -- (1,0.4);
        \draw (1.6,0) -- (1.6,0.4);
        \fill (1.6,-0.4) ellipse (2pt and 14pt);
        \fill (1.6,0.8) ellipse (2pt and 14pt);
        \draw (1.6,-0.4) node[anchor=west]{$\hat{\Phi}$};
        \draw (1.6,0.8) node[anchor=west]{$\bar{\hat{\Phi}}$};
        \draw (0,0.4) -- (1.6,0.4);
        \draw (0,0.4) -- (0,1.2);
        \draw (0,1.2) -- (1.6,1.2);
        \draw (1.6,0.4) -- (1.6,1.2);
        \draw[fill=white] (0.4,0) circle (2pt);
        \draw[fill=white] (0.4,0.4) circle (2pt);
        \draw[fill=white] (1,0) circle (2pt);
        \draw[fill=white] (1,0.4) circle (2pt);
        \fill (0.4,0) circle (1.25pt);
        \fill (0.4,0.4) circle (1.25pt);
        \fill (1,0) circle (0.75pt);
        \fill (1,0.4) circle (0.75pt);
    \end{tikzpicture}} = \frac{\|\ket{\Phi}\|}{\sqrt{1-a(p)}}, \\
    \left\| \begin{tikzpicture}[baseline={([yshift=-.5ex]current bounding box.center)}]
        \draw (0,0) -- (0.6,0);
        \draw (0.6,0) -- (0.6,0.4);
        \draw (0.9,0) -- (1.6,0);
        \draw (0.9,0) -- (0.9,0.4);
        \draw (0,-0.8) -- (0,0);
        \draw (1.6,-0.8) -- (1.6,0);
        \draw (0,-0.8) -- (1.6,-0.8);
        \draw (1.2,0) -- (1.2,0.4);
        \draw (0,0) -- (0,0.4);
        \fill (0,-0.4) ellipse (2pt and 14pt);
        \draw (0,-0.4) node[anchor=east]{$\hat{\Psi}$};
        \fill (0.6,0.2) circle (2pt);
        \draw (0.6,0.2) node[anchor=east] {\scriptsize $\sqrt{\rho}$};
        \draw[fill=white] (1.2,0) circle (2pt);
        \fill (1.2,0) circle (1.25pt);
    \end{tikzpicture} \ \right\|_2 & \leq \frac{\|\ket{\Psi}\|}{\sqrt{1-a(p)}}.
\end{aligned}
\end{equation}
Now, substituting this bound into \cref{fnw_subresult0,fnw_subresult1} gives
\begin{equation}
    |\braket{\Phi}{\Psi}| \leq \frac{a(p)}{1-a(p)} \|\Phi\|\cdot\|\Psi\| + \left\vert \begin{tikzpicture}[baseline={([yshift=-.5ex]current bounding box.center)}]
        \draw (0,0) -- (0.5,0);
        \draw (0.5,0) -- (0.5,0.4);
        \fill (0.5,0.2) circle (2pt);
        \draw (0.5,0.2) node[anchor=east]{$\rho$};
        \draw (0.7,0) -- (0.7,0.4);
        \draw (0.7,0) -- (1.6,0);
        \draw (0,-0.8) -- (0,0);
        \draw (1.6,-0.8) -- (1.6,0);
        \draw (0,-0.8) -- (1.6,-0.8);
        \draw (1.2,0) -- (1.2,0.4);
        \draw (0,0) -- (0,0.4);
        \fill (0,-0.4) ellipse (2pt and 14pt);
        \draw (0,-0.4) node[anchor=east]{$\hat{\Psi}$};
        \draw (-0.4,0.4) -- (0.5,0.4);
        \draw (0.7,0.4) -- (1.2,0.4);
        \draw (-0.4,0.4) -- (-0.4,1.2);
        \draw (1.2,0.4) -- (1.2,1.2);
        \draw (-0.4,1.2) -- (1.2,1.2);
        \fill (1.2,0.8) ellipse (2pt and 14pt);
        \draw (1.2,0.8) node[anchor=west]{$\bar{\hat{\Phi}}$};
        \draw[fill=white] (0,0.4) circle (2pt);
        \draw[fill=white] (1.2,0) circle (2pt);
        \fill (0.0,0.4) circle (1.25pt);
        \fill (1.2,0) circle (1.25pt);
    \end{tikzpicture} \right\vert.
\end{equation}

Now we define the $D \times D$ matrices $\Delta_\Psi$ and $\Delta_\Phi$ such that their $\rho$-scalar product is exactly the diagram on the right-hand side:
\begin{equation}
    \Delta_{\Psi} = \begin{tikzpicture}[baseline={([yshift=-.5ex]current bounding box.center)}]
        \draw (0,0) -- (0.5,0);
        \draw (0.5,0) -- (0.5,0.4);
        \fill (0.5,0.2) circle(2pt);
        \draw (0.5,0.2) node[anchor=east]{\small $\rho$};
        \draw (0.5,0.4) -- (-0.4,0.4);
        \draw (-0.4,0.4) -- (-0.4,0.8);
        \fill (-0.4,0.6) circle(2pt);
        \draw (-0.4,0.6) node[anchor=east]{\small $\rho^{-1}$};
        \draw (-0.4,0.8) -- (0.8,0.8);
        \draw (0,-0.8) -- (0,0.4);
        \draw (0,-0.8) -- (0.8,-0.8);
        \fill (0,-0.4) ellipse (2pt and 14pt);
        \draw (0,-0.4) node[anchor=east]{$\hat{\Psi}$};
        \draw[fill=white] (0,0.4) circle (2pt);
        \fill (0,0.4) circle(1.25pt);
    \end{tikzpicture} \quad , \qquad \Delta_{\Phi} = \begin{tikzpicture}[baseline={([yshift=-.5ex]current bounding box.center)}]
        \draw (0,0) -- (-0.5,0);
        \draw (-0.5,0) -- (-0.5,0.4);
        \draw (-0.5,0.4) -- (0.4,0.4);
        \draw (0.4,0.4) -- (0.4,0.8);
        \draw (0.4,0.8) -- (-0.8,0.8);
        \draw (0,-0.8) -- (0,0.4);
        \draw (0,-0.8) -- (-0.8,-0.8);
        \fill (0,-0.4) ellipse (2pt and 14pt);
        \draw (0,-0.4) node[anchor=east]{$\hat{\Phi}$};
        \draw[fill=white] (0,0.4) circle (2pt);
        \fill (0,0.4) circle(1.25pt);
    \end{tikzpicture}.
\end{equation}
For any $Z \in \mathcal{M}_D$ we have
\begin{equation}
    |\langle Z, \Delta_\Psi \rangle_\rho | = \left\vert \begin{tikzpicture}[baseline={([yshift=-.5ex]current bounding box.center)}]
        \draw (0,0) -- (0.5,0);
        \draw (0.5,0) -- (0.5,0.4);
        \fill (0.5,0.2) circle (2pt);
        \draw (0.5,0.2) node[anchor=east]{$\rho$};
        \draw (0.7,0) -- (0.7,0.4);
        \draw (0.7,0) -- (1.6,0);
        \draw (0,-0.8) -- (0,0);
        \draw (1.6,-0.8) -- (1.6,0);
        \draw (0,-0.8) -- (1.6,-0.8);
        \draw (0,0) -- (0,0.4);
        \fill (0,-0.4) ellipse (2pt and 14pt);
        \draw (0,-0.4) node[anchor=east]{$\hat{\Psi}$};
        \draw (-0.4,0.4) -- (0.5,0.4);
        \draw (0.7,0.4) -- (1.2,0.4);
        \draw (-0.4,0.4) -- (-0.4,1.2);
        \draw (1.2,0.4) -- (1.2,1.2);
        \draw (-0.4,1.2) -- (1.2,1.2);
        \fill (0.4,1.2) circle (2pt);
        \draw (0.4,1.2) node[anchor=south]{$\bar{Z}$};
        \draw[fill=white] (0,0.4) circle (2pt);
        \fill (0.0,0.4) circle (1.25pt);
    \end{tikzpicture} \right\vert.
\end{equation}

The vectors $\ket{\Psi}$ and $\ket{\Phi}$ are orthogonal to $\ket{\Psi_{3p+2q}(A,Z)}$ for any $Z \in \mathcal{M}_D$ (see \cref{orth}), thus we can use the same approximation strategy as before and write
\begin{equation}
    |\langle Z, \Delta_{\Psi}\rangle_{\rho}| = \left\vert \begin{tikzpicture}[baseline={([yshift=-.5ex]current bounding box.center)}]
        \draw (0,0) -- (1.6,0);
        \draw (1,0) -- (1,0.4);
        \draw (0.5,0) -- (0.5,0.4);
        \draw (0,-0.8) -- (0,0);
        \draw (1.6,-0.8) -- (1.6,0);
        \draw (0,-0.8) -- (1.6,-0.8);
        \draw (0,0) -- (0,0.4);
        \fill (0,-0.4) ellipse (2pt and 14pt);
        \draw (0,-0.4) node[anchor=east]{$\hat{\Psi}$};
        \draw (-0.4,0.4) -- (1.2,0.4);
        \draw (-0.4,0.4) -- (-0.4,1.2);
        \draw (1.2,0.4) -- (1.2,1.2);
        \draw (-0.4,1.2) -- (1.2,1.2);
        \fill (0.4,1.2) circle (2pt);
        \draw (0.4,1.2) node[anchor=south]{$\bar{Z}$};
        \draw[fill=white] (0,0.4) circle(2pt);
        \draw[fill=white] (0.5,0.4) circle(2pt);
        \draw[fill=white] (1,0.4) circle(2pt);
        \draw[fill=white] (0.5,0) circle(2pt);
        \draw[fill=white] (1,0) circle(2pt);
        \fill (0.0,0.4) circle (1.25pt);
        \fill (0.5,0) circle (0.75pt);
        \fill (0.5,0.4) circle (0.75pt);
        \fill (1,0) circle (1.25pt);
        \fill (1,0.4) circle (1.25pt);
    \end{tikzpicture} - \begin{tikzpicture}[baseline={([yshift=-.5ex]current bounding box.center)}]
        \draw (0,0) -- (0.5,0);
        \draw (0.5,0) -- (0.5,0.4);
        \fill (0.5,0.2) circle (2pt);
        \draw (0.5,0.2) node[anchor=east]{$\rho$};
        \draw (0.7,0) -- (0.7,0.4);
        \draw (0.7,0) -- (1.6,0);
        \draw (0,-0.8) -- (0,0);
        \draw (1.6,-0.8) -- (1.6,0);
        \draw (0,-0.8) -- (1.6,-0.8);
        \draw (0,0) -- (0,0.4);
        \fill (0,-0.4) ellipse (2pt and 14pt);
        \draw (0,-0.4) node[anchor=east]{$\hat{\Psi}$};
        \draw (-0.4,0.4) -- (0.5,0.4);
        \draw (0.7,0.4) -- (1.2,0.4);
        \draw (-0.4,0.4) -- (-0.4,1.2);
        \draw (1.2,0.4) -- (1.2,1.2);
        \draw (-0.4,1.2) -- (1.2,1.2);
        \fill (0.4,1.2) circle (2pt);
        \draw (0.4,1.2) node[anchor=south]{$\bar{Z}$};
        \draw[fill=white] (0,0.4) circle (2pt);
        \fill (0.0,0.4) circle (1.25pt);
    \end{tikzpicture} \right\vert \leq a(p) \left\| \begin{tikzpicture}[baseline={([yshift=-.5ex]current bounding box.center)}]
        \draw (0,0) -- (0.6,0);
        \draw (0.6,0) -- (0.6,0.4);
        \draw (0.9,0) -- (1.6,0);
        \draw (0.9,0) -- (0.9,0.4);
        \draw (0,-0.8) -- (0,0);
        \draw (1.6,-0.8) -- (1.6,0);
        \draw (0,-0.8) -- (1.6,-0.8);
        \draw (1.2,0) -- (1.2,0.4);
        \draw (0,0) -- (0,0.4);
        \fill (0,-0.4) ellipse (2pt and 14pt);
        \draw (0,-0.4) node[anchor=east]{$\hat{\Psi}$};
        \fill (0.6,0.2) circle (2pt);
        \draw (0.6,0.2) node[anchor=east] 
        {\scriptsize $\sqrt{\rho}$};
        \draw[fill=white] (1.2,0) circle (2pt);
        \fill (1.2,0) circle (1.25pt);
    \end{tikzpicture} \ \right\|_2 \cdot \left\| \ \begin{tikzpicture}[baseline={([yshift=-.5ex]current bounding box.center)}]
        \draw (0,0) -- (0.7,0);
        \draw (0.7,0) -- (0.7,0.4);
        \draw (0.9,0) -- (1.6,0);
        \draw (0.9,0) -- (0.9,0.4);
        \draw (0,-0.8) -- (0,0);
        \draw (1.6,-0.8) -- (1.6,0);
        \draw (0,-0.8) -- (1.6,-0.8);
        \draw (1.25,0) -- (1.25,0.4);
        \draw (0.2,0) -- (0.2,0.4);
        \fill (0.7,0.2) circle (2pt);
        \draw (0.7,0.2) node[anchor=east] {\tiny $\sqrt{\rho}$};
        \fill (0.8,-0.8) circle (2pt);
        \draw (0.8,-0.8) node[anchor=south] {$Z$};
        \draw[fill=white] (0.2,0) circle (2pt);
        \draw[fill=white] (1.25,0) circle (2pt);
        \fill (0.2,0) circle (1.25pt);
        \fill (1.25,0) circle (1.25pt);
    \end{tikzpicture} \right\|_2.
\end{equation}
The first norm on the right-hand side is upper bounded by $\|\ket{\Psi}\|/\sqrt{1-a(p)}$ (cf. \cref{fnw_subresult2}), and the second norm is $\|Z\|_\rho$, since we have assumed that the MPS is in the left canonical form. By setting $Z = \Delta_\Psi$ we have
\begin{equation}
    \|\Delta_\Psi\|_\rho \leq \frac{a(p)}{\sqrt{1-a(p)}}\|\ket{\Psi}\|.
\end{equation}

Applying the same steps to $|\langle \Delta_\Phi , Z \rangle_\rho|$ results analogously in $\|\Delta_\Phi\|_\rho \leq \frac{a(p)}{\sqrt{1-a(p)}}\|\ket{\Phi}\|$. The Cauchy-Schwarz inequality then gives
\begin{equation}
    |\langle \Delta_\Psi, \Delta_\Phi\rangle_\rho| \leq \|\Delta_\Psi\|_\rho \cdot \|\Delta_\Phi\|_\rho \leq \frac{a(p)^2}{1-a(p)}\|\ket{\Psi}\|\cdot\|\ket{\Phi}\|.
\end{equation}
By conclusion, we have
\begin{equation}
    \gamma_{p,q} = \sup_{\substack{\ket{\Psi} \in \mathcal{R} \setminus \{0\} \\ \ket{\Phi} \in \mathcal{Q} \setminus \{0\}}} \frac{|\braket{\Phi}{\Psi}|}{\|\ket{\Phi}\| \cdot \|\ket{\Psi}\|} \leq a(p)\frac{1+a(p)}{1-a(p)}.
\end{equation}
\hspace{\fill}$\square$

\subsection{Block-normal MPS and the result of Nachtergaele}\label{sec:nachtergaele}

In the case of block-normal MPS, Nachtergaele~\cite{nachtergaele_spectral_1996} derived an upper bound on $\gamma_{p,q}$, which we state here without proof. Let $\alpha=1,\dots,k$ label the blocks. Define
$a_\alpha(p):=\Tr(\rho_\alpha^{-1}) \cdot\|\mathcal{T}_\alpha^p -
(\mathcal{T}_\alpha)_{\infty}\|$, where $\mathcal T_\alpha$ and $\rho_\alpha$
denote the transfer matrix and its normalized fixed point restricted to block
$\alpha$, respectively.  Let $p\ge \ell_{\mathrm{max}} =
\max_\alpha\ell_\alpha$ (cf.\ \cref{subsec:parenthams}). Define the overlap
matrix 
\begin{equation}
\label{overlapmatrix}
    (o_p)_{\alpha \beta} := (1-\delta_{\alpha \beta}) \sqrt{\max \operatorname{spec}(P_p^{(\alpha)}P_p^{(\beta)})}\ ,
\end{equation}
where $P^{(\alpha)}$ is the projector onto the space spanned by the MPS block $\alpha$. Note that $(o_p)_{\alpha\beta}\to0$, cf.~\cref{footnote:orth}, and thus $\|o_p\|\to 0$. 

\begin{theorem}\label{thm:martingale}
[Nachtergaele~\cite{nachtergaele_spectral_1996}]
    For a block-normal MPS, we have the following bound for all $q \ge 0$ and $p\ge \ell_{\mathrm{max}}$ large enough such that $\|o_p\|<1/2$ and $a_\alpha(p) <1\ \forall \alpha$:
    \begin{equation}
        \label{martingale}
    \gamma_{p,q} \leq \frac{4\|o_p\|(1-\|o_p\|)}{(1-2\|o_p\|)^2} + \sum_{\alpha=1}^{k} a_{\alpha}(p) \frac{1+a_\alpha(p)}{1-a_\alpha(p)}\ .
    \end{equation}
\end{theorem}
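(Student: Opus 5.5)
The plan is to reduce the block-normal case to the normal one via Jordan's lemma, in the same way as the proof of \cref{thm:FNW92}, and then control the cross-block overlaps separately. By \cref{appendix:exact-gamma-details}, $\gamma_{p,q}$ is the cosine of the smallest nonzero principal angle between $\operatorname{Im}(\Pi_{2p+q}\otimes\mathbb{1})\ominus\operatorname{Im}\Pi_{3p+2q}$ and $\operatorname{Im}(\mathbb{1}\otimes\Pi_{2p+q})\ominus\operatorname{Im}\Pi_{3p+2q}$. Equivalently,
\[
\gamma_{p,q}=\sup\frac{|\braket{\Phi}{\Psi}|}{\|\Phi\|\,\|\Psi\|},
\]
where $\ket{\Phi}\in\mathcal Q$ and $\ket{\Psi}\in\mathcal R$ are defined as in the FNW proof.

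We decompose each such vector according to the blocks, $\ket{\Phi}=\sum_\alpha\ket{\Phi_\alpha}$ and $\ket{\Psi}=\sum_\alpha\ket{\Psi_\alpha}$. Here $\ket{\Phi_\alpha}$ is built from $A_\alpha$ on the left $2p+q$ sites and an arbitrary $\hat\Phi_\alpha$ on the remaining $p+q$ sites; $\ket{\Psi_\alpha}$ is defined analogously. Since $\mathcal S_{2p+q}=\bigoplus_\alpha\mathcal S^{(\alpha)}_{2p+q}$, this decomposition exists. It is not orthogonal, though, and its failure to be orthogonal is controlled by the matrix $o$.

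\textbf{Step 1 (near-orthogonality of the block decomposition).} Fix the central $p$-block. The components $\ket{\Phi_\alpha}$ for different $\alpha$ restrict on that block to $\mathcal S^{(\alpha)}_p$ tensored with an arbitrary state elsewhere. Their Gram matrix is therefore bounded below by $\mathbb{1}-o_p$ entrywise, so
\[
(1-\|o_p\|)\sum_\alpha\|\Phi_\alpha\|^2\le\|\Phi\|^2\le(1+\|o_p\|)\sum_\alpha\|\Phi_\alpha\|^2 .
\]
The same bound holds for $\Psi$.

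\textbf{Step 2 (orthogonality to the global ground space, blockwise).} The condition $\Pi_{3p+2q}\ket{\Phi}=0$ does not directly give $\Pi^{(\alpha)}_{3p+2q}\ket{\Phi_\alpha}=0$. We fix this by projecting $\Phi_\alpha$ off $\mathcal S^{(\alpha)}_{3p+2q}$. The correction has norm controlled by the cross overlaps of $\ket{\Phi_\beta}$, $\beta\neq\alpha$, with $\mathcal S^{(\alpha)}$, i.e.\ again by $o_p$. After this correction, each pair $(\Phi_\alpha,\Psi_\alpha)$ satisfies the hypotheses of the normal-MPS estimate. The proof of \cref{thm:FNW92} then gives
\[
|\braket{\Phi_\alpha}{\Psi_\alpha}|\le a_\alpha(p)\frac{1+a_\alpha(p)}{1-a_\alpha(p)}\,\|\Phi_\alpha\|\,\|\Psi_\alpha\| .
\]
This step uses $p\ge\ell_\alpha$ and $a_\alpha(p)<1$.

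\textbf{Step 3 (assembly).} We expand
\[
\braket{\Phi}{\Psi}=\sum_\alpha\braket{\Phi_\alpha}{\Psi_\alpha}+\sum_{\alpha\ne\beta}\braket{\Phi_\alpha}{\Psi_\beta}.
\]
The off-diagonal terms are bounded by $(o_p)_{\alpha\beta}$ times the norms, since they overlap on the central $p$-block in different sectors. Summing them with Cauchy--Schwarz gives at most $\|o_p\|\sqrt{\sum\|\Phi_\alpha\|^2\sum\|\Psi_\alpha\|^2}$. We then renormalize using Step 1 and add the error from the Step 2 correction, which contributes further factors involving $\|o_p\|$. Collecting terms should give a bound of the form $4\|o_p\|(1-\|o_p\|)/(1-2\|o_p\|)^2$ plus $\sum_\alpha a_\alpha(p)\frac{1+a_\alpha}{1-a_\alpha}$. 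For the diagonal part, we bound $\max_\alpha\le\sum_\alpha$ after normalization, which is crude but sufficient.

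\textbf{Main obstacle.} I expect Step 2 to be the hardest part. The blockwise orthogonality must be achieved with a correction whose size is controlled only by $\|o_p\|$, which is what produces the denominators $(1-2\|o_p\|)^2$; this is where the hypothesis $\|o_p\|<1/2$ enters. My first approach would be to treat $\sum_\alpha P^{(\alpha)}$ as a perturbation of an orthogonal projector. I would bound its inverse on the ground space by a Neumann series, $\|(\mathbb{1}-O)^{-1}\|\le 1/(1-\|o_p\|)$, applied twice (once for each of $\Phi$ and $\Psi$). The numerator $4\|o_p\|(1-\|o_p\|)$ should then come from collecting the first-order cross terms.
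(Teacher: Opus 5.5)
The paper states this theorem without proof and defers to Nachtergaele, so there is no in-paper argument to compare against. Judged on its own, your decomposition strategy is sound, and Steps 1--3 are correct as far as they go. (Step~1 really uses $|\braket{\Phi_\alpha}{\Phi_\beta}|\le(o_p)_{\alpha\beta}\|\Phi_\alpha\|\|\Phi_\beta\|$, not an entrywise lower bound on the Gram matrix.)

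Step~2 is in fact easy, not the main obstacle. Since $P^{(\alpha)}_{3p+2q}\ket{\Phi}=0$ and $\mathcal S^{(\alpha)}_{3p+2q}\subseteq\mathcal H_{p+q}\otimes\mathcal S^{(\alpha)}_p\otimes\mathcal H_{p+q}$, the correction $c^\Phi_\alpha:=P^{(\alpha)}_{3p+2q}\ket{\Phi_\alpha}=-\sum_{\beta\neq\alpha}P^{(\alpha)}_{3p+2q}\ket{\Phi_\beta}$ satisfies $\|c^\Phi_\alpha\|\le\sum_{\beta\neq\alpha}(o_p)_{\alpha\beta}\|\Phi_\beta\|$. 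The corrected vector $\ket{\Phi'_\alpha}:=\ket{\Phi_\alpha}-c^\Phi_\alpha$ stays in $\mathcal S^{(\alpha)}_{2p+q}\otimes\mathcal H_{p+q}$. Because $c^\Phi_\alpha,c^\Psi_\alpha\in\mathcal S^{(\alpha)}_{3p+2q}$ are orthogonal to $\ket{\Phi'_\alpha}$ and $\ket{\Psi'_\alpha}$, you get exactly $\braket{\Phi_\alpha}{\Psi_\alpha}=\braket{\Phi'_\alpha}{\Psi'_\alpha}+\braket{c^\Phi_\alpha}{c^\Psi_\alpha}$. The correction therefore costs only $O(\|o_p\|^2)$, and no Neumann series is needed. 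Write $x:=\|o_p\|$, $g_\alpha:=a_\alpha(p)\frac{1+a_\alpha(p)}{1-a_\alpha(p)}$, and $S_\Phi^2:=\sum_\alpha\|\Phi_\alpha\|^2$ (similarly $S_\Psi$). Your steps then give
\[
|\braket{\Phi}{\Psi}|\le\bigl(\max_\alpha g_\alpha+x^2+x\bigr)S_\Phi S_\Psi\le\frac{\max_\alpha g_\alpha+x+x^2}{1-x}\,\|\Phi\|\,\|\Psi\|,
\]
which only requires $x<1$.

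The gap is in the final assembly. The bound above is not the stated one, and ``collecting terms'' will not turn it into it. No denominator $(1-2x)^2$ ever arises, so your account of where $\|o_p\|<1/2$ enters is off. More importantly, the Step~1 renormalization multiplies the FNW contribution by $1/(1-x)$, and replacing $\max_\alpha$ by $\sum_\alpha$ does not remove that factor. For example, $\max_\alpha g_\alpha/(1-x)$ exceeds $\sum_\alpha g_\alpha$ when all but one $g_\alpha$ vanish, as happens for blocks with $D_\alpha=1$.

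The missing observation is that $\gamma_{p,q}\le 1$ always, since it is a cosine. So if $\sum_\alpha g_\alpha\ge 1$ there is nothing to prove, and you may assume $\sum_\alpha g_\alpha<1$. Then
\[
\frac{\max_\alpha g_\alpha+x+x^2}{1-x}\le\sum_\alpha g_\alpha+\frac{2x+x^2}{1-x}\le\sum_\alpha g_\alpha+\frac{4x(1-x)}{(1-2x)^2}\qquad(0\le x<\tfrac12).
\]
The last step reduces to $2-x-4x^3\ge 0$, which holds on $[0,\tfrac12)$. With this comparison added, your argument proves the theorem, and whenever the stated bound is nontrivial yours is at least as strong. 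The particular constant $\frac{4x(1-x)}{(1-2x)^2}=(1-2x)^{-2}-1$ comes from Nachtergaele's own estimates, not from the structure of your argument.
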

Yet again, this bound implies that $\gamma_{p,q}\to0$ as $p\to\infty$, and thus that any parent Hamiltonian of a block-normal MPS is gapped.

In the same article, Nachtergaele proposed an alternative way of regrouping the Hamiltonian. In the following, we state the related spectral gap bound without proof.
\begin{theorem}[Nachtergaele~\cite{nachtergaele_spectral_1996}]\label{theorem:nachtergaele2}
    For parent Hamiltonians $H_N$ of block-normal MPS, let $r\in \mathbb{N}$ be large enough such that $\gamma_{r,0} < \tfrac{1}{\sqrt{r}}$. Then we have the following spectral gap bound for any $N \ge r$:
    \begin{equation}
        \Delta(H_N) \ge \frac{\Delta(H_r)}{r-1}(1-\sqrt{r} \cdot \gamma_{r,0})^2\ .
    \end{equation}
\end{theorem}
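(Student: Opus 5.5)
The plan is to go back to the variational characterization of $\Delta(H_N)$ and regroup the Hamiltonian in a second way. This uses Theorem~\ref{thm:martingale}'s ingredient $\gamma_{r,0}$ only through the pairwise geometry of overlapping ground-space projectors. Write $N$ sites covered by windows of length $r$ starting at every site, and let $G_i$ denote the projector onto the ground space (the MPS space $\mathcal S_r$) of the window $[i,i+r-1]$, with $G_i^\perp=\mathbb 1-G_i$.

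\textbf{Step 1: compare to projectors.} Every term $h_j$ is contained in at most $r-1$ (more precisely $r-n+1\le r-1$) windows of length $r$. Hence $\sum_i H_{r,i}\le (r-1)H_N$, where $H_{r,i}$ is the restriction of the Hamiltonian to window $i$. Since $H_{r,i}\ge \Delta(H_r)G_i^\perp$, this gives
\[
H_N\ \ge\ \frac{\Delta(H_r)}{r-1}\sum_i G_i^\perp .
\]
Both sides share the kernel $\mathcal S_N$ (for $N\ge r$, by the parent property). It therefore suffices to show that $K:=\sum_i G_i^\perp$ satisfies $K\ge (1-\sqrt r\,\gamma_{r,0})^2$ on $\mathcal S_N^\perp$.

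\textbf{Step 2: a martingale/telescoping estimate.} For $\psi\perp\mathcal S_N$ I would use the ordered product of ground-space projectors $E_k:=$ projector onto the ground space of the first $k$ windows, which is nested and decreasing. Write $\psi=\sum_k (E_{k-1}-E_k)\psi$, a sum of orthogonal martingale increments; the last term is $E_{\mathrm{last}}\psi=0$. Each increment $(E_{k-1}-E_k)\psi$ should be controlled by $G_k^\perp\psi$ plus a correction. That correction comes from the fact that $E_{k-1}$ and $G_k$ fail to commute, and only on the overlap of window $k$ with the previous windows. The pairwise estimate of Theorem~\ref{theorem:gamma-exact}, in the form $\|G_jG_k-G_{j\wedge k}\|\le\gamma$ for overlapping windows, then yields
\[
\|(E_{k-1}-E_k)\psi\|\le \|G_k^\perp\psi\|+\gamma_{r,0}\,\|(\text{increments of the } \le r \text{ windows overlapping } k)\|.
\]

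\textbf{Step 3: sum up.} Summing the squares and using Cauchy--Schwarz over the at most $r$ overlapping windows converts the correction into a factor $\sqrt r\,\gamma_{r,0}$. This gives $\|\psi\|\le \langle\psi,K\psi\rangle^{1/2}+\sqrt r\,\gamma_{r,0}\|\psi\|$. Rearranging yields $\langle\psi,K\psi\rangle\ge(1-\sqrt r\gamma_{r,0})^2\|\psi\|^2$, which combined with Step~1 proves the claim whenever $\gamma_{r,0}<1/\sqrt r$.

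\textbf{Main obstacle.} The main obstacle is Step 2. The quantity $\gamma_{r,0}$ as defined controls two windows of length $r$ overlapping by $r$ sites, i.e.\ the $p=r,q=0$ blocking. What Step~2 needs instead is windows shifted by a single site, each overlapping many neighbours. I would first try to reduce every such pair estimate to the $(r,0)$ geometry, using the MPS intersection property (as in the proof of Theorem~\ref{thm:FNW92}). That property says that $\mathcal S$ on a union equals the intersection of the $\mathcal S$'s, so the relevant angle is monotone in the overlap. Failing that, I would reorganize the windows into $r$ interleaved sublattices of disjoint-or-adjacent blocks to which Theorem~\ref{theorem:gap-from-gamma}'s nearest-neighbour analysis applies.
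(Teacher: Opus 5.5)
The paper gives no proof of this statement (it quotes it from Nachtergaele ``without proof''), so your proposal has to stand on its own. Steps~1 and~3 are the right skeleton, namely Nachtergaele's martingale argument, and the displayed inequality in Step~2 is the right target. However, the justification you give for that inequality does not work.

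\textbf{The mechanism in Step~2.} Write $g_k:=E_{k-1}-E_k$. Then $\|g_k\psi\|^2=\langle G_k^\perp\psi,g_k\psi\rangle+\langle\psi,G_kg_k\psi\rangle$, and the correction term is governed by the single number $\|G_kg_k\|=\|G_kE_{k-1}-E_k\|$. This is the angle between window $k$ and the \emph{whole} region $[1,k+r-2]$ on which $E_{k-1}$ projects. It is not a pairwise quantity between length-$r$ windows, and you give no argument that bounds on $\|G_jG_k-G_{j\wedge k}\|$ for single $j$ control it.

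Likewise, the restriction to $r$ increments does not come from pairwise overlaps. It comes from the identity $G_kg_k=(E_{k-r}-E_k)G_kg_k$, which follows from three facts: $[G_k,E_{k-r}]=0$ (disjoint supports), $E_{k-r}g_k=g_k$, and $E_kG_k=E_k$ (frustration freeness). With this identity, $|\langle\psi,G_kg_k\psi\rangle|\le\epsilon_r\,\|g_k\psi\|\,\bigl(\sum_{m=k-r+1}^{k}\|g_m\psi\|^2\bigr)^{1/2}$, where $\epsilon_r:=\sup_k\|G_kE_{k-1}-E_k\|$. Your Step~3 then closes and gives the bound with $\epsilon_r$ in place of $\gamma_{r,0}$, which is the form of Nachtergaele's martingale theorem.

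\textbf{The missing comparison $\epsilon_r\le\gamma_{r,0}$.} The two constants describe different geometries:
\begin{itemize}
\item $\gamma_{r,0}=\|(\Pi_{2r}\otimes\mathbb{1}_r)(\mathbb{1}_r\otimes\Pi_{2r})-\Pi_{3r}\|$ (cf.\ the proof of Theorem~\ref{thm:FNW92}) has overlap $r$, with both regions protruding by $r$ sites.
\item $\epsilon_r$ has overlap $r-1$, with protrusions of $k-1$ sites on one side and a single site on the other.
\end{itemize}
Your proposed bridge does not supply this comparison, for three reasons.
\begin{itemize}
\item The intersection property only identifies $\operatorname{Im}E_k=\operatorname{Im}E_{k-1}\cap\operatorname{Im}G_k$. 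It says nothing quantitative about the non-trivial principal angles.
\item There is no general monotonicity of these angles; it already fails under inclusion. For $U=\operatorname{span}(e_1)$, $W=\operatorname{span}(e_1,e_2)$ and $W'=\operatorname{span}(e_1+e_2)\subset W$, one has $\|P_UP_W-P_{U\cap W}\|=0$ but $\|P_UP_{W'}-P_{U\cap W'}\|=1/\sqrt2$.
\item The natural monotonicity suggested by Corollary~\ref{cor:gamma-to-zero} (larger overlap, smaller cosine) points the wrong way, since $\epsilon_r$ has the smaller overlap.
\end{itemize}
The sublattice fallback does not help either. Windows within one sublattice are disjoint, so each sublattice sum has kernel strictly larger than $\mathcal S_N$. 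Recombining the sublattices leads back to a Theorem~\ref{theorem:gap-from-gamma}-type estimate, not to the factor $(1-\sqrt r\,\gamma_{r,0})^2/(r-1)$.

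So you need either to state the result with $\epsilon_r$, or to add a genuine lemma bounding $\epsilon_r$. Possible routes are an exact $\Xi$-type computation for the two unequal regions, or an FNW/Nachtergaele-type estimate that is uniform in the protrusion lengths.
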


Finally, we note that the matrix $o_p$ given in \cref{overlapmatrix} can be efficiently calculated in a similar way as the optimal $\gamma_{p,q}$ in~\cref{theorem:gamma-exact}: the nonzero spectrum of $(P_p^{(\alpha)}\cdot P_p^{(\beta)})$ is computable by diagonalizing the following $D_\beta^2\times D_\beta^2$ matrix:
\begin{equation}
    \begin{tikzpicture}[baseline={([yshift=-.5ex]current bounding box.center)}]
        \draw (-0.75,0.25) -- (0.75,0.25);
        \draw (-0.75,0.25) -- (-0.75,0.75);
        \draw (-0.75,0.75) -- (0.75,0.75);
        \draw (0.75,0.75) -- (0.75,0.25);
        \draw (-0.5,0.25) -- (-0.5,0);
        \draw (0.5,0.25) -- (0.5,0);
        \draw (-0.5,0.75) -- (-0.5,1);
        \draw (0.5,0.75) -- (0.5,1);
        \draw (-0.5,1) -- (0.5,1);
        \draw (-0.5,3) -- (0.5,3);
        \draw (0,0.5) node[]{\small $(T_p^{-1})^{(\beta)}$};
        \draw (0,1) -- (0,1.25);
        \draw (0,3) -- (0,2.75);
        \draw (-0.5,3) -- (-0.5,3.25);
        \draw (0.5,3) -- (0.5,3.25);
        \draw[blue] (-0.75,1.75) -- (0.75,1.75);
        \draw[blue] (-0.75,1.75) -- (-0.75,2.25);
        \draw[blue] (-0.75,2.25) -- (0.75,2.25);
        \draw[blue] (0.75,2.25) -- (0.75,1.75);
        \draw[blue] (-0.5,1.75) -- (-0.5,1.5);
        \draw[blue] (0.5,1.75) -- (0.5,1.5);
        \draw[blue] (-0.5,2.25) -- (-0.5,2.5);
        \draw[blue] (0.5,2.25) -- (0.5,2.5);
        \draw[blue] (-0.5,2.5) -- (0.5,2.5);
        \draw[blue] (-0.5,1.5) -- (0.5,1.5);
        \draw[blue] (0,2) node[]{\small $(T_p^{-1})^{(\alpha)}$};
        \draw[blue] (0,2.5) -- (0,2.75);
        \draw[blue] (0,1.5) -- (0,1.25);
        \draw[fill=white] (0,1) circle (2pt);
        \draw[fill=white] (0,3) circle (2pt);
        \draw[blue,fill=white] (0,2.5) circle (2pt);
        \draw[blue,fill=white] (0,1.5) circle (2pt);
        \fill (0,1) circle (0.75pt);
        \fill (0,3) circle (0.75pt);
        \fill[blue] (0,2.5) circle (0.75pt);
        \fill[blue] (0,1.5) circle (0.75pt);
        \end{tikzpicture},
\end{equation}
as we have
\begin{equation}
\label{projoverlap}
    \operatorname{spec}\left( \begin{tikzpicture}[baseline={([yshift=-.5ex]current bounding box.center)}]
        \draw (-0.75,0.25) -- (0.75,0.25);
        \draw (-0.75,0.25) -- (-0.75,0.75);
        \draw (-0.75,0.75) -- (0.75,0.75);
        \draw (0.75,0.75) -- (0.75,0.25);
        \draw (-0.5,0.25) -- (-0.5,0);
        \draw (0.5,0.25) -- (0.5,0);
        \draw (-0.5,0.75) -- (-0.5,1);
        \draw (0.5,0.75) -- (0.5,1);
        \draw (-0.5,1) -- (0.5,1);
        \draw (-0.5,0) -- (0.5,0);
        \draw (0,0.5) node[]{\small $(T_p^{-1})^{(\beta)}$};
        \draw (0,1) -- (0,1.25);
        \draw (0,0) -- (0,-0.25);
        \draw[blue] (-0.75,1.75) -- (0.75,1.75);
        \draw[blue] (-0.75,1.75) -- (-0.75,2.25);
        \draw[blue] (-0.75,2.25) -- (0.75,2.25);
        \draw[blue] (0.75,2.25) -- (0.75,1.75);
        \draw[blue] (-0.5,1.75) -- (-0.5,1.5);
        \draw[blue] (0.5,1.75) -- (0.5,1.5);
        \draw[blue] (-0.5,2.25) -- (-0.5,2.5);
        \draw[blue] (0.5,2.25) -- (0.5,2.5);
        \draw[blue] (-0.5,2.5) -- (0.5,2.5);
        \draw[blue] (-0.5,1.5) -- (0.5,1.5);
        \draw[blue] (0,2) node[]{\small $(T_p^{-1})^{(\alpha)}$};
        \draw[blue] (0,2.5) -- (0,2.75);
        \draw[blue] (0,1.5) -- (0,1.25);
        \draw[fill=white] (0,0) circle (2pt);
        \draw[fill=white] (0,1) circle (2pt);
        \draw[blue,fill=white] (0,1.5) circle (2pt);
        \draw[blue,fill=white] (0,2.5) circle (2pt);
        \fill (0,1) circle (0.75pt);
        \fill (0,0) circle (0.75pt);
        \fill[blue] (0,1.5) circle (0.75pt);
        \fill[blue] (0,2.5) circle (0.75pt);
        \end{tikzpicture} \right) \setminus \{0\} = \operatorname{spec} \left( \begin{tikzpicture}[baseline={([yshift=-.5ex]current bounding box.center)}]
        \draw (-0.75,0.25) -- (0.75,0.25);
        \draw (-0.75,0.25) -- (-0.75,0.75);
        \draw (-0.75,0.75) -- (0.75,0.75);
        \draw (0.75,0.75) -- (0.75,0.25);
        \draw (-0.5,0.25) -- (-0.5,0);
        \draw (0.5,0.25) -- (0.5,0);
        \draw (-0.5,0.75) -- (-0.5,1);
        \draw (0.5,0.75) -- (0.5,1);
        \draw (-0.5,1) -- (0.5,1);
        \draw (-0.5,3) -- (0.5,3);
        \draw (0,0.5) node[]{\small $(T_p^{-1})^{(\beta)}$};
        \draw (0,1) -- (0,1.25);
        \draw (0,3) -- (0,2.75);
        \draw (-0.5,3) -- (-0.5,3.25);
        \draw (0.5,3) -- (0.5,3.25);
        \draw[blue] (-0.75,1.75) -- (0.75,1.75);
        \draw[blue] (-0.75,1.75) -- (-0.75,2.25);
        \draw[blue] (-0.75,2.25) -- (0.75,2.25);
        \draw[blue] (0.75,2.25) -- (0.75,1.75);
        \draw[blue] (-0.5,1.75) -- (-0.5,1.5);
        \draw[blue] (0.5,1.75) -- (0.5,1.5);
        \draw[blue] (-0.5,2.25) -- (-0.5,2.5);
        \draw[blue] (0.5,2.25) -- (0.5,2.5);
        \draw[blue] (-0.5,2.5) -- (0.5,2.5);
        \draw[blue] (-0.5,1.5) -- (0.5,1.5);
        \draw[blue] (0,2) node[]{\small $(T_p^{-1})^{(\alpha)}$};
        \draw[blue] (0,2.5) -- (0,2.75);
        \draw[blue] (0,1.5) -- (0,1.25);
        \draw[fill=white] (0,1) circle (2pt);
        \draw[fill=white] (0,3) circle (2pt);
        \draw[blue,fill=white] (0,2.5) circle (2pt);
        \draw[blue,fill=white] (0,1.5) circle (2pt);
        \fill (0,1) circle (0.75pt);
        \fill (0,3) circle (0.75pt);
        \fill[blue] (0,2.5) circle (0.75pt);
        \fill[blue] (0,1.5) circle (0.75pt);
        \end{tikzpicture}
        \right) \setminus \{0\}.
\end{equation}
The parts in blue represent the normal MPS and its corresponding inverse transfer matrix generated by the tensor labeled by $\alpha$.

\section{The Knabe and the Gosset-Mozgunov bounds}\label{appendix:knabe_and_gosset}
In this section, we will sketch the results of the Knabe~\cite{knabe_energy_1988} and Gosset-Mozgunov~\cite{gosset_local_2016} which find lower bounds on the spectral gap of frustration-free 2-local Hamiltonians. Collectively, these methods are often called as finite-size criteria, since they prove that an $N$-independent nonzero lower bound on the spectral gap exists if the spectral gap of the restriction of the Hamiltonian to a relatively small subsystem is above a certain threshold, and provide relations between the respective values of the gap.

We recall our notation for a frustration-free OBC or PBC Hamiltonian consisting of translations of the local term $h$:
\begin{equation}
    H_N = \sum_{i=1}^{N-1} h_i\ , \quad H_N^\circ = \sum_{i=1}^N h_i\ .
\end{equation}
Both the Knabe and Gosset-Mozgunov methods were introduced for Hamiltonians with projection-valued local terms. In order to apply them to arbitrary frustration-free Hamiltonians, one has to flatten the spectrum of the local terms: That is, we bound the local terms as $h \ge \Delta(h) \Pi^\perp$, where $\Pi^\perp$ denotes the projector onto the orthogonal complement of the ground state space of $h$. Then the projector-valued Hamiltonians
\begin{equation}
    K_N = \sum_{i=1}^{N-1} \Pi^\perp_i\ , \quad K_N^\circ = \sum_{i=1}^N \Pi^\perp_i
\end{equation}
have the same ground state spaces as $H_N$ and $H_N^\circ$, respectively, and $H_N \ge \Delta(h) K_N$. Due to frustration-freeness, this implies 
\begin{equation}
    \Delta(H_N) \geq \Delta(h) \Delta(K_N) .
\end{equation}
Finally, the two methods obtain a bound on $\Delta(K_N)$ by finding $\delta>0$ such that the inequality
\begin{equation}\label{eq:delta_proj}
    K_N^2- \delta K_N \ge 0
\end{equation}
holds: This inequality implies that $K_N$ does not have any eigenvalues in the interval $(0,\delta)$, thus $\delta \le \Delta(K_N)$.

We furthermore note that this bound holds for any frustration-free local Hamiltonian (with ground state energy being set to 0) independently of the boundary conditions.

\subsection{The Knabe bound}\label{appendix:knabe}
The Knabe bound introduced in Ref.~\cite{knabe_energy_1988} relates the spectral gap of the PBC Hamiltonian $K_N^\circ$ to the gap of the OBC Hamiltonian $K_r=\sum_{i=1}^{r-1} \Pi_i^\perp$ for some small $r\in \mathbb{N},\ 2<r<N$. The general idea is to expand both $K_r^2$ and $(K_N^\circ)^2$ and to sum up translations of $K_r^2$ along the entire ring to find that
\begin{equation}\label{eq:Knabe_subresult1}
    (K_N^\circ)^2 \ge \frac{1}{r-2}\left( \sum_{i=0}^{N-1} \tau_i(K_r^2) - K_N^\circ \right).
\end{equation}
Then, by using the inequalities $K_r^2-\Delta(K_r)K_r \ge 0$ and $H_N^\circ \ge \Delta(h) K_N^\circ$, the Knabe bound is obtained:
\begin{equation}\label{eq:Knabe_pbc}
     \Delta(H_N^\circ) \ge \Delta(h) \frac{r-1}{r-2} \left( \Delta(K_r) - \frac{1}{r-1} \right)
\end{equation}
for any $N>r$. As a consequence, $H_N^\circ$ is gapped if there exists an $r\in \mathbb{N},\ 2 <  r < N$ such that $\Delta(K_r) > \tfrac{1}{r-1}$.

The extension of Knabe's method to OBC Hamiltonians was introduced in Ref.~\cite{wouters_interrelations_2021}. The underlying idea is similar to the PBC case, but instead of translating the subsystem Hamiltonian $K_r$ along a ring, we translate between the two endpoints of the $N$-body open chain, and add the missing terms at the edges that constitute Hamiltonians on a support smaller than $r$ to arrive at the same prefactor as in~\cref{eq:Knabe_subresult1}:
\begin{equation}
    K_N^2 \ge \frac{1}{r-2} \left( \sum_{i=0}^{N-r} \tau_i(K_r^2) + \sum_{r' = 2}^{r-1} (K_{r'}^2 + \tau_{N-r'}(K_{r'}^2)) - K_N \right).
\end{equation}
Finally, we note that $K_N$ can be decomposed in the following way:
\begin{equation}
    K_N = \frac{1}{r-1} \left( \sum_{i=0}^{N-r} \tau_i(K_r) + \sum_{r' = 2}^{r-1}(K_{r'}+\tau_{N-r'}(K_{r'})) \right).
\end{equation}
By using the finite-size gap condition $K_{r'} - \Delta(K_{r'}) K_{r'} \ge 0$ for any $r' \in \{2,\dots,r\}$, we obtain the bound
\begin{equation}
     \Delta(H_N) \ge \Delta(h) \frac{r-1}{r-2} \left( \min_{r' = 2,\dots,r} \Delta(K_{r'}) - \frac{1}{r-1} \right).
\end{equation}

This expression implies that the condition for a frustration-free OBC Hamiltonian to be gapped is more strict, as the spectral gap of the subsystem Hamiltonian $K_{r'}$ has to be above the $\tfrac{1}{r-1}$ threshold for any $r' \in \{2,\dots,r\}$, not only $\Delta(K_r)$ as in the PBC case.

\subsection{The Gosset-Mozgunov bound}\label{appendix:gosset}
The Gosset-Mozgunov method introduced in Ref.~\cite{gosset_local_2016} can be viewed as a refinement of Kna\-be's method for frustration-free PBC Hamiltonians. Instead of working directly with the subsystem Hamiltonians $K_r$, they propose a Hamiltonian with a positive weight belonging to each projector-valued term:
\begin{equation}
    \hat{K}_r = \sum_{i=1}^{r-1} c_i \Pi_i^\perp,
\end{equation}
where $c_i = i(r-i)$. This choice of weights is optimal in the sense that it minimizes the threshold $G(r)$ in the bound of the form $\Delta(K_N^\circ) \ge F(r) (\Delta(K_r) - G(r))$, where $F(r)$ and $G(r)$ are obtained by similar decomposition strategies as the one used for Knabe's method. Moreover, this choice of $c_i$ provides the bound
\begin{equation}
    \Delta(H_N^{\circ}) \geq \frac{5\Delta(h)}{6}\left(\frac{r^2+r}{r^2-4}\right) \left( \Delta(K_r) - \frac{6}{r(r+1)} \right)
\end{equation}
for any $N > 2r$.

With the Gosset-Mozgunov method, the gap detection threshold is improved to $\tfrac{6}{r(r+1)}$ compared to the $\tfrac{1}{r-1}$ threshold obtained by Knabe's method. We note that for $r=3$, the bounds obtained by the Knabe and Gosset-Mozgunov methods are identical.

\section{Supplementary material for the SU(3) VBS models}
\subsection{Gell-Mann matrices}
\label{gm}
\begin{equation}
\begin{aligned}
    \Lambda^1 & = \begin{pmatrix}
        0 & 1 & 0 \\
        1 & 0 & 0 \\
        0 & 0 & 0 
    \end{pmatrix}, \quad \Lambda^2 = \begin{pmatrix}
        0 & -i & 0 \\
        i & 0 & 0 \\
        0 & 0 & 0 
    \end{pmatrix}, \quad \Lambda^3 = \begin{pmatrix}
        1 & 0 & 0 \\
        0 & -1 & 0 \\
        0 & 0 & 0 
    \end{pmatrix}, \\
    \Lambda^4 & = \begin{pmatrix}
        0 & 0 & 1 \\
        0 & 0 & 0 \\
        1 & 0 & 0 
    \end{pmatrix}, \quad 
    \Lambda^5 = \begin{pmatrix}
        0 & 0 & -i \\
        0 & 0 & 0 \\
        i & 0 & 0 
    \end{pmatrix}, \quad 
    \Lambda^6 = \begin{pmatrix}
        0 & 0 & 0 \\
        0 & 0 & 1 \\
        0 & 1 & 0 
    \end{pmatrix}, \quad \\
    \Lambda^7 & = \begin{pmatrix}
        0 & 0 & 0 \\
        0 & 0 & -i \\
        0 & i & 0 
    \end{pmatrix}, \quad 
    \Lambda^8 = \frac{1}{\sqrt{3}} \begin{pmatrix}
        1 & 0 & 0 \\
        0 & 1 & 0 \\
        0 & 0 & -2 
    \end{pmatrix}.
\end{aligned}
\end{equation}
\subsection{The partial isometry $W$}
\label{W}
We denote the basis of $\mathbb{C}^3$ by $\{\ket{r}, \ket{b}, \ket{g}\}$ and the basis of $\mathbb{C}^{10}$ by $\{\ket{I_3, Y}\}_{I_3,Y}$ according to its weight diagram \cite{cornwell_group_1997}.

The partial isometry $W$ exchanges the basis vectors between $\mathbb{C}^3 \otimes \mathbb{C}^3 \otimes \mathbb{C}^3$ and $\mathbb{C}^{10}$ with the symmetry considerations of representation $\mathbf{10}$.
\begin{equation}
    \begin{aligned}
        W =& \ket{-\frac{3}{2},1}\bra{rrr} + \frac{1}{\sqrt{3}} \ket{-\frac{1}{2},1} (\bra{rrb} + \bra{rbr} + \bra{brr}) + \frac{1}{\sqrt{3}} \ket{\frac{1}{2}, 1} (\bra{rbb}+\bra{brb}+\bra{bbr}) \\ & + \ket{\frac{3}{2},1}\bra{bbb} + \frac{1}{\sqrt{3}} \ket{-1,0}(\bra{rrg} + \bra{rgr} + \bra{grr}) \\ & + \frac{1}{\sqrt{6}}\ket{0,0}(\bra{rgb} + \bra{gbr} + \bra{brg} + \bra{rbg} + \bra{bgr} + \bra{grb}) \\ & + \frac{1}{\sqrt{3}} \ket{1,0}(\bra{bbg} + \bra{bgb} + \bra{gbb}) + \frac{1}{\sqrt{3}}\ket{-\frac{1}{2}, -1}(\bra{rgg} + \bra{grg} + \bra{ggr}) \\ & + \frac{1}{\sqrt{3}} \ket{\frac{1}{2},-1}(\bra{bgg} + \bra{gbg} + \bra{ggb}) + \ket{0,-2}\bra{ggg}
    \end{aligned}
\end{equation}

\end{document}